\RequirePackage{fix-cm}
\documentclass[11pt,runningheads]{svjour3}

\usepackage{amssymb,amsmath,amscd}
\usepackage{bbm}
\usepackage{a4wide}
\usepackage{graphicx}  
\usepackage{epsfig,psfrag}  

\newtheorem{prop}{Proposition}

\newtheorem{coro}{Corollary}

 \usepackage[numbers,sort&compress]{natbib}
\bibliographystyle{spmpscinat}

\renewcommand{\qed}{\hspace*{\fill}$\square$}
\newcommand{\ts}{\hspace{0.5pt}}

\newcommand{\one}{\mathbbm{1}}

\newcommand{\btimes}{\mbox{\huge \raisebox{-0.2ex}{$\times$}}}

\DeclareMathOperator{\sep}{sep}

\begin{document}

\title{ Single--crossover recombination 
in discrete time}
\author{Ute von Wangenheim \and Ellen Baake \and Michael Baake}



\institute{Ute von Wangenheim \at Technische Fakult\"at, Universit\"at Bielefeld, 
Box 100131, 33501 Bielefeld, Germany \\ \email{uvonwang@techfak.uni-bielefeld.de}
\and Ellen Baake \at Technische Fakult\"at, Universit\"at Bielefeld, 
Box 100131, 33501 Bielefeld, Germany \\ \email{ebaake@techfak.uni-bielefeld.de}
\and Michael Baake \at Fakult\"at f\"ur Mathematik, Universit\"at Bielefeld, 
Box 100131, 33501 Bielefeld, Germany \\ \email{mbaake@math.uni-bielefeld.de} }

\maketitle

\begin{abstract}
Modelling the process of recombination leads to a large coupled nonlinear
dynamical system. Here, we consider a particular case of recombination
in {\em discrete} time,
allowing only for {\em single crossovers}.
While the analogous dynamics in {\em continuous} time
admits a closed solution \citep{reco}, this no longer works 
for discrete time.
A more general model (i.e. without the restriction
to single crossovers) has been studied before
\citep{Bennett, Kevin1, Kevin2}
and was solved  algorithmically by means of
Haldane linearisation.
Using the special formalism introduced in \cite{reco},
we obtain further insight into the 
single-crossover dynamics
and the particular difficulties that arise in discrete time.
We then transform the equations
to a solvable system in a two-step procedure: linearisation followed by
diagonalisation. 
Still, the coefficients of the second step must
be determined in a recursive manner, but once this
is done for a given system, they allow for
an explicit solution valid for all times.
 \end{abstract} 

\keywords{population genetics \and recombination dynamics \and M\"{o}bius linearisation 
\and diagonalisation \and linkage disequilibria}

\subclass{92D10 \and 37N30 \and 06A07 \and 60J05 }

\section{Introduction}
The dynamics of the genetic composition of populations evolving
under recombination has been a long-standing subject of research.
The traditional models assume random mating, 
non-overlapping generations (meaning discrete time), and populations
so large that stochastic fluctuations may be neglected
and a law of large numbers (or infinite-population limit) applies.
Even this highly idealised setting leads to models that are
notoriously difficult to treat and solve, namely, to large
systems of coupled, nonlinear  difference equations. Here, the nonlinearity
is due to the random mating of the partner individuals involved in
sexual reproduction. 

Elucidating the underlying structure and finding solutions to
these equations has been a challenge
to theoretical population geneticists for nearly a century now.
The first studies go back to Jennings~\cite{Jennings} 
in 1917 and Robbins~\cite{Robbins} in 1918.
Building on \cite{Jennings}, Robbins solved the dynamics for two 
diallelic loci (to be called sites
from now on) and gave an explicit
formula for the (haplo)type frequencies as functions of time.
Geiringer~\cite{Geiringer}  investigated the general recombination model for an
arbitrary number of loci and for arbitrary
`recombination distributions' (meaning collections of probabilities
for the various partitionings of the sites that may occur during
recombination) in 1944.
She was the first to state the general form of the solution
of the recombination equation (as a convex combination of all
possible products of  certain marginal frequencies derived from
 the initial population)
and developed a method for
the recursive evaluation of the corresponding coefficients.
This simplifies the calculation of the type frequencies
at any time compared to the direct evaluation through
successive iteration of the dynamical system.
Even though she worked out the method for the
general case in principle, its evaluation becomes 
quite involved for more than three sites.

Her work was followed by  Bennett~\cite{Bennett} in 1954. He introduced a 
multilinear transformation of the type frequencies to certain functions that he
named  \emph{principal components}.
They correspond to linear combinations of certain correlation functions
that transform the dynamical system (exactly) into
a linear one.  The new variables decay independently and geometrically
for all times,
whence they decouple and diagonalise the dynamics.  They therefore 
provide an elegant 
solution in principle,  but the price to be paid
is that  the coefficients 
of the transformation must be constructed via recursions 
that involve the parameters of the recombination model.
Bennett worked this method out for up to six sites, but  did not give an 
explicit method for an arbitrary number of sites.
The approach was later completed within the  systematic framework of
genetic algebras, where it became known as \emph{Haldane linearisation},
compare \citep{Lyubich, HaleRingwood}.
But, in fact, Bennett's program may be completed outside this abstract
framework, as was shown by
Dawson~\citep{Kevin1, Kevin2}, who derived a general and explicit 
recursion for the coefficients
of the principal components. 
However, the proofs are somewhat technical and do not reveal
the underlying mathematical structure.
It is the aim of this paper to provide a more systematic,
but still elementary, approach 
that exploits the inherent (multi)linear and combinatorial structure 
of the problem --- at least for one particular, but biologically relevant,
special case, which will now be described.
Our special case  is
obtained by the restriction to single crossovers, which leads to what
we call \emph{single-crossover recombination} (SCR).
This is the extreme case of the biological phenomenon of 
\emph{interference}, and describes the situation where  
a crossover event completely inhibits any other crossover event
in the same generation, at least within the genomic region considered.
Surprisingly, the corresponding 
dynamics in \emph{continuous} time can be solved in closed form~\citep{MB,reco}.
Again, a crucial ingredient is a transformation to certain correlation
functions (or linkage disequilibria) that linearise and diagonalise the system.
Luckily, in this case,  
the corresponding coefficients are  independent
of the recombination parameters, and the transformation
is   available explicitly.

Motivated by this result, we now investigate 
the analogous single-crossover dynamics in discrete time.
The paper is organised as follows.
We first describe the discrete-time model and the general 
framework (Section~\ref{sec:prelim}) and then recapitulate the essentials 
of the \emph{continuous-time} model and its solution (Section~\ref{SCRconti}). Section~\ref{SCRdiscrete}
returns to discrete time. We first analyse explicitly the cases of two,
three, and four sites. For two and three sites,
the dynamics is analogous to that in continuous time (and, in
particular, available in closed form), but  differs from
then on. This is because  a certain linearity  present in  continuous
time is now lost. The transformations used in 
continuous time are therefore not sufficient to both linearise \emph{and}
diagonalise the discrete-time dynamics. They do, however, lead to
a \emph{linearisation}; this is worked out in Sections~\ref{sec:subsys}~and~\ref{sec:comm}.
The resulting linear system has a triangular structure that
can be diagonalised in a second step in a recursive way (Section~\ref{sec:diagonal}).
We summarise and discuss our results in Section~\ref{sec:discuss}.
An explicit example is worked out in the Appendix.

\section{Preliminaries and notation}\label{sec:prelim}

Let us briefly recall  the recombination model described
in \cite{reco} and the special notation introduced there, as the remainder
of this paper critically depends on it. 
A chromosome (of length $n+1$, say) is represented as a linear arrangement of
the $n+1$ \emph{sites} of the set $S=\lbrace 0, 1 , \ldots , n \rbrace$.
Sites are discrete positions on a chromosome that may be interpreted as
gene or nucleotide positions.
A set $X^{}_i$ collects the possible elements (such as alleles or nucleotides)
at site $i$. For convenience, we restrict ourselves to \emph{finite} sets $X^{}_i$
in this paper, though much of the theory can be extended to the case that
each $X^{}_i$ is a locally compact space, which can be of importance for applications 
in quantitative genetics.
A \emph{type} is now defined as a sequence 
$(x^{}_0, x^{}_1 ,\ldots ,x^{}_n)\in X^{}_0\times X^{}_1\times\cdots\times X^{}_n =:X$,
where $X$ denotes the (finite) \emph{type space}.

Recombination events take place at the so-called \emph{links} between neighbouring sites,
collected into the set $L=\lbrace{\frac{1}{2}, \frac{3}{2}, \ldots ,\frac{2n-1}{2}\rbrace}$,
where link $\alpha=\frac{2i+1}{2}$ is the link between sites $i$ and $i+1$.
Since we only consider single crossovers here, each individual event yields an exchange 
of the sites either before or after the respective link between the two types involved.
A recombination event at link $\frac{2i+1}{2}$ that involves
$x = (x^{}_0, \ldots, x^{}_n)$ and $y=(y^{}_0, \ldots, y^{}_n)$ thus results in the types 
$(x^{}_0, \ldots, x^{}_i, y^{}_{i+1}, \ldots, y^{}_n)$ 
and $(y^{}_0, \ldots, y^{}_i, x^{}_{i+1}, \ldots, x^{}_n)$, with both pairs  
considered as \emph{unordered}.

Although one is ultimately interested in the stochastic process 
defined by recombination acting on populations of finite size, 
compare~\cite{baakeherms} and references therein,
we restrict ourselves to the deterministic limit of
infinite population size here, also known as 
\emph{infinite population limit} (IPL). Consequently, we are
not looking at the individual dynamics, but at the induced dynamics on the
probability distribution on the type space $X$. Let $\mathcal{P}(X)$ denote
the convex space of all possible probability distributions on $X$. 
As $X$ is finite, a probability distribution can be written as a vector
$p=\left(p(x)\right)_{x\in X} $, where $p(x)$ denotes the relative
frequency of type $x$ in the population.

Let us look at the time evolution of the relative frequencies $p^{}_{t} (x)$ of types 
$x=(x^{}_0, \ldots, x^{}_n)$ when starting from  a known initial distribution $p^{}_{0}$ 
of the population at time $t=0$. 
In discrete time, it is given by the following collection of
\emph{recombination equations} for all $x \in X$:
\begin{equation}\label{eq:rekombigleichdis}
  \begin{split}
    p^{}_{t+1} ( x )&= 
     \sum_{\alpha\in L} \rho^{}_{\alpha} \ts
      p^{}_t (x^{}_0 , x^{}_1 , \ldots , x^{}_{\left\lfloor \alpha\right\rfloor}, * , * , \ldots , * ) \,
      p^{}_t ( * , * , \ldots , * , x^{}_{\left\lceil\alpha\right\rceil} ,
      x^{}_{\left\lceil\alpha\right\rceil+1} , \ldots ,x^{}_n) \\
      & \phantom{=}+ \Bigl( 1 - \sum_{\alpha\in L} \rho^{}_\alpha \Bigr) 
      p^{}_t ( x ) \ts , \quad\text{with $t\in \mathbb{N}_0$}\ts ,
  \end{split}
\end{equation}
where the coefficients $\rho^{}_{\alpha}$, $\alpha\in L$, are the
probabilities for a crossover at link $\alpha$. Consequently, we must
have $\rho^{}_{\alpha} \ge  0$ and $\sum_{\alpha\in L} \rho^{}_\alpha \leq 1$, 
where $\rho^{}_{\alpha} > 0$ is assumed from now on without loss of
generality (when $\rho^{}_{\alpha} = 0$, the set 
$X^{}_{\alpha - \frac{1}{2}} \times X^{}_{\alpha + \frac{1}{2}}$ can be considered
as a space for an effective site that comprises $i = \alpha - \frac{1}{2}$
and $i = \alpha + \frac{1}{2}$). When the $\rho^{}_{\alpha}$ do not sum to $1$, the remainder
is the probability that no crossover occurs, which is taken care of by
the last term in the equation. Moreover, $\lfloor\alpha\rfloor$ ($\lceil\alpha\rceil$) 
denotes the largest integer below (the smallest above) $\alpha$ and the star 
$*$ at site $i$ stands for $X^{}_i$, and thus indicates marginalisation over site $i$.

An important step to solve the large nonlinear coupled system of 
equations \eqref{eq:rekombigleichdis} lies in its reformulation in a more 
compact way with the help of certain recombination operators. To
construct them, we need the canonical projection operator 
$\pi^{}_{i} \! : X \longrightarrow X_{i}$, defined by
$x\mapsto \pi^{}_{i} (x) = x_i$ as usual.
Likewise, for any index set $J\subseteq S$, the projector $\pi^{}_J$ is defined as 
$\pi^{}_J \! : X \longrightarrow X^{}_J := \btimes_{i\in J}X^{}_i$.
We will frequently use  
\[
   \pi^{}_{<\alpha} :=   \pi^{}_{\lbrace0,\ldots,\left\lfloor \alpha\right\rfloor\rbrace} 
   \quad\text{and}\quad
   \pi^{}_{>\alpha} := \pi^{}_{\lbrace\left\lceil \alpha\right\rceil,\ldots,n\rbrace}\ts .
\]
These can be understood as \emph{cut-and-forget} operators since 
they `cut out'  the leading and the trailing segment of a type $x$,
respectively, and `forget'  about the rest.
The projectors induce  linear mappings from $\mathcal{P}(X)$ to $\mathcal{P} (X^{}_J) $
by $p \mapsto \pi^{}_{J\cdot}p := p\circ\pi^{-1}_{J}$, where $\pi^{-1}_{J}$ 
denotes the preimage under $\pi^{}_{J}$ and $\circ$ indicates composition of mappings.
The operation $.$ (not to be confused with a multiplication sign) is known 
as the pullback of $\pi^{}_{J}$ with respect to $p$.
 Consequently, $\pi^{}_{J\cdot}p$ is simply the marginal distribution of $p$ 
with respect to the sites of $J$. 

Now consider recombination at link $\alpha$, performed on the entire population.
Since the resulting population consists of randomly chosen leading segments relinked
with randomly chosen trailing segments, it may be described
through the (elementary) recombination operator
(or \emph{recombinator} for short) $R^{}_{\alpha} \! : \mathcal{P}(X) \longrightarrow 
\mathcal{P}(X)$, defined by $p\mapsto R^{}_{\alpha} (p)$ with
\begin{equation}\label{rekombinator}
    R^{}_{\alpha} (p) \, := \, (\pi^{}_{<\alpha\cdot}p) \otimes (\pi^{}_{>\alpha\cdot} p) \ts ,
\end{equation}
where $\otimes$ denotes the product measure and reflects the independent combination 
of both marginals $\pi^{}_{<\alpha\cdot}p$ and $\pi^{}_{>\alpha\cdot} p$.
Note that the recombinators are structural operators that  do not
depend on the recombination probabilities.

Before we rewrite the recombination equations in terms of these recombinators,
let us recall some of their elementary properties, see~\cite{reco} for proofs.
First of all, the elementary recombinators $R^{}_{\alpha}$ are idempotents 
and commute with one another on $\mathcal{P} (X)$. This permits the
consistent definition of \emph{composite recombinators}
\begin{equation} \label{eq:defcomposite}
    R^{}_{G} \, := \, \prod_{\alpha\in G} R^{}_{\alpha}
\end{equation}
for arbitrary subsets $G\subseteq L$.  In particular, one has $R^{}_{\varnothing} = \one$
and $R^{}_{ \{ \alpha\} } = R^{}_{\alpha}$.
\begin{prop}\label{recoprop}
    On $\mathcal{P} (X)$, the elementary
    recombinators are commuting idempotents.
For $\alpha\leq\beta$, they satisfy
\begin{equation}\label{elemprop1}
    \pi^{}_{<\alpha\cdot} \bigl( R^{}_{\beta} (p) \bigr) 
        = \pi^{}_{<\alpha\cdot} p   \quad \text{ and} \quad
       \pi^{}_{>\alpha\cdot}  \bigl( R^{}_{\beta} (p) \bigr)
        = (\pi^{}_{\{ \lceil\alpha\rceil,\ldots,\lfloor\beta\rfloor \}\cdot}p)\otimes (\pi^{}_{>\beta\cdot}p) \, ;
\end{equation}
likewise, for $\alpha\geq\beta$,
\begin{equation}\label{elemprop2}
 \pi^{}_{>\alpha\cdot}\bigl(R^{}_{\beta}(p)\bigr)  =\pi^{}_{>\alpha\cdot}p \quad\text{ and} \quad
 \pi^{}_{<\alpha\cdot} \bigl( R^{}_{\beta} (p) \bigr) 
  =(\pi^{}_{<\beta\cdot}p) \otimes (\pi^{}_{\{ \lceil\beta\rceil,\ldots,\lfloor\alpha\rfloor \}\cdot}p) \,.
\end{equation}

      Furthermore, the composite recombinators  satisfy
      \begin{equation}\label{composition}
        R^{}_{G} R^{}_{H} = R^{}_{G\cup H}
      \end{equation}
  for arbitrary $G, H\subseteq L$. \qed
\end{prop}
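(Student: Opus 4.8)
The plan is to establish the two marginalisation identities \eqref{elemprop1} and \eqref{elemprop2} first, and then to read off idempotency, commutativity, and the composition rule \eqref{composition} as consequences. The single structural fact that drives everything is that, by construction \eqref{rekombinator}, $R^{}_{\alpha}(p)$ is a \emph{product measure} whose two factors live on the complementary site blocks $\{0,\ldots,\lfloor\alpha\rfloor\}$ and $\{\lceil\alpha\rceil,\ldots,n\}$. Two elementary properties of marginals will be used throughout: the \emph{tower property}, that marginalising $p$ onto $J$ and then onto $J'\subseteq J$ equals marginalising $p$ onto $J'$ directly; and the \emph{factorisation of product marginals}, that if $\mu\otimes\nu$ is a product measure on $X^{}_{J_1}\times X^{}_{J_2}$ and $J=J'_1\sqcup J'_2$ with $J'_k\subseteq J_k$, then $\pi^{}_{J\cdot}(\mu\otimes\nu)=(\pi^{}_{J'_1\cdot}\mu)\otimes(\pi^{}_{J'_2\cdot}\nu)$.

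For idempotency I would set $q=R^{}_{\alpha}(p)$ and note that the leading block $\{0,\ldots,\lfloor\alpha\rfloor\}$ is carried exactly by the first factor of $q$ and the trailing block by the second; factorisation then gives $\pi^{}_{<\alpha\cdot}q=\pi^{}_{<\alpha\cdot}p$ and $\pi^{}_{>\alpha\cdot}q=\pi^{}_{>\alpha\cdot}p$, so that $R^{}_{\alpha}(q)=q$. For \eqref{elemprop1} with $\alpha\leq\beta$, writing $\alpha=\tfrac{2i+1}{2}$ and $\beta=\tfrac{2j+1}{2}$ with $i\leq j$, the set $\{0,\ldots,i\}$ sits entirely inside the first factor's block $\{0,\ldots,j\}$ of $R^{}_{\beta}(p)$, so the tower property yields the first identity. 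For the second, the block $\{i+1,\ldots,n\}$ straddles the link $\beta$: it meets the first factor in $\{i+1,\ldots,j\}$ and the second factor in $\{j+1,\ldots,n\}$, and factorisation produces the stated tensor product, the $\{j+1,\ldots,n\}$-marginal of the second factor being that factor itself (with the convention that an empty middle block, as when $\alpha=\beta$, contributes no factor). The identities \eqref{elemprop2} follow by the mirror-image argument with the two blocks interchanged.

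Commutativity is then a short computation rather than a separate difficulty: for $\alpha\leq\beta$, expanding $R^{}_{\alpha}\bigl(R^{}_{\beta}(p)\bigr)$ via \eqref{rekombinator} and feeding in \eqref{elemprop1} gives the three-fold product $(\pi^{}_{<\alpha\cdot}p)\otimes(\pi^{}_{\{i+1,\ldots,j\}\cdot}p)\otimes(\pi^{}_{>\beta\cdot}p)$, while expanding $R^{}_{\beta}\bigl(R^{}_{\alpha}(p)\bigr)$ and feeding in \eqref{elemprop2} produces exactly the same product; hence $R^{}_{\alpha}R^{}_{\beta}=R^{}_{\beta}R^{}_{\alpha}$. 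With idempotency and commutativity available, $R^{}_{G}=\prod_{\alpha\in G}R^{}_{\alpha}$ is well defined independently of the order of its factors, and \eqref{composition} follows by commuting the factors of $R^{}_{G}R^{}_{H}$ into a single product over $G\cup H$ and collapsing each repeated factor $R^{}_{\gamma}$, $\gamma\in G\cap H$, by idempotency.

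The main obstacle, such as it is, lies entirely in the bookkeeping of the floor/ceiling index blocks in \eqref{elemprop1}--\eqref{elemprop2}: one must verify that the projection block either nests inside a single factor of the product measure or splits cleanly across the link $\beta$ into two pieces respecting the product, with no coordinate left straddling the cut. Once this case distinction is made precise, each identity reduces to the tower and factorisation properties quoted above, and the remaining assertions are formal consequences.
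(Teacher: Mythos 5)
Your proof is correct and follows essentially the same route as the paper, which defers these proofs to \cite{reco}: there, too, the identities \eqref{elemprop1}--\eqref{elemprop2} are obtained from the consistency (tower) property of marginals together with the factorisation of marginals of product measures, and idempotency, commutativity and \eqref{composition} are then read off as formal consequences. The only point worth flagging is that your factorisation lemma implicitly uses the normalisation of the factor being integrated out whenever the projection block lies entirely inside one factor, which is precisely why the proposition is stated on $\mathcal{P}(X)$ rather than on arbitrary positive measures.
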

These properties can be understood intuitively as well:
\eqref{elemprop1} says that recombination at or after link $\alpha$ does not affect the
marginal frequencies at sites before $\alpha$, whereas the marginal frequencies at the
sites after $\alpha$ change into the product measure 
(and vice versa in~\eqref{elemprop2}).
Furthermore, repeated recombination at link $\alpha$ does not change the situation any further
(recombinators are idempotents) and the formation of the product measure with respect to
$\geq 2$ links does not depend on the order in which the links are affected.
As we shall see below, these properties of the recombinators are crucial for 
finding a solution of the SCR dynamics, both in continuous and in discrete time.

\section{SCR in continuous time}\label{SCRconti}
Let us briefly review the SCR dynamics in \emph{continuous} time, as its structure will be
needed below. Making use of the recombinators introduced above, the dynamics (in the IPL)
is described by a system of differential equations for the time evolution of the probability
distribution (or measure), starting from an initial condition $p^{}_{0}$ at $t=0$. It reads~\cite{reco}
\begin{equation} \label{rekombiglstet}
   \dot{p}^{}_{t} \ts =  \sum_{\alpha\in L}\widetilde{\rho}^{}_{\alpha} \ts
                            \bigl(R^{}_{\alpha}-\one\bigr) (p^{}_{t}) \ts ,
\end{equation}
where $\widetilde{\rho}^{}_{\alpha}$ is now the \emph{rate} for a crossover at link $\alpha$.
Though \eqref{rekombiglstet} describes a coupled system of nonlinear 
differential equations, the closed solution for its Cauchy (or initial value)
problem is available~\citep{MB,reco}:

\begin{theorem}  \label{thm:reco-cont}
   The solution of the recombination equation $\eqref{rekombiglstet}$ with initial 
   value $p^{}_0$ can be given in closed form as
\begin{equation} \label{solution}
   p^{}_{t} \ts =  \sum_{G\subseteq L}\widetilde{a}^{}_{G} (t) R^{}_{G} (p^{}_{0})\ts ,
\end{equation}
   with the coefficient functions
\begin{equation} \label{koefffunk}
    \widetilde{a}^{}_{G} (t) \, = \!
    \prod_{\alpha\in L\setminus G}\!\! \exp(-\widetilde{\rho}^{}_{\alpha}t)
    \prod_{\beta\in G}(1-\exp(-\widetilde{\rho}^{}_{\beta} t))\ts .
\end{equation}
   These are non-negative functions, which satisfy
   $ \sum_{G\subseteq L} \widetilde{a}^{}_{G} (t)=1$ for all $t\ge 0$.
   \qed
\end{theorem}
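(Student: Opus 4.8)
The plan is to verify that the proposed $p^{}_{t}$ solves the Cauchy problem for \eqref{rekombiglstet}, which by uniqueness of solutions to a (locally Lipschitz) ODE system suffices to establish that it \emph{is} the solution. Note first that the right-hand side of \eqref{rekombiglstet} is a polynomial (hence smooth) vector field in the finitely many coordinates $p^{}_t(x)$, so a unique solution to the initial value problem exists; I only need to exhibit one.

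Let me sketch the proof.

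I would first observe that the composite recombinators $R^{}_{G}$ form a commuting family of idempotents, so the key algebraic fact I will exploit is \eqref{composition}, namely $R^{}_{\alpha} R^{}_{G} = R^{}_{G\cup\{\alpha\}}$, together with $R^{}_{\alpha} R^{}_{G} = R^{}_{G}$ whenever $\alpha\in G$. These identities let me compute how each elementary recombinator acts on the ansatz term by term.

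The verification of the initial condition is immediate: at $t=0$ one has $\exp(-\widetilde{\rho}^{}_{\alpha}\cdot 0)=1$ and $1-\exp(-\widetilde{\rho}^{}_{\beta}\cdot 0)=0$, so the only surviving coefficient is $\widetilde{a}^{}_{\varnothing}(0)=1$, giving $p^{}_{0}=R^{}_{\varnothing}(p^{}_0)=\one(p^{}_0)=p^{}_0$ as required. Next I would differentiate the ansatz \eqref{solution} with respect to $t$, which, since the $R^{}_{G}(p^{}_0)$ are constant vectors, reduces to differentiating the scalar coefficient functions $\widetilde{a}^{}_{G}(t)$. The plan is then to compute $\sum_{\alpha\in L}\widetilde{\rho}^{}_{\alpha}(R^{}_{\alpha}-\one)(p^{}_t)$ by substituting the ansatz, applying the idempotent/composition identities above to each $R^{}_{\alpha}R^{}_{G}$, and regrouping the resulting sum over subsets of $L$. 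The goal is to show that the coefficient of each fixed $R^{}_{G}(p^{}_0)$ on the right-hand side equals $\dot{\widetilde{a}}^{}_{G}(t)$ on the left-hand side; since the derivative of a product factorises, I expect $\dot{\widetilde{a}}^{}_{G}(t)$ to split naturally into contributions indexed by which link $\alpha$ was ``switched'', matching the combinatorial bookkeeping produced by the recombinators.

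The main obstacle, and the step requiring genuine care rather than routine algebra, is the bookkeeping in this regrouping: when I apply $R^{}_{\alpha}$ to a term $R^{}_{G}(p^{}_0)$, the result is $R^{}_{G\cup\{\alpha\}}(p^{}_0)$, so a term originally indexed by $G$ contributes to the coefficient of $R^{}_{G\cup\{\alpha\}}(p^{}_0)$. Collecting, for a fixed subset $H\subseteq L$, all the contributions landing on $R^{}_{H}(p^{}_0)$ requires summing over the predecessor sets $G$ with $G\cup\{\alpha\}=H$ (for each $\alpha\in H$) as well as the diagonal $\alpha\in G=H$ (where $R^{}_{\alpha}R^{}_{H}=R^{}_{H}$) and the $-\one$ term; the resulting identity that these combine to $\dot{\widetilde{a}}^{}_{H}(t)$ is exactly a statement about the product structure of \eqref{koefffunk}, and I expect it to follow from the factorised form once the indices are tracked correctly. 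Finally, the non-negativity of the $\widetilde{a}^{}_{G}(t)$ is clear because each factor $\exp(-\widetilde{\rho}^{}_{\alpha}t)$ and $1-\exp(-\widetilde{\rho}^{}_{\beta}t)$ is non-negative for $t\ge 0$, and the normalisation $\sum_{G\subseteq L}\widetilde{a}^{}_{G}(t)=1$ follows by recognising the full sum as $\prod_{\alpha\in L}\bigl(\exp(-\widetilde{\rho}^{}_{\alpha}t)+(1-\exp(-\widetilde{\rho}^{}_{\alpha}t))\bigr)=1$, i.e.\ the binomial-type expansion of a product of terms each summing to one.
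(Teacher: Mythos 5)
There is a genuine gap, and it sits exactly at the step you dismiss as routine: ``applying the idempotent/composition identities above to each $R^{}_{\alpha}R^{}_{G}$'' term by term inside the sum. The recombinators are \emph{nonlinear} operators on $\mathcal{P}(X)$ --- $R^{}_{\alpha}(p)=(\pi^{}_{<\alpha\cdot}p)\otimes(\pi^{}_{>\alpha\cdot}p)$ is a product of two marginals of $p$ --- so for a general convex combination one has
\[
   R^{}_{\alpha}\Bigl(\sum_{G\subseteq L}\widetilde{a}^{}_{G}(t)\ts R^{}_{G}(p^{}_{0})\Bigr)
   \, = \sum_{G,H\subseteq L}\widetilde{a}^{}_{G}(t)\ts\widetilde{a}^{}_{H}(t)\,
   R^{}_{\alpha}\bigl(R^{}_{G^{}_{<\alpha}\cup H^{}_{>\alpha}}(p^{}_{0})\bigr)\ts ,
\]
a \emph{double} sum with cross terms, and \emph{not} $\sum_{G}\widetilde{a}^{}_{G}(t)\ts R^{}_{G\cup\{\alpha\}}(p^{}_{0})$. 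The identity $R^{}_{\alpha}R^{}_{G}=R^{}_{G\cup\{\alpha\}}$ only tells you how $R^{}_{\alpha}$ acts on each pure term $R^{}_{G}(p^{}_{0})$; it does not license distributing $R^{}_{\alpha}$ over the sum. The collapse of the double sum to the single sum you need is precisely the content of Theorem~\ref{thm:linear}, and it holds \emph{only because} the specific coefficients \eqref{koefffunk} factorise over partitions of $L$ ($\widetilde{a}^{(L)}_{G}=\widetilde{a}^{(L_1)}_{G_1}\widetilde{a}^{(L_2)}_{G_2}$) and are normalised on every sublink-set. Establishing this linearity along the ansatz is the actual heart of the proof in \cite{reco}; your ``main obstacle'' (tracking which predecessor sets $G$ feed into a given $R^{}_{H}(p^{}_{0})$) is the easy part and only becomes meaningful \emph{after} the linearity is secured. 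That the issue is real and not pedantic is demonstrated by this very paper: in discrete time the identically structured term-by-term computation is carried out as a ``wishful calculation'' in Section~\ref{SCRdiscrete} and is, in the authors' words, badly smashed by the nonlinearity --- the coefficients there lack the product structure, and the conclusion fails from four sites onwards.

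The rest of your outline is sound: the uniqueness reduction (polynomial, hence locally Lipschitz, vector field on finitely many coordinates), the check of the initial condition, the coefficient ODE matching (once linearity of $R^{}_{\alpha}$ along the solution is available, one indeed only needs that \eqref{koefffunk} satisfies $\dot{\widetilde{a}}^{}_{G}=-\bigl(\sum_{\alpha\in L\setminus G}\widetilde{\rho}^{}_{\alpha}\bigr)\widetilde{a}^{}_{G}+\sum_{\alpha\in G}\widetilde{\rho}^{}_{\alpha}\ts\widetilde{a}^{}_{G\setminus\{\alpha\}}$, which the product form does), and the non-negativity and normalisation via the telescoping product $\prod_{\alpha\in L}\bigl(e^{-\widetilde{\rho}^{}_{\alpha}t}+(1-e^{-\widetilde{\rho}^{}_{\alpha}t})\bigr)=1$. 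To repair the proof, you must insert a proof of Theorem~\ref{thm:linear} (or of its special case for a single $R^{}_{\alpha}$), which uses Proposition~\ref{recoprop} to split the marginals $\pi^{}_{<\alpha\cdot}$ and $\pi^{}_{>\alpha\cdot}$ of the ansatz and then exploits the factorisation and normalisation of the $\widetilde{a}^{}_{G}$ to resum the spurious indices.
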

The coefficient functions  can be interpreted probabilistically. Given an
individual sequence in the population, $\widetilde{a}^{}_{G} (t)$ is the 
probability that the set of links that have seen at least one crossover 
event until time $t$ is precisely the set $G$.
Note that the product structure of the $\widetilde{a}^{}_{G} (t)$
implies independence of links, a decisive feature of the single-crossover
dynamics in continuous time, as we shall see later on.
By~\eqref{solution}, $p^{}_t$ is always a convex combination of the probability 
measures $R^{}_G(p^{}_0)$ with $G\subseteq L$.
Consequently, given an initial condition $p_0^{}$, the entire dynamics takes 
place on the closed simplex (within $\mathcal{P}(X)$) that is given by 
${\rm conv}\{R_G^{}(p^{}_0) \mid G\subseteq L\}$,
where ${\rm conv}(A)$ denotes the convex hull of $A$.

It is surprising that a closed solution for the dynamics~\eqref{rekombiglstet} can 
be given explicitly, and this suggests the existence of an underlying linear structure~\cite{MB}, 
which is indeed the case and well known from similar equations, compare~\cite{Lyubich}. 
In the context of the formulation with recombinators, it can be stated as 
follows, compare~\cite{reco} for details.

\begin{theorem} \label{thm:linear}
    Let $\bigl\{c_{G'}^{(L')}(t) \mid \varnothing\subseteq G'\subseteq L'\subseteq L\bigr\}$ 
    be a family of non-negative functions with 
    $c_G^{(L)}(t) = c_{G_1}^{(L_1)}(t)\, c_{G_2}^{(L_2)}(t)$, valid for any 
    partition $L=L_1\ts\dot{\cup}\ts L_2$ of the set $L$ and all $t\geq 0$,
    where $G^{}_i := G\cap L^{}_i$.
    Assume further that these
    functions satisfy $\sum_{H\subseteq L'} c_{H}^{(L')} (t) = 1$ for any 
    $L'\subseteq L$ and $t\geq 0$. 
    If $v\in\mathcal{P}(X)$ and $H\subseteq L$, one has the identity
\[
     R^{}_{H} \Bigl(\sum_{G\subseteq L} c^{(L)}_{G} (t) R^{}_{G} (v) \Bigr) 
     \, =  \sum_{G\subseteq L} c^{(L)}_{G} (t) R^{}_{G\cup H} (v) \ts ,
\]     
     which is then satisfied for all $ t\ge 0$.   \qed
\end{theorem}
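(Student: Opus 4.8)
The first thing to flag is that the recombinators are \emph{not} linear: $R^{}_{H}$ forms product measures of marginals, so one cannot simply interchange $R^{}_{H}$ with the convex sum $\sum^{}_{G\subseteq L} c^{(L)}_{G}(t)\, R^{}_{G}(v)$. The whole point of the hypotheses is that the \emph{product} structure of the coefficients $c^{(L)}_{G}$ is exactly what is needed to survive this nonlinearity. My plan is therefore to exploit the block decomposition induced by $H$ together with the defining factorisation of the composite recombinator, and to reduce everything to the behaviour of a single block marginal; every step is pointwise in $t$, so the claim for all $t\ge 0$ is automatic.

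Concretely, the links in $H$ cut the site set $S$ into consecutive blocks $B^{}_0,\dots,B^{}_k$ (with $k=\card H$), and accordingly the link set splits as the disjoint union $L = H \,\dot{\cup}\, L^{}_0 \,\dot{\cup}\,\cdots\,\dot{\cup}\, L^{}_k$, where $L^{}_j$ collects the links interior to $B^{}_j$. Unfolding the composite recombinator via~\eqref{rekombinator} and Proposition~\ref{recoprop} gives $R^{}_{H}(w) = \bigotimes^{}_{j} \pi^{}_{B^{}_j\cdot} w$ for any $w$, the product of its block marginals; and since marginalisation is linear, I can push the sum through each factor, $\pi^{}_{B^{}_j\cdot} w = \sum^{}_{G\subseteq L} c^{(L)}_{G}(t)\, \pi^{}_{B^{}_j\cdot}\bigl(R^{}_{G}(v)\bigr)$. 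The key technical lemma, which I would extract from Proposition~\ref{recoprop}, is that recombination at any link \emph{outside} $L^{}_j$ leaves the $B^{}_j$-marginal untouched: applying~\eqref{elemprop1} to links at or to the right of $B^{}_j$ and~\eqref{elemprop2} to links at or to the left, and peeling them off one at a time (legitimate since the $R^{}_{\alpha}$ commute), yields $\pi^{}_{B^{}_j\cdot}\bigl(R^{}_{G}(v)\bigr) = \pi^{}_{B^{}_j\cdot}\bigl(R^{}_{G\cap L^{}_j}(v)\bigr)$. Grouping the sum over $G$ according to the value of $G\cap L^{}_j$ and invoking the factorisation $c^{(L)}_{G} = c^{(L^{}_j)}_{G\cap L^{}_j}\, c^{(L\setminus L^{}_j)}_{G\cap(L\setminus L^{}_j)}$ together with the normalisation $\sum^{}_{H'\subseteq L\setminus L^{}_j} c^{(L\setminus L^{}_j)}_{H'} = 1$ then collapses the inner sum, leaving $\pi^{}_{B^{}_j\cdot} w = \sum^{}_{G^{}_j\subseteq L^{}_j} c^{(L^{}_j)}_{G^{}_j}(t)\,\pi^{}_{B^{}_j\cdot}\bigl(R^{}_{G^{}_j}(v)\bigr)$.

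With each block marginal in this form, I would expand the tensor product $R^{}_{H}(w) = \bigotimes^{}_{j}\pi^{}_{B^{}_j\cdot} w$ by multilinearity into a sum over tuples $(G^{}_0,\dots,G^{}_k)$ with $G^{}_j\subseteq L^{}_j$, carrying weight $\prod^{}_{j} c^{(L^{}_j)}_{G^{}_j}(t)$ and measure $\bigotimes^{}_{j}\pi^{}_{B^{}_j\cdot}(R^{}_{G^{}_j}(v))$. Two identifications then finish the argument. First, $\bigotimes^{}_{j}\pi^{}_{B^{}_j\cdot}(R^{}_{G^{}_j}(v)) = R^{}_{H\cup G^{}_0\cup\cdots\cup G^{}_k}(v)$: by~\eqref{composition} this equals $R^{}_{\bigcup^{}_j G^{}_j}$ applied to the product measure $R^{}_{H}(v)$, and since each $G^{}_j$ consists of links interior to $B^{}_j$, this recombines only within the individual factors, reproducing exactly the tensor product of the internally recombined block marginals. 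Second, on the right-hand side I would reverse the same bookkeeping: writing $G = G^{*}\,\dot{\cup}\, G^{}_H$ with $G^{*} = G\cap(L\setminus H)$ and $G^{}_H = G\cap H$, one has $G\cup H = H\cup G^{*}$, and summing over $G^{}_H$ with product structure and normalisation replaces $\sum^{}_{G^{}_H\subseteq H} c^{(L)}_{G^{}_H\cup G^{*}}$ by $c^{(L\setminus H)}_{G^{*}}$, which factors as $\prod^{}_{j} c^{(L^{}_j)}_{G^{}_j}$ over the blocks. The two sides then agree term by term.

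I expect the only real obstacle to be the block-marginal lemma of the second paragraph, namely making precise that the $B^{}_j$-marginal of $R^{}_{G}(v)$ depends only on $G\cap L^{}_j$ and coincides with internal recombination of $\pi^{}_{B^{}_j\cdot} v$. Everything there is a careful but routine consequence of~\eqref{elemprop1}, \eqref{elemprop2} and~\eqref{composition}; the remaining steps are purely combinatorial bookkeeping of the multi-indices, made painless by the product hypothesis. (An alternative route is induction on $\card H$, peeling off a single link of $H$ at each stage, but the block decomposition seems the most transparent.)
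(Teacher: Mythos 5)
Your argument is correct, but there is no in-paper proof to compare it with: the paper states Theorem~\ref{thm:linear} with a terminal \qed{} and defers the details to \cite{reco}, since this is the continuous-time result being imported. Your route is essentially the canonical one underlying that reference and the machinery the paper develops later anyway: your ``block-marginal lemma'' --- that $\pi^{}_{B_j\cdot}\bigl(R^{}_G(v)\bigr)$ depends only on $G\cap L^{}_j$ and equals internal recombination of the block marginal --- is precisely the content of \eqref{elemprop3}, which the paper itself obtains in Section~\ref{sec:subsys} by exactly the repeated application of \eqref{elemprop1} and \eqref{elemprop2} you describe; likewise your identification $\bigotimes_j \pi^{}_{B_j\cdot}\bigl(R^{}_{G_j}(v)\bigr) = R^{}_{H\cup G_0\cup\cdots\cup G_k}(v)$ is a correct use of \eqref{composition} and idempotency, and the multilinearity bookkeeping is routine. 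The observation that everything is pointwise in $t$ is also right.

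One step should be made airtight. You invoke $c^{(L\setminus H)}_{G^*} = \prod_j c^{(L_j)}_{G_j}$, a factorisation over a partition of the \emph{proper subset} $L\setminus H$, whereas the hypothesis literally asserts the product rule only for partitions of $L$ itself. This is closable from the stated assumptions, but it needs a line. For any three-part partition $L = A\,\dot{\cup}\,B\,\dot{\cup}\,C$, combining the two-part rules for the splits $A\mid B\cup C$ and $B\mid A\cup C$ and summing over the $A$-part (with $\sum_{G_A\subseteq A} c^{(A)}_{G_A}=1$) gives $c^{(B\cup C)}_{G_B\cup G_C} = c^{(B)}_{G_B}\, f(G_C)$ with $f(G_C) := \sum_{G_A\subseteq A} c^{(A\cup C)}_{G_A\cup G_C}$; symmetrically, the split $C\mid A\cup B$ gives $c^{(B\cup C)}_{G_B\cup G_C} = c^{(C)}_{G_C}\, g(G_B)$ with $g(G_B) := \sum_{G_A\subseteq A} c^{(A\cup B)}_{G_A\cup G_B}$. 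Summing both identities over $G_B$ and using $\sum_{G_B} c^{(B)}_{G_B} = 1$ together with $\sum_{G_B} g(G_B) = \sum_{K\subseteq A\cup B} c^{(A\cup B)}_K = 1$ yields $f(G_C) = c^{(C)}_{G_C}$, hence $c^{(B\cup C)}_{G_B\cup G_C} = c^{(B)}_{G_B}\, c^{(C)}_{G_C}$. Thus the product rule propagates to all subsets of $L$ and, inductively, to the multi-part factorisation over $H, L_0, \ldots, L_k$ that both your left-hand and right-hand bookkeeping require (alternatively, one may simply read the hypothesis hereditarily, as is done for the explicit coefficients $\widetilde{a}^{}_G(t)$ in \cite{reco}, which factorise at every level). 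With that supplied, your proof is sound and complete.
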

Here, the upper index specifies the respective set of links. 
So far, Theorem~\ref{thm:linear} depends crucially on the product
structure of the functions $c_G^{(L)}(t)$, but we will show later how
this assumption can be relaxed.
In any case, the coefficient functions
$\widetilde{a}^{}_{G} (t)$ of \eqref{koefffunk} satisfy the conditions of 
Theorem~\ref{thm:linear}. The result then means that the recombinators 
act linearly along solutions~\eqref{solution} of the recombination equation~\eqref{rekombiglstet}.
Denoting $\varphi^{}_t$ as the flow of Eq.~\eqref{rekombiglstet},
Theorem~\ref{thm:linear} thus has the following consequence.

\begin{coro}\label{coro1}
   On $\mathcal{P}(X)$, the forward flow of $\eqref{rekombiglstet}$ commutes with all 
   recombinators, which means that $R^{}_G\circ \varphi^{}_t  = \varphi^{}_t \circ R^{}_G$ holds for all 
   $t\geq 0$ and all $G\subseteq L$.          \qed
\end{coro}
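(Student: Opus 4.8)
The plan is to deduce Corollary~\ref{coro1} directly from Theorem~\ref{thm:linear} applied to the specific coefficient functions $\widetilde{a}^{}_{G}(t)$ of Eq.~\eqref{koefffunk}. The flow $\varphi^{}_t$ is given explicitly by the closed solution~\eqref{solution}, namely $\varphi^{}_t(p^{}_0) = \sum_{G\subseteq L}\widetilde{a}^{}_{G}(t)\,R^{}_G(p^{}_0)$, so the statement $R^{}_H\circ\varphi^{}_t = \varphi^{}_t\circ R^{}_H$ becomes a concrete identity between two finite sums of recombinators applied to an arbitrary $v\in\mathcal{P}(X)$. First I would verify that the $\widetilde{a}^{}_{G}(t)$ actually meet the hypotheses of Theorem~\ref{thm:linear}: the product structure over a partition $L = L_1\,\dot\cup\,L_2$ follows because each $\widetilde{a}^{}_{G}(t)$ is itself a product over links of factors $\exp(-\widetilde{\rho}^{}_{\alpha}t)$ (for $\alpha\notin G$) and $1-\exp(-\widetilde{\rho}^{}_{\beta}t)$ (for $\beta\in G$), which splits cleanly into the $L_1$- and $L_2$-parts with $G_i = G\cap L_i$; and the normalisation $\sum_{H\subseteq L'}\widetilde{a}^{}_{H}(t) = 1$ is the telescoping product $\prod_{\alpha\in L'}\bigl(\exp(-\widetilde{\rho}^{}_{\alpha}t) + (1-\exp(-\widetilde{\rho}^{}_{\alpha}t))\bigr)$. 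This is recorded already in Theorem~\ref{thm:reco-cont}, so it amounts to citing that these conditions hold.

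The second and central step is to apply Theorem~\ref{thm:linear} with $c^{(L)}_G(t) = \widetilde{a}^{}_{G}(t)$ and with the chosen $H\subseteq L$, which yields
\[
   R^{}_H\Bigl(\sum_{G\subseteq L}\widetilde{a}^{}_{G}(t)\,R^{}_G(v)\Bigr)
   \, = \sum_{G\subseteq L}\widetilde{a}^{}_{G}(t)\,R^{}_{G\cup H}(v)\ts .
\]
The left-hand side is precisely $R^{}_H\bigl(\varphi^{}_t(v)\bigr)$, so it remains to identify the right-hand side with $\varphi^{}_t\bigl(R^{}_H(v)\bigr)$. For this I would expand $\varphi^{}_t\bigl(R^{}_H(v)\bigr) = \sum_{K\subseteq L}\widetilde{a}^{}_{K}(t)\,R^{}_K\bigl(R^{}_H(v)\bigr)$ and use the composition rule~\eqref{composition} of Proposition~\ref{recoprop} to write $R^{}_K R^{}_H = R^{}_{K\cup H}$.

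The remaining obstacle, and the step deserving genuine care, is the combinatorial reindexing that matches $\sum_{G\subseteq L}\widetilde{a}^{}_{G}(t)\,R^{}_{G\cup H}(v)$ against $\sum_{K\subseteq L}\widetilde{a}^{}_{K}(t)\,R^{}_{K\cup H}(v)$: these are the same sum, so the identity is immediate once both sides are written in the same form --- the point is simply that $R^{}_H$ commutes past the coefficients (which are scalars) and composes with each $R^{}_G$ via~\eqref{composition}, without any need to collect or regroup terms by the value of $G\cup H$. Thus the only real content is the invocation of Theorem~\ref{thm:linear}; the rest is bookkeeping. **The hard part**, such as it is, lies entirely inside Theorem~\ref{thm:linear}, which we are permitted to assume; given that result, the corollary follows in a few lines. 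I would close by noting that the argument uses nothing about $t$ beyond $t\geq 0$, so the commutation holds along the entire forward flow, as claimed.
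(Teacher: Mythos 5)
Your proposal is correct and takes essentially the same route as the paper, which states the corollary with no separate proof precisely because it follows, as you argue, from the remark that the $\widetilde{a}^{}_{G}(t)$ of \eqref{koefffunk} satisfy the hypotheses of Theorem~\ref{thm:linear}. Your chain of identities --- $R^{}_H\bigl(\varphi^{}_t(v)\bigr) = \sum_{G\subseteq L}\widetilde{a}^{}_{G}(t)\,R^{}_{G\cup H}(v)$ via Theorem~\ref{thm:linear}, and $\varphi^{}_t\bigl(R^{}_H(v)\bigr) = \sum_{K\subseteq L}\widetilde{a}^{}_{K}(t)\,R^{}_{K\cup H}(v)$ via Theorem~\ref{thm:reco-cont} applied to the initial value $R^{}_H(v)\in\mathcal{P}(X)$ together with the composition rule \eqref{composition} --- is exactly the intended argument, including the observation that the two sums coincide by mere relabelling.
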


The conventional approach to solve the recombination dynamics consists in transforming 
the type frequencies to certain functions
which diagonalise the dynamics, see 
\citep{Bennett, Kevin1, Kevin2, Lyubich} and references therein for more.
From now on, we will call these functions \emph{principal components} 
after Bennett~\cite{Bennett}.
For the single-crossover dynamics in continuous time, they have a particularly
simple structure: they are given by certain correlation functions,
or {\em linkage disequilibria} (LDE), which
play an important role in biological applications.
They have a counterpart at the level of operators on $\mathcal{P}(X)$.

Namely, let us define \emph{LDE operators} on $\mathcal{P}(X)$ 
as linear combinations of recombinators via
\begin{equation} \label{trafo}
    T^{}_{G} := \sum_{H\supseteq G}(-1)^{\lvert H-G \rvert } R^{}_{H}\ts ,\quad \mbox{with } G\subseteq L \, ,
\end{equation}
so that the inverse relation is given by
\begin{equation} \label{trafo-2}
     R^{}_{H} \, = \sum_{G\supseteq H} T^{}_{G} 
\end{equation}
due to the combinatorial M\"{o}bius inversion formula, compare~\cite{Aigner}.
Let us note for further use that, by Eq.~\eqref{composition} in Proposition~\ref{recoprop},
$T^{}_G \circ R^{}_G = T^{}_G$. Note also that, for
a probability measure $p$ on $X$, $T^{}_G (p)$ is a signed measure on $X$;
in particular, it need not be positive.
The LDEs are given by certain {\em components} of the $T^{}_G (p)$ 
--- see~\citep{reco,baakeherms} for more.
In the continuous-time single-crossover setting,
it was shown in \cite{reco} that, if $p^{}_{t}$ is the solution~\eqref{solution},
the $T^{}_{G} (p^{}_{t})$ satisfy
\begin{equation}
   \frac{\mathrm{d}}{\mathrm{d} t} \, T^{}_{G} (p^{}_{t})
   \, = \, - \Bigl(\sum_{\alpha\in L\setminus G}\widetilde{\rho}^{}_{\alpha}\Bigr)
   T^{}_{G} (p^{}_{t}) \ts , \quad  \mbox{for all } G\subseteq L \ts , 
\end{equation}
which is a \emph{decoupled} system of homogeneous linear differential equations, with
the standard exponential solution.
That is, the LDE operators both linearise and diagonalise the system, and 
the LDEs are thus, at the same time, principal components.

A straightforward calculation now reveals that the solution~\eqref{solution}
can be rewritten as
\begin{equation} \label{two-versions}
   p^{}_{t} \, = \sum_{G\subseteq L} \widetilde{a}^{}_{G} (t) R^{}_{G} (p^{}_{0})
   \, = \sum_{K\subseteq L} b^{}_{K} (t) \ts T^{}_{K} (p^{}_{0})
   \, = \sum_{K\subseteq L} T^{}_{K} (p^{}_{t}) \ts ,
\end{equation}
where the new coefficient functions are given by
\[
    b^{}_{K} (t) :=\, \exp\Bigl(- \!\!\! \sum_{\alpha\in L\setminus K}
           \widetilde{\rho}^{}_{\alpha} t \Bigr).
 \]

At this point, it is important to notice the rather simple structure of the LDE operators,
which do not depend on the crossover rates. Moreover, the transformation between
recombinators and LDE operators is directly given by the M\"obius formula, see
Eqs.~\eqref{trafo} and \eqref{trafo-2}. This is a significant simplification in
comparison with previous results, compare \citep{Bennett, Kevin1, Kevin2, Geiringer},
where the coefficients of the transformation generally depend on the crossover rates 
and must be determined recursively.

Below, we shall see that the SCR dynamics in \emph{continuous} time is indeed 
a special case, and that the above results cannot be transferred directly to the 
corresponding dynamics in \emph{discrete} time. Nevertheless, part of the 
continuous-time structure prevails and offers a useful entry point 
for the solution of the discrete-time counterpart.

\section{SCR in discrete time} \label{SCRdiscrete}

Employing recombinators, the SCR equations~\eqref{eq:rekombigleichdis} 
in discrete time with
a given initial distribution $p^{}_0$ can be compactly rewritten as
\begin{equation} \label{rekooper}
    p^{}_{t+1} \ts = \ts p^{}_t + \sum_{\alpha\in L} \rho^{}_{\alpha} \bigl( R_{\alpha}
        - \one \bigr) ( p^{}_t )  =: \varPhi(p^{}_t) \, .
\end{equation}
As indicated, the nonlinear operator of the right-hand side 
of \eqref{rekooper} is denoted by $\varPhi$ from now on.
We aim at a closed solution of \eqref{rekooper}, namely for 
$p^{}_t = \varPhi^{t} ( p^{}_0 )$ with $ t\in \mathbb{N}_0$.
Based on the result for the continuous-time model,
the solution is expected to be of the form

\begin{equation}\label{solutiondiscrete}
   p^{}_t \,  = \, \varPhi^t  (p^{}_0 ) 
       \ts  = \sum_{G\subseteq L} a^{}_G ( t ) R^{}_G ( p^{} _0 ) \, ,
\end{equation}
with non-negative $a^{}_G(t)$, $G\subseteq L$, $\sum_{G\subseteq L} a^{}_G ( t ) = 1$,
describing the (unknown) coefficient functions arising from the dynamics.
This representation of the solution was first stated by Geiringer~\cite{Geiringer}.
In particular, also the discrete-time dynamics takes place on the simplex
${\rm conv}\{ R^{}_G(p^{}_0) \mid G\subseteq L\}$.

We are particularly interested in whether a discrete-time equivalent
to Corollary~\ref{coro1} exists, that is, whether all recombinators
$R^{}_G$ commute with $\varPhi$.
This is of importance since it would allow for a diagonalisation
of the dynamics via the LDE operators~\eqref{trafo}. To see this,
assume for a moment that $R^{}_{\alpha}\circ\varPhi = \varPhi\circ R^{}_{\alpha}$ for all
$\alpha\in L$, and thus $R^{}_G\circ\varPhi = \varPhi\circ R^{}_G$ 
for all $G\subseteq L$.
Noting that, when $\alpha~\in ~G~\subseteq H$, Eq.~\eqref{composition} 
from Proposition~\ref{recoprop}  implies that
$(R^{}_{\alpha} - \one)R^{}_H  = R^{}_{H\cup\{\alpha\}} - R^{}_H = R^{}_H - R^{}_H = 0$,
we see that the assumption above would lead to
\begin{eqnarray*}
\lefteqn{T^{}_G \circ \varPhi \, = \, \sum_{H\supseteq G}(-1)^{\vert H - G\vert} R^{}_H \circ \varPhi
           \, = \, \sum_{H\supseteq G}(-1)^{\vert H - G\vert} \varPhi \circ R^{}_H } \\[1mm]
           &\, = \,&  \sum_{H\supseteq G}(-1)^{\vert H - G\vert} R^{}_H 
             + \sum_{H\supseteq G} (-1)^{\vert H - G\vert} 
                  \sum_{\alpha \in L}\rho^{}_{\alpha} (R^{}_{\alpha} - \one) R^{}_H \\[1mm]
           & \, = \, & T^{}_G + \sum_{H\supseteq G} (-1)^{\vert H - G \vert} 
                  \sum_{\alpha \in L\setminus G}\rho^{}_{\alpha} (R^{}_{\alpha} - \one) R^{}_H \\
          &\, = \, & \Bigl(1- \sum_{\alpha \in L\setminus G}\rho^{}_{\alpha}\Bigr) T^{}_G
              + \sum_{\alpha \in L\setminus G}\rho^{}_{\alpha} 
              \sum_{\substack{H \supseteq G\\\alpha\notin H}}
              \Bigl( (-1)^{\vert H - G\vert} R^{}_{H \cup \{\alpha\}} +
              (-1)^{\vert H \cup \{\alpha\} - G \vert} R^{}_{H\cup \{\alpha\}} \Bigr) \\
          & \, = \,& \Bigl(1- \sum_{\alpha \in L\setminus G} \rho^{}_{\alpha}\Bigr) T^{}_G \, ,        
\end{eqnarray*}
so that, indeed, all $T^{}_G(p^{}_t)$ would decay geometrically.
This wishful calculation is badly smashed by the nonlinear nature of the
recombinators, and the remainder of this paper is concerned with true identities
that repair the damage.

To get an intuition for the dynamics in discrete time, let us first take a closer
look at the discrete-time model with two, three, and four sites.

\subsection{Two and three sites}
For two sites, one simply has $S = \lbrace0,1\rbrace$ and $L = \lbrace\frac{1}{2}\rbrace$,
so that only one non-trivial recombinator exists, $R  = R^{}_{\frac{1}{2}}$, with corresponding
recombination probability $\rho = \rho^{}_{\frac{1}{2}}$.
Consequently, the SCR equation simplifies to
\begin{equation}\label{eq2sites}
   p^{}_{t+1} \, = \, \varPhi \ts (p^{}_{t})
   \, = \, \rho \, R (p^{}_{t}) + (1-\rho) \, p^{}_{t} \ts ,
\end{equation}
where $p^{}_{t}$ is a $\vert X\vert$-dimensional probability vector. The
solution is given by
\begin{equation}\label{sol2sites}
   p^{}_{t} \, = \, a (t) \, p^{}_{0} +  \bigl( 1 - a(t) \bigr) \ts R (p^{}_{0})
\end{equation}
with $a(t) = a^{}_{\varnothing} (t) = (1-\rho)^t$.
This formula is easily verified by induction \cite{Ute}. Thus, in analogy with the SCR 
dynamics in continuous time, the solution is available 
in closed form, and the coefficient functions allow an analogous probabilistic 
interpretation. Furthermore, it is easily seen that
the recombinators $R^{}_{\varnothing} = \one$ and $R^{}_{\frac{1}{2}} = R$ commute
with $\varPhi$ and therefore with $\varPhi^t$ for all $t\in \mathbb{N}_0$.
For two sites, the analogue of Corollary~\ref{coro1} thus holds 
in discrete time.
As a consequence, the LDE operators 
from \eqref{trafo} decouple and linearise the system~\eqref{eq2sites}.
At the level of the component LDEs, this is common knowledge
in theoretical population genetics; compare~\cite[Chap.3]{Hartl}.

\smallskip
Similarly, the recombination equation~\eqref{eq:rekombigleichdis} for three sites can be 
solved explicitly as well. An elementary calculation (applying the iteration and 
comparing coefficients) shows that the corresponding coefficient functions 
$a^{}_{G} (t)$ follow the linear recursion
\begin{eqnarray*}
\left(
\begin{array}{c}
a^{}_{\varnothing}(t+1)\\
a^{}_{\frac{1}{2}}(t+1)\\
a^{}_{\frac{3}{2}}(t+1)\\
a^{}_{\left\{\frac{1}{2},\frac{3}{2}\right\}}(t+1)
\end{array}
\right)
&=&
\begin{pmatrix}
1-\rho^{}_{\frac{1}{2}}-\rho^{}_{\frac{3}{2}}&0&0&0\\
\rho^{}_{\frac{1}{2}}&1-\rho^{}_{\frac{3}{2}}&0&0\\
\rho^{}_{\frac{3}{2}}&0&1-\rho^{}_{\frac{1}{2}}&0\\
0&\rho^{}_{\frac{3}{2}}&\rho^{}_{\frac{1}{2}}&1
\end{pmatrix}
\left(
\begin{array}{c}
a^{}_{\varnothing}(t)\\
a^{}_{\frac{1}{2}}(t)\\
a^{}_{\frac{3}{2}}(t)\\
a^{}_{\left\{\frac{1}{2},\frac{3}{2}\right\}}(t)
\end{array}
\right) \, ,
\end{eqnarray*}
with solution
\begin{equation}\label{solthreesites}
 \begin{split}
   a^{}_{ \varnothing } (t) &\, = \, \bigl(1 - \rho^{}_{\frac{1}{2}} - \rho^{}_{\frac{3}{2}}\bigr)^{t} \, , \\
  a^{}_{ \frac{1}{2}} (t)  &\, = \, \bigl(1 - \rho^{}_{\frac{3}{2}}\bigr)^{t} - 
         \bigl(1 - \rho^{}_{\frac{1}{2}} - \rho^{}_{\frac{3}{2}}\bigr)^{t} \, , \\
  a^{}_{\frac{3}{2}} (t) &\,=\, \bigl(1 - \rho^{}_{\frac{1}{2}}\bigr)^{t} - 
         \bigl(1 - \rho^{}_{\frac{1}{2}} - \rho^{}_{\frac{3}{2}}\bigr)^{t} \, , \\
  a^{}_{\left\{\frac{1}{2},\frac{3}{2}\right\}} (t) &\,= \,
        1  - \bigl(1 - \rho^{}_{\frac{3}{2}}\bigr)^{t} - 
        \bigl(1 - \rho^{}_{\frac{1}{2}}\bigr)^{t} 
        + \bigl(1 - \rho^{}_{\frac{1}{2}} - \rho^{}_{\frac{3}{2}}\bigr)^{t} \, .
 \end{split}
\end{equation}
These coefficient functions have the same probabilistic interpretation as the corresponding
$\widetilde{a}^{}_G(t)$, $G\subseteq L$, in the continuous-time model,
so that $a^{}_G(t)$ is the probability that the links that have been involved
in recombination until time $t$ are exactly those of the set $G$.

But there is a crucial difference.
Recall that, in continuous time, single crossovers imply
\emph{independence} of links, which is expressed in the product structure of the 
coefficient functions $\widetilde{a}^{}_G(t)$ (see~\eqref{koefffunk}).
This independence is lost in discrete time, where a crossover event at one link 
forbids any other cut at other links in the same time step.
Consequently, already for three sites, the coefficients of the discrete-time dynamics
fail to show the product structure used in Theorem~\ref{thm:linear}.

But even though Corollary~\ref{coro1}, concerning the forward flow of
\eqref{rekombiglstet}, is a consequence of Theorem~\ref{thm:linear}, which, in turn, 
is based upon the product structure of the coefficients, 
a short calculation reveals that $R^{}_G\circ\varPhi =\varPhi \circ R^{}_G$ 
still holds for the discrete-time
model with three sites for all $G \subseteq \bigl\{\frac{1}{2}, \frac{3}{2}\bigr\}$.
As a consequence, just as in the case of two sites, 
the $T^{}_G$ linearise {\em and} decouple the 
dynamics, which is well-known to the experts, see~\citep{Bennett, Christiansen} for more.

To summarise: despite the loss of independence of links, an
explicit solution of the discrete-time recombination dynamics is still available,
and a linearisation and 
diagonalisation of the dynamics can be achieved with the methods 
developed for the continuous-time model,
that is, a transformation to a solvable system via the 
$T^{}_G$. However, things will 
become more complex if we go to four sites and beyond. 
In particular, there is no equivalent to Corollary~\ref{coro1},
i.e., in general, the recombinators {\em do not} commute with $\varPhi^{}$,
and we have to search for a new transformation that replaces \eqref{trafo},
as will be explained next.

\subsection{Four sites}
The complication with four sites originates from the fact that 
$R^{}_{\frac{3}{2}}\circ \varPhi \neq \varPhi \circ R^{}_{\frac{3}{2}}$,
so that the property described by Corollary~\ref{coro1} for continuous time is lost here.
Consequently, the $T^{}_G$
fail the desired properties. In particular, one finds
\[
    T^{}_{\varnothing} (\varPhi(p)) \, = \, 
    \bigl(1 - \rho^{}_{\frac{1}{2}} - \rho^{}_{\frac{3}{2}} - \rho^{}_{\frac{5}{2}}\bigr) \ts
     T^{}_{\varnothing}(p) - \rho^{}_{\frac{1}{2}} \, \rho^{}_{\frac{5}{2}} \,T^{}_{\frac{3}{2}} (p) \ts ,
\]
so that an explicit solution of the model cannot be obtained as before. 

This raises the question why four sites are more difficult than three sites,
even though independence of links has already been lost
with three sites.
To answer this, we look at the time evolution of the coefficient functions
$a^{}_G(t)$, $G\subseteq L$.
For this purpose, let us return to the general model with an arbitrary number of sites.
\subsection{General case}
We now consider an arbitrary (but finite) set $S$ with the corresponding link set $L$.
For each $G\subseteq L$, we use the following abbreviations:
\begin{equation*}
  \begin{split}
    G^{}_{< \alpha} &:= \left \{ i \in G \mid i < \alpha \right \} , \quad
    G^{}_{> \alpha}  :=  \left \{ i \in G \mid i > \alpha \right \} ,\\
    L^{}_{ \leq \alpha} &:= \left \{ i \in L \mid i \leq \alpha \right \} , \quad
    L^{}_{ \geq \alpha}  := \left \{ i \in L \mid i \geq \alpha \right \}. 
  \end{split}
\end{equation*} 
Furthermore, we set $\eta := 1- \textstyle\sum_{\alpha \in L} \rho^{}_{\alpha}$.
We then obtain
\begin{theorem}\label{thm:adevelop}
  For all $G\subseteq L$ and $t\in\mathbb{N}_0$, the coefficient functions $a^{}_G(t)$ evolve according to
   \begin{equation}\label{nonlinrecur}
     a^{}_G(t+1)\, = \, \eta \thinspace a^{}_G(t) + \sum_{\alpha \in G } \rho^{}_{ \alpha} 
          \Bigl( \sum_{H \subseteq L_{ \geq \alpha} }  a^{}_{G^{}_{ < \alpha}\cup H}(t) \Bigr) \thinspace
          \Bigl( \sum_{K \subseteq L_{ \leq \alpha }} a^{}_{K \cup G^{}_{> \alpha} }(t) \Bigr) \ts ,
   \end{equation}
with initial condition $a^{}_G(0) = \delta^{}_{G,\varnothing}$.
\end{theorem}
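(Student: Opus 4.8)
The plan is to define the functions $a^{}_G(t)$ directly through the recursion \eqref{nonlinrecur} together with the initial condition $a^{}_G(0)=\delta^{}_{G,\varnothing}$, and then to prove by induction on $t$ that $p^{}_t = \sum_{G\subseteq L} a^{}_G(t)\, R^{}_G(p^{}_0)$ solves \eqref{rekooper}. This simultaneously establishes the Geiringer form \eqref{solutiondiscrete} and the recursion, and it sidesteps the question of whether the measures $\{R^{}_G(p^{}_0)\}$ are linearly independent (they need not be), since we verify the representation itself rather than match coefficients. The base case $t=0$ is immediate from $p^{}_0 = R^{}_{\varnothing}(p^{}_0)$. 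For the induction step I would rewrite \eqref{rekooper} as $\varPhi(p^{}_t) = \eta\, p^{}_t + \sum_{\alpha\in L}\rho^{}_\alpha R^{}_\alpha(p^{}_t)$; the term $\eta\, p^{}_t$ reproduces the contribution $\eta\, a^{}_G(t)$, so the whole problem reduces to expanding $R^{}_\alpha(p^{}_t)$ in the recombinators $R^{}_G(p^{}_0)$.

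The key structural input is a locality lemma for marginals: for every $G\subseteq L$ and every link $\alpha$,
\[
   \pi^{}_{<\alpha\cdot} R^{}_G(p^{}_0) = \pi^{}_{<\alpha\cdot} R^{}_{G^{}_{<\alpha}}(p^{}_0)
   \qquad\text{and}\qquad
   \pi^{}_{>\alpha\cdot} R^{}_G(p^{}_0) = \pi^{}_{>\alpha\cdot} R^{}_{G^{}_{>\alpha}}(p^{}_0)\, .
\]
I would deduce the first identity from Proposition~\ref{recoprop}: writing $G = G^{}_{<\alpha}\,\dot{\cup}\,(G\cap L^{}_{\geq\alpha})$ and using commutativity and composition, the links in $G\cap L^{}_{\geq\alpha}$ leave the leading marginal untouched by \eqref{elemprop1}, so $\pi^{}_{<\alpha\cdot} R^{}_{G\cap L^{}_{\geq\alpha}}(p^{}_0) = \pi^{}_{<\alpha\cdot} p^{}_0$; moreover \eqref{rekombinator} shows that a cut at any link $\gamma<\alpha$ alters $\pi^{}_{<\alpha\cdot}$ only through $\pi^{}_{<\alpha\cdot}$ itself, so $\pi^{}_{<\alpha\cdot} R^{}_{G^{}_{<\alpha}}(q)$ depends on $q$ only via $\pi^{}_{<\alpha\cdot} q$. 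Combining the two (by induction on $|G\cap L^{}_{\geq\alpha}|$ and on $|G^{}_{<\alpha}|$) gives the claim, and \eqref{elemprop2} yields the trailing analogue symmetrically.

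With the lemma in hand, and writing $L^{}_{<\alpha}:=L\setminus L^{}_{\geq\alpha}$ and $L^{}_{>\alpha}:=L\setminus L^{}_{\leq\alpha}$, I would expand, using linearity of the projectors,
\[
   \pi^{}_{<\alpha\cdot} p^{}_t = \sum_{A\subseteq L^{}_{<\alpha}}\Bigl(\sum_{H\subseteq L^{}_{\geq\alpha}} a^{}_{A\cup H}(t)\Bigr)\,\pi^{}_{<\alpha\cdot} R^{}_A(p^{}_0)\, ,
\]
and symmetrically for $\pi^{}_{>\alpha\cdot} p^{}_t$, with inner sums $\sum_{K\subseteq L^{}_{\leq\alpha}} a^{}_{K\cup B}(t)$ ranging over $B\subseteq L^{}_{>\alpha}$. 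Since $R^{}_\alpha(p^{}_t) = (\pi^{}_{<\alpha\cdot} p^{}_t)\otimes(\pi^{}_{>\alpha\cdot} p^{}_t)$ is \emph{bilinear}, this produces a double sum over $A$ and $B$. The decisive simplification is the product formula
\[
   \bigl(\pi^{}_{<\alpha\cdot} R^{}_A(p^{}_0)\bigr)\otimes\bigl(\pi^{}_{>\alpha\cdot} R^{}_B(p^{}_0)\bigr) = R^{}_{A\cup\{\alpha\}\cup B}(p^{}_0)\, ,
\]
which holds because, by \eqref{composition}, $R^{}_{A\cup\{\alpha\}\cup B}(p^{}_0) = R^{}_\alpha\bigl(R^{}_{A\cup B}(p^{}_0)\bigr)$, and the locality lemma replaces $A\cup B$ by $A$ in the leading and by $B$ in the trailing factor. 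This collapses the bilinear double sum back onto single recombinators $R^{}_{G'}(p^{}_0)$ with $\alpha\in G'$, $G'_{<\alpha}=A$ and $G'_{>\alpha}=B$.

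Collecting the coefficient of each $R^{}_G(p^{}_0)$ then finishes the step: $R^{}_\alpha(p^{}_t)$ only ever produces recombinators containing $\alpha$, so the link $\alpha$ contributes to $a^{}_G(t+1)$ precisely when $\alpha\in G$, with weight $\rho^{}_\alpha$ times the two inner sums evaluated at $A=G^{}_{<\alpha}$ and $B=G^{}_{>\alpha}$; adding the $\eta\, a^{}_G(t)$ term gives exactly \eqref{nonlinrecur}, and defining $a^{}_G(t+1)$ as these coefficients closes the induction. I expect the main obstacle to be precisely the nonlinearity of $R^{}_\alpha$ flagged after \eqref{rekooper}: it does \emph{not} act termwise on $\sum_G a^{}_G R^{}_G(p^{}_0)$ but generates cross terms, and the argument hinges on the locality lemma and the product formula conspiring to reassemble these cross terms into genuine recombinators. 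Keeping the bookkeeping of which links lie below, at, or above $\alpha$ straight---so that $A\cup\{\alpha\}\cup B$ is always a disjoint union with $\alpha$ present---is the part most prone to error.
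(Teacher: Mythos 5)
Your proof is correct, and its computational core coincides with the paper's: your locality lemma $\pi^{}_{<\alpha\cdot}R^{}_G(p^{}_0)=\pi^{}_{<\alpha\cdot}R^{}_{G_{<\alpha}}(p^{}_0)$ is exactly what the paper extracts from the left parts of \eqref{elemprop1} and \eqref{elemprop2}, read both forward and backward (it is a special case of its later identity \eqref{elemprop3}), and your product formula $\bigl(\pi^{}_{<\alpha\cdot}R^{}_A(p^{}_0)\bigr)\otimes\bigl(\pi^{}_{>\alpha\cdot}R^{}_B(p^{}_0)\bigr)=R^{}_{A\cup\{\alpha\}\cup B}(p^{}_0)$ is the paper's step $\bigl(\pi^{}_{<\alpha\cdot}R^{}_H(p^{}_0)\bigr)\otimes\bigl(\pi^{}_{>\alpha\cdot}R^{}_K(p^{}_0)\bigr)=R^{}_{\alpha}\bigl(R^{}_{H_{<\alpha}\cup K_{>\alpha}}(p^{}_0)\bigr)$ with the double sum pre-grouped by $A=H^{}_{<\alpha}$, $B=K^{}_{>\alpha}$. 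Where you genuinely differ is the logical direction. The paper assumes the Geiringer form \eqref{solutiondiscrete}, expands $\varPhi(p^{}_t)$, and compares coefficients of the $R^{}_G(p^{}_0)$; since these measures need not be linearly independent, it must justify the comparison by arguing that, for generic $p^{}_0$ and site spaces, the $R^{}_G(p^{}_0)$ are the extremal points of ${\rm conv}\{R^{}_K(p^{}_0)\mid K\subseteq L\}$, with degenerate cases handled by ``reducing the simplex in the obvious way''. You instead \emph{define} the $a^{}_G(t)$ by the recursion \eqref{nonlinrecur} and verify by induction on $t$ that $\sum_{G}a^{}_G(t)R^{}_G(p^{}_0)$ solves \eqref{rekooper}; the reindexing between pairs $(\alpha,A,B)$ with $A\subseteq L^{}_{<\alpha}$, $B\subseteq L^{}_{>\alpha}$ and pairs $(G,\alpha)$ with $\alpha\in G$, $A=G^{}_{<\alpha}$, $B=G^{}_{>\alpha}$ is a bijection, so the induction closes. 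This buys you two things: you never need the genericity/extremality argument, and you establish the representation \eqref{solutiondiscrete} itself rather than presupposing it. The mild trade-off is that you prove the \emph{existence} of coefficient functions satisfying both \eqref{solutiondiscrete} and \eqref{nonlinrecur}, whereas the paper's reading concerns \emph{the} coefficients of the (generically unique) expansion --- but given the possible degeneracy of the $R^{}_G(p^{}_0)$, which you correctly flag, your formulation is arguably the more robust of the two.
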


 \begin{proof}
Geiringer \cite{Geiringer} already explained in words how to derive this general
recursion, and illustrated it with the four-site example;
we give a proof via our operator formalism.
  Using \eqref{solutiondiscrete}, the recombination equation for $p^{}_{t+1}$ reads
  \begin{equation*}
   \begin{split}
p^{}_{t+1} &\, = \, \sum_{ G \subseteq L } a^{}_G (t + 1) R_G ( p^{}_0 ) =
   \varPhi \bigl(p^{}_t \bigr) = \varPhi \biggl( \sum_{ G \subseteq L } a^{}_G (t) R^{}_G (p^{}_0) \biggr)\\
        &\, =\, \sum_{ \alpha \in L } \rho^{}_{ \alpha } \thinspace
     \biggl(\Bigl( \pi^{}_{ < \alpha \cdot } 
     \Bigl( \sum_{ H \subseteq L } a^{}_H (t) R^{}_H (p^{}_0) \Bigr)\Bigr)
         \otimes \Bigl(\pi^{}_{ > \alpha \cdot }
        \Bigl( \sum_{ K \subseteq L } a^{}_K (t) R^{}_K (p^{}_0) \Bigr)\Bigr) \biggr) \\
        &\phantom{=}+ \eta \thinspace \Bigl( \sum_{ G \subseteq L } a^{}_G (t) R^{}_G (p^{}_0) \Bigr) \, ,
   \end{split}
  \end{equation*}
  where each product term in the first sum can be calculated as
  \begin{equation*}
\begin{split}
    \biggl( \pi^{}_{ < \alpha \cdot } 
     \Bigl( \sum_{ H \subseteq L } & a^{}_H (t) R_H (p^{}_0) \Bigr)\biggr)
         \otimes \biggl(\pi^{}_{ > \alpha \cdot }
        \Bigl( \sum_{ K \subseteq L } a^{}_K (t) R^{}_K (p^{}_0) \Bigr) \biggr)\\
     &= \sum_{ H, K \subseteq L } a^{}_H (t) a^{}_K (t)
         \biggl( \Bigl(\pi^{}_{ < \alpha \cdot } R^{}_H (p^{}_0)\Bigr)
             \otimes \Bigl(\pi^{}_{ > \alpha \cdot } R^{}_K (p^{}_0) \Bigr)\biggr) \\
     &= \sum_{ H , K \subseteq L } a^{}_H (t) a^{}_K (t)
         \biggl(\Bigl( \pi^{}_{ < \alpha \cdot } R^{}_{ H^{}_{ < \alpha } 
                            \cup K^{}_{ > \alpha } } (p^{}_0) \Bigr)
         \otimes \Bigl(\pi^{}_{ > \alpha \cdot } R^{}_{ H_{ < \alpha } 
                                \cup K^{}_{ > \alpha } } (p^{}_0) \Bigr)\biggr) \\
     &= \sum_{ H , K \subseteq L }a^{}_H (t) a^{}_K (t)
         \biggl( R^{}_{ \alpha } \Bigl( R^{}_{ H^{}_{ < \alpha } \cup K^{}_{ > \alpha } } (p^{}_0) \Bigr) \biggr) \, ,
\end{split}
    \end{equation*}
where we use the linearity of the projectors in the first step,
and Eqs.~\eqref{elemprop1} and \eqref{elemprop2} from Proposition~\ref{recoprop}  
in the second (more precisely, we use the left parts of Eqs.~\eqref{elemprop1} and \eqref{elemprop2},
reading them both forward and backward).
Insert this into the expression for $p^{}_{t+1}$ and rearrange the sums for a comparison
of coefficients of $R^{}_G$ with $G\subseteq L$.
Comparison of coefficients is justified by the observation that, for generic $p^{}_0$
and generic site spaces,
the vectors $R^{}_G(p^{}_0)$ with $G\subseteq L$ are the extremal vectors of the closed simplex
${\rm conv}\{R_K^{}(p^{}_0) \mid K\subseteq L\}$.
They are the vectors that (generically) cannot be expressed as non-trivial convex combination within
the simplex, and hence the vertices of the latter
(in cases with degeneracies, one reduces the simplex in the obvious way).
If $G = \varnothing$, we only have $\eta \thinspace a^{}_{\varnothing} (t)$
as coefficient for $R^{}_{\varnothing}$.
Otherwise, we get additional contributions for each $\alpha\in G$, namely,
from those $H,K\subseteq L$ for which
$H^{}_{ < \alpha } = G^{}_{ < \alpha }$ and $K^{}_{ > \alpha } = G^{}_{ > \alpha }$, 
while $H^{}_{ \geq \alpha }$ and $K^{}_{ \leq \alpha }$ can be any subset of $L^{}_{ \geq \alpha }$
and $L^{}_{ \leq \alpha }$, respectively.
Hence, the term belonging to $R^{}_G (p^{}_0)$ reads 
   \[
       \sum_{ \alpha \in G } \rho^{}_{ \alpha } \Bigl( \sum_{ H \subseteq L_{ \geq \alpha } }
       \sum_{ K \subseteq L_{ \leq \alpha } } a^{}_{ G^{}_{ < \alpha } \cup H } (t)
          \thinspace a^{}_{ K\cup G^{}_{ > \alpha }  } (t) \Bigr) + \eta \thinspace a^{}_G (t) \, ,
   \]
   and the assertion follows. \qed
\end{proof}
The $a^{}_G(t)$ have the same probabilistic interpretation as the 
$\widetilde{a}^{}_G(t)$~\eqref{koefffunk} from the continuous-time model,
and the above iteration can be understood intuitively as well:
A type $x$ resulting from recombination at link $\alpha$ is composed of two segments $x^{}_{<\alpha}$
and $x^{}_{>\alpha}$. These segments themselves may have been pieced together in previous 
recombination events already, and the iteration explains the possible cuts these segments may
carry along. The first term in the product stands for the type delivering the leading segment (which
may bring along arbitrary cuts in the trailing segment),
the second for the type delivering the trailing one (here any leading segment is allowed).
The term $\eta \thinspace a^{}_G(t)$ covers the case of no recombination.

Note that the above iteration is generally {\em nonlinear},
where the products stem from the fact that types 
recombine independently.
This nonlinearity is the reason that an explicit solution
cannot be given as before.

A notable exception is provided by recombination events that
occur at links where one of the involved segments cannot have been affected by previous crossovers,
namely the links $\frac{1}{2}$ and $\frac{2n-1}{2}$. 
In this case,
at least one of the factors in Eq.~\eqref{nonlinrecur} becomes $1$ 
(since, obviously, $G^{}_{<\alpha} = \varnothing$ for $\alpha = \frac{1}{2}$ 
and $G^{}_{>\alpha} = \varnothing$ for $\alpha = \frac{2n-1}{2}$) and the
resulting linear and triangular recursion can be solved.
The coefficients for the corresponding link sets 
can be inferred directly (proof via simple induction) as 
\begin{equation}\label{adirectly}
   \begin{split}
    a^{}_{ \varnothing} ( t ) &\, = \,  \eta^t \, , \\
    a^{}_{ \frac{1}{2}} ( t ) &\, =\, \left( \eta + \rho^{}_{ \frac{1}{2}}\right)^t - \eta^t \, ,  \\
    a^{}_{ \frac{2n-1}{2}} ( t ) &\, =\, \left( \eta + \rho^{}_{\frac{2n-1}{2}} \right)^t - \eta^t \, ,
                                                                    \quad\text{and} \\
    a^{}_{\left\{\frac{1}{2},\frac{2n-1}{2}\right\}}(t) &\,=\,
       \eta^t - \left( \eta + \rho^{}_{\frac{1}{2}} \right)^t -
    \left( \eta + \rho^{}_{\frac{2n-1}{2}} \right)^t +
    \left( \eta + \rho^{}_{\frac{1}{2}} + \rho^{}_{\frac{2n-1}{2}} \right)^t \,.
\end{split}
\end{equation}
This explains the availability of an explicit solution for the model with up to three sites,
where we do not have links other than $\frac{1}{2}$ and/or $\frac{3}{2}$,
so that all corresponding coefficients can be determined explicitly.
Indeed, one recovers \eqref{solthreesites} with $n = 2$ and
$\eta = 1-\rho^{}_{\frac{1}{2}}-\rho^{}_{\frac{3}{2}}$.

So far, we have observed that the product structure of the coefficient functions, known 
from continuous time, is lost in discrete time from three sites onwards;
this reflects the dependence of links. 
In contrast, the linearity of the iteration is only lost from four sites onwards.
The latter can be understood further by comparison of \eqref{nonlinrecur} with
the differential equations for the coefficients of the continuous-time model.
These read:
\begin{equation}\label{diffeqcont1}
  \frac{\mathrm{d}}{\mathrm{d} t} \widetilde{a}^{}_G(t) \, = \,
               -\Bigl(\sum_{\alpha\in L\setminus G}\tilde{\rho}^{}_{\alpha} \Bigr)
                  \widetilde{a}^{}_G(t)
               + \sum_{\alpha\in G} \tilde{\rho}^{}_{\alpha} \ts
                 \widetilde{a}^{}_{G\setminus \{ \alpha\} } (t) \, ,
\end{equation}
that is, they are linear, with solution~\eqref{koefffunk}. 
Note that this linear dynamics emerges from a seemingly nonlinear
one, namely the analogue of~\eqref{nonlinrecur},
\begin{equation}\label{diffeqcont2}
   \frac{\mathrm{d}}{\mathrm{d} t} \widetilde{a}^{}_G(t) \, = \, 
               -\Bigl(\sum_{\alpha\in L}\tilde{\rho}^{}_{\alpha}\Bigr)
                  \widetilde{a}^{}_G(t)
               + \sum_{\alpha\in G} \tilde{\rho}^{}_{\alpha}
       \Bigl( \sum_{H \subseteq L^{}_{ \geq \alpha} }  
             \widetilde{a}^{}_{G^{}_{ < \alpha}\cup H}(t) \Bigr) \,
        \Bigl( \sum_{K \subseteq L^{}_{ \leq \alpha }} 
                    \widetilde{a}^{}_{ K \cup G^{}_{> \alpha} }(t) \Bigr) \ts .
\end{equation}
However, due to the product structure of the solution, the product term in the second sum,
when inserting \eqref{koefffunk},
reduces to a single term,
\[
   \Bigl( \sum_{H \subseteq L_{ \geq \alpha} }  
             \widetilde{a}^{}_{G^{}_{ < \alpha}\cup H}(t) \Bigr) \,
        \Bigl( \sum_{K \subseteq L_{ \leq \alpha }} \widetilde{a}^{}_{K \cup G^{}_{> \alpha}}(t) \Bigr)
  \, = \,   \widetilde{a}^{}_{G}(t) + \widetilde{a}^{}_{G\setminus \{\alpha\}}(t) \, ,
\]
which turns \eqref{diffeqcont2} into \eqref{diffeqcont1}.

What happens here is the following.
From four sites onwards (namely, beginning with $n=3$ and a crossover
at $\alpha = \frac{3}{2}$, and both in discrete and continuous time),
it happens that leading and trailing segments meet that both possess
at least one link that may possibly have seen a previous cut.
When a crossover at $\alpha$ takes place, the new joint distribution
of cuts before and after $\alpha$ is formed as the product measure
of the marginal distributions of cuts in the leading and trailing
segments (cf. \eqref{nonlinrecur} and \eqref{diffeqcont2})
--- akin to the formation of product measures of marginal types
by $R^{}_{\alpha}$.
In continuous time, the links are all independent, hence
the new combination leaves the joint distribution of cuts
unchanged.
Therefore, a set $G$ of affected links (before and after $\alpha$)
is simply augmented by $\alpha$ if $\alpha$ is a `fresh' cut;
this results in the linearity of \eqref{diffeqcont1}.
In discrete time, however, the dependence between the links,
in particular between those in the leading and trailing segment,
means that the formation of the product measure changes the
joint distribution of affected links, in addition to the new
cut at $\alpha$; thus \eqref{nonlinrecur} remains nonlinear.

Since we aim
at an explicit solution of the discrete-time recombination model, 
we need to find a way to overcome the obstacles of nonlinearity.
Inspired by the results of the continuous-time model, we now
search for a transformation that decouples and linearises 
the dynamics. 

To this end, we first investigate the behaviour of the 
$R^{}_G$ and $T^{}_G$ in the discrete-time model, 
since a deeper understanding of their actions will help us find a new transformation.
We are still concerned with the LDE operators from the continuous-time model, 
because of their favourable structure and the existence of the inverse transformation
(M\"{o}bius inversion).
Moreover, as will become clear later, some of them still have the desired features 
and can be adopted directly for the discrete-time model.
First, we need further notation.

\begin{definition}
  Two links $\alpha, \beta \in L$ are called \emph{adjacent} if $\vert \alpha - \beta \vert =1$.
  We say that a subset $\tilde{L} \subseteq L$ is \emph{contiguous} if for any 
  two links $\alpha, \beta \in \tilde{L}$ with $\alpha \leq \beta$, also all links between
  $\alpha$ and $\beta$ belong to $\tilde{L}$ (this includes the case $\tilde{L} = \varnothing$).
  A non-empty contiguous set of links is written as 
  $\tilde{L}=\left\{ \ell^{}_{\mathrm{min}}, \ldots , \ell^{}_{\mathrm{max}}\right\}$.
\end{definition}

Whereas, according to Theorem~\ref{thm:linear}, all recombinators act linearly 
on the solution of the continuous-time recombination equation,
this does not hold for the solution of the discrete-time model in general,
though the following property still holds.

\begin{lemma} \label{lemma:linear}
   Let $\left \{ c^{}_G \mid  G\subseteq L\right\}$ be a family of non-negative numbers with
   $\sum_{ G \subseteq L} c^{}_G = 1$. 
   For an arbitrary $v \in \mathcal{P}(X)$ and for all $K \subseteq L$ 
   with $L \setminus K$ contiguous, one has
     \[
        R^{}_K \Bigl( \sum_{ G \subseteq L} c^{}_G  R^{}_G ( v ) \Bigr)
            \, = \, \sum_{ G \subseteq L} c^{}_G  R^{}_{ G \cup K } ( v )  \, .
     \]
\end{lemma}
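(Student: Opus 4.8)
The plan is to exploit the fact that every recombinator maps $\mathcal{P}(X)$ into product measures: applying the definition~\eqref{rekombinator} repeatedly, and using idempotence and commutativity from Proposition~\ref{recoprop}, one sees that $R^{}_G(v)$ is precisely the product measure $\bigotimes_B \pi^{}_{B\cdot}v$ over the blocks $B$ into which the cuts of $G$ partition the site set $S$. The hypothesis that $L\setminus K$ be \emph{contiguous} is exactly what forces the blocks of $K$ to take a simple shape: $K$ cuts at every link outside a single contiguous gap, so its blocks are the singletons $\{i\}$ for all sites $i$ outside one contiguous middle block $M$, together with $M$ itself, whose interior links are precisely $L\setminus K$. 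If $L\setminus K=\varnothing$ there is no middle block and the claim is immediate, so I assume $M\neq\varnothing$.

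The technical heart is the following marginalisation identity: for any contiguous block of sites $M$ with interior link set $L(M)$, and any $G\subseteq L$,
\[
   \pi^{}_{M\cdot}\bigl(R^{}_G(v)\bigr) \,=\, R^{M}_{G\cap L(M)}\bigl(\pi^{}_{M\cdot}v\bigr),
\]
where $R^{M}$ denotes the recombinator acting on the sub-type-space $X^{}_M$. I would prove it by writing $R^{}_G(v)=\bigotimes_B\pi^{}_{B\cdot}v$ as above and using that the marginal of a product measure factorises, $\pi^{}_{M\cdot}(\bigotimes_B\pi^{}_{B\cdot}v)=\bigotimes_B\pi^{}_{(B\cap M)\cdot}v$. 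Because $M$ is contiguous, the partition $\{B\cap M\}$ of $M$ is induced exactly by the cuts of $G$ lying inside $M$, that is, by $G\cap L(M)$; hence the right-hand side equals $R^{M}_{G\cap L(M)}(\pi^{}_{M\cdot}v)$. The case $M=\{i\}$, so that $L(M)=\varnothing$, recovers the familiar fact that single-site marginals are invariant under every recombinator, $\pi^{}_{\{i\}\cdot}R^{}_G(v)=\pi^{}_{\{i\}\cdot}v$.

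With these in hand the computation is short. Write $p:=\sum_{G}c^{}_G R^{}_G(v)$, which lies in $\mathcal{P}(X)$ precisely because the $c^{}_G$ are non-negative and sum to $1$, so that $R^{}_K(p)$ is defined. Since $R^{}_K$ returns the product of the block-marginals of its argument, and the blocks of $K$ are the outside singletons together with $M$, linearity of marginalisation gives each singleton factor as $\sum_G c^{}_G\pi^{}_{\{i\}\cdot}v=\bigl(\sum_G c^{}_G\bigr)\pi^{}_{\{i\}\cdot}v=\pi^{}_{\{i\}\cdot}v$, where $\sum_G c^{}_G=1$ is used, and the middle factor as $\sum_G c^{}_G R^{M}_{G'}(\pi^{}_{M\cdot}v)$ with $G':=G\cap L(M)$, by the identity above. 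On the other side, $R^{}_{K\cup G}=R^{}_K R^{}_G$ by~\eqref{composition}, so applying the same identity to each block of $K$ yields $R^{}_{K\cup G}(v)=A\otimes R^{M}_{G'}(\pi^{}_{M\cdot}v)\otimes C$, where $A$ and $C$ are the ($G$-independent) products of the singleton marginals $\pi^{}_{\{i\}\cdot}v$ over the sites to the left and right of $M$. Multilinearity of the tensor product then pulls the sum onto the middle slot, $\sum_G c^{}_G R^{}_{K\cup G}(v)=A\otimes\bigl(\sum_G c^{}_G R^{M}_{G'}(\pi^{}_{M\cdot}v)\bigr)\otimes C$, which is exactly $R^{}_K(p)$.

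I expect the main obstacle to be pinning down cleanly why contiguity of $L\setminus K$ is indispensable rather than a mere convenience. The point is that it produces a \emph{single} middle block, so that no sum ever has to be distributed across several independent tensor slots. Had $L\setminus K$ split into several components, $R^{}_K(p)$ would be a product over several non-trivial blocks, while $\sum_G c^{}_G R^{}_{K\cup G}(v)$ would be a sum of products across those blocks; the two agree only if the $c^{}_G$ themselves factorise across the blocks, which is precisely the product-structure hypothesis of Theorem~\ref{thm:linear} and, as the four-site example $K=\{\frac{3}{2}\}$ shows, fails in discrete time. Keeping this dichotomy visible is what makes the role of the hypothesis transparent.
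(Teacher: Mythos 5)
Your proof is correct and follows essentially the same route as the paper's: contiguity of $L\setminus K$ makes every block of $K$ a singleton except one, the singleton marginals are $G$-independent by $\pi^{}_{i\cdot}R^{}_G(v)=\pi^{}_{i\cdot}v$ together with $\sum_{G\subseteq L}c^{}_G=1$, and multilinearity of $\otimes$ then pulls the sum into the single non-trivial tensor slot. The paper merely organises the same argument in two passes, writing $K=A\cup B$ with the prefix $A$ and suffix $B$ of links flanking the gap and applying $R^{}_A$, then $R^{}_B$; your middle-block marginalisation identity is exactly its relation~\eqref{elemprop3}, derived there from Proposition~\ref{recoprop} just as you do.
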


\begin{proof}
    When $K = \varnothing$, the claim is clear, because $R^{}_{\varnothing} = \one$
    and $L$ itself is contiguous.
    Otherwise, we have 
    $K = A \cup B$ with $A:=\{\frac{1}{2}, \frac{3}{2}, \ldots, \alpha\}$ 
    and $B:=\{ \beta, \beta +1, \ldots, \frac{2n-1}{2} \}$ for some $ \beta > \alpha$ 
    (this includes the case $K = L$ via $\beta = \alpha + 1$).
    Since we work on $\mathcal{P}(X)$, we have 
    $R^{}_K = R^{}_B R^{}_A$ from Proposition~\ref{recoprop}.
    With the projection $\pi^{}_{i\cdot}:\mathcal{P} (X) \rightarrow \mathcal{P} (X^{}_i) $ 
    onto a single site $i$, we obtain
   \begin{equation}\label{prosingle}
     \pi^{}_{ i \cdot } \Bigl( \sum_{G \subseteq L} c^{}_G  R^{}_G (v) \Bigr)
        =\sum_{ G \subseteq L } c^{}_G  \pi^{}_{ i \cdot } R^{}_G (v)
        =\sum_{ G \subseteq L } c^{}_G  \pi^{}_{ i \cdot } v = \pi^{}_{ i \cdot} v \, ,
   \end{equation}
    since $\pi^{}_{i\cdot}$ is a linear operator and
    $\pi^{}_{i \cdot} R^{}_G (v) = \pi^{}_{i \cdot} v$ by Proposition~\ref{recoprop}.
    For the contiguous set
    $A $ and 
    $w := \sum_{ G \subseteq L}c^{}_G  R^{}_G ( v )$,
    we obtain, with the help of \eqref{prosingle} 
    and a repeated application of Proposition~\ref{recoprop},
   \begin{eqnarray*}
          \lefteqn{ R^{}_A\Bigl( \sum_{G \subseteq L } c^{}_G  R^{}_G (v) \Bigr) = 
               \pi^{}_{ 0 \cdot } w \otimes \cdots \otimes 
               \pi^{}_{ \left \lfloor \alpha \right \rfloor \cdot } w 
                                 \otimes \pi^{}_{> \alpha \cdot} w } \\
           &=& \pi^{}_{ 0\cdot } v \otimes \cdots \otimes 
                  \pi^{}_{ \left \lfloor \alpha \right \rfloor \cdot } v
                                 \otimes \pi^{}_{> \alpha \cdot } w
            =  \sum_{ G \subseteq L } c^{}_G  \bigl( \pi^{}_{ 0 \cdot } v 
                                 \otimes \cdots \otimes 
               \pi^{}_{ \left \lfloor \alpha \right \rfloor \cdot } v \otimes 
               \pi^{}_{ > \alpha \cdot } R^{}_G (v) \bigr) \\
           &=& \sum_{ G \subseteq L } c^{}_G  \bigl( \pi^{}_{ 0 \cdot } R^{}_G (v) 
                                 \otimes \cdots \otimes
               \pi^{}_{ \left \lfloor \alpha \right \rfloor \cdot } R^{}_G (v) \otimes
               \pi^{}_{ > \alpha \cdot } R_G (v) \bigr)
            =  \sum_{ G \subseteq L } c^{}_G  R^{}_{A \cup G } (v) \, .
    \end{eqnarray*}
    An analogous calculation reveals 
    $R^{}_B \bigl( \sum_{G \subseteq L } c^{}_G  R^{}_{A \cup G} (v) \bigr) = 
    \sum_{ G \subseteq L } c^{}_G  R^{}_{A \cup B \cup G } (v)$ 
    for contiguous $B$.
    This proves the assertion. \qed
\end{proof}
The intuitive content of Lemma~\ref{lemma:linear} falls into place with the 
explanation of Theorem~\ref{thm:adevelop}.
The linearity of the particular recombinators of Lemma~\ref{lemma:linear} 
is due to the fact that
$R^{}_K$ produces only one segment,
namely $L\setminus K$, that might be
affected by previous recombination events while all other segments consist of
only one site and thus cannot bring along cuts from `the past'.

\section{Reduction to subsystems}\label{sec:subsys}

In this section, we show a certain product structure of the 
recombinators and the LDE operators. This will turn out as
the key for constructing an appropriate transformation. 
Recall that a crossover at link $\alpha\in L$ 
partitions $S$ into $\left\{0, \ldots, \left\lfloor \alpha\right\rfloor\right\}$ 
and
$\left\{\left\lceil \alpha\right\rceil, \ldots, n \right\}$. 
In general, recombination at the links belonging to 
$G = \bigl\{\alpha^{}_1, \ldots, \alpha^{}_{\left|G\right|}\bigr\}\subseteq L$,
$\alpha^{}_1~< \alpha^{}_2 ~<~ \cdots~ <~ \alpha^{}_{\left|G\right|}$,
induces the following \emph{ordered} partition
$\mathcal{S}^{}_G = \bigl\{J^{G}_0,J^{G}_1, \ldots, J^{G}_{\left|G\right|}\bigr\}$ of $S$ 
(see Fig.~\ref{fig:subsystems}):
\[
    J^{G}_0 = \left\{ 0, \ldots , \left\lfloor \alpha_1 \right\rfloor\right\} ,\;
    J^{G}_1 = \left\{\left\lceil   \alpha_1\right\rceil , \ldots ,
         \left\lfloor \alpha_2\right\rfloor\right\} , \; \ldots ,
    J^{G}_{\left|G\right|} = \bigl\{\lceil  \alpha^{}_{\vert G\vert}\rceil , \ldots , n \bigr\} \, .
\]
Note that the partition is ordered due to the restriction to single crossovers.
In connection with this, we have the sets of links that correspond to the 
respective parts of the partition $\mathcal{S}^{}_G$ (Fig.~\ref{fig:subsystems}).
Namely, for
$G = \bigl\{\alpha^{}_1, \ldots, \alpha^{}_{\left|G\right|}\bigr\}\subseteq L$,
$\mathcal{L}^{}_G := \left\{I_0^G, I_1^G, \ldots, I_{\left|G\right|}^G\right\}$ with
\begin{equation}\label{linksegments}
 \begin{split}
  I_0^G & = \left\{\alpha \in L : \tfrac{1}{2} \leq \alpha < \alpha^{}_1 \right\}, \quad
    I_{\left|G\right|}^G  = 
    \bigl\{\alpha \in L : \alpha^{}_{\vert G\vert} < \alpha \leq \tfrac{2n-1}{2} \bigr\} \, , \\
  \mbox{and }  I_{\ell}^G  &= \left\{\alpha \in L : \alpha^{}_{\ell} < \alpha < \alpha^{}_{\ell+1} \right\}
  \mbox{ for } 1\leq \ell\leq \vert G\vert -1 
 \end{split}
\end{equation}
specifies the links belonging to the respective parts of $\mathcal{S}^{}_G$: 
the links associated with $J^{G}_k~\in~\mathcal{S}^{}_G$, $ 0 \leq k \leq \vert G\vert$, are exactly those of 
$I_k^G~\in~\mathcal{L}^{}_G$ (and vice versa).

\begin{figure}[htb]
\psfrag{0}{$0$}
\psfrag{1}{$1$}
\psfrag{2}{$2$}
\psfrag{3}{$3$}
\psfrag{4}{$4$}
\psfrag{5}{$5$}
\psfrag{6}{$6$}
\psfrag{7}{$7$}
\psfrag{8}{$8$}
\psfrag{9}{$9$}
\psfrag{A}{$\frac{1}{2}$}
\psfrag{B}{$\frac{3}{2}$}
\psfrag{C}{$\frac{5}{2}$}
\psfrag{D}{$\frac{7}{2}$}
\psfrag{E}{$\frac{9}{2}$}
\psfrag{F}{$\frac{11}{2}$}
\psfrag{G}{$\frac{13}{2}$}
\psfrag{H}{$\frac{15}{2}$}
\psfrag{I}{$\frac{17}{2}$}
\psfrag{I0}{$I^{}_0$}
\psfrag{I1}{$I^{}_1$}
\psfrag{I2}{$I^{}_2 = \varnothing$}
\psfrag{I3}{$I^{}_3$}
\psfrag{J0}{$J^{}_0$}
\psfrag{J1}{$J^{}_1$}
\psfrag{J2}{$J^{}_2$}
\psfrag{J3}{$J^{}_3$}

 \includegraphics{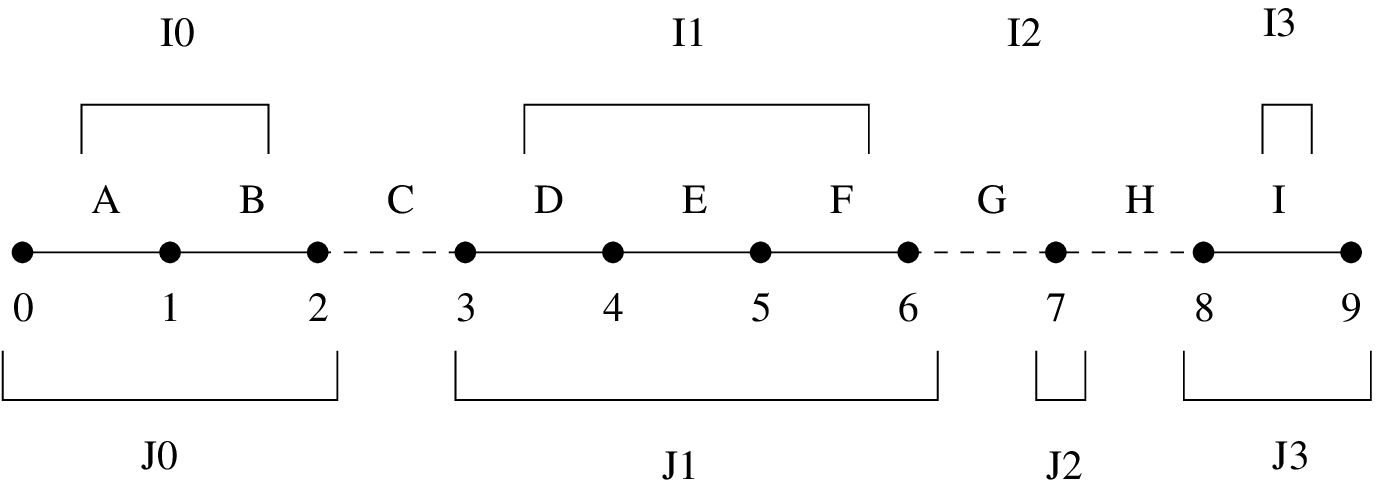}
\caption{A system with $10$ sites (i.e., $S = \{0,\ldots,9 \}$, $L = \{ \frac{1}{2},\ldots,\frac{17}{2} \}$)
cut at the links $G~=~\{\frac{5}{2},\frac{13}{2},\frac{15}{2} \}$ (broken lines).
The resulting subsystems are $\mathcal{S}^{}_G~=~\{J^{}_0,\ldots,J^{}_3 \}$ and
$\mathcal{L}^{}_G = \{I^{}_0,\ldots,I^{}_3\}$ with $J^{}_0 = \{0,1,2\}$, $J^{}_1 = \{ 3,4,5,6\}$,
$J^{}_2 = \{7 \}$ and $J^{}_3 = \{ 8,9 \}$ as well as $I^{}_0 = \{\frac{1}{2},\frac{3}{2} \}$,
$I^{}_1 = \{\frac{7}{2},\frac{9}{2},\frac{11}{2} \}$, $I^{}_2 = \varnothing$ and
$I^{}_3 = \{\frac{17}{2} \}$ (the upper index $G$ is suppressed here for clarity).}

\label{fig:subsystems}
\end{figure}

With this definition, $I_i^G = \varnothing$ is possible for each $0\leq i\leq \vert G\vert$ 
and will be included (possibly multiply) in $\mathcal{L}^{}_G$.
Furthermore, $\mathcal{L}^{}_{\varnothing} := \{ L \}$, so that $I^{\varnothing}_0 = L$.
The upper index will be suppressed in cases where the corresponding set of links is
obvious.
Clearly, $\mathcal{L}^{}_G$ is not a partition of $L$, whereas $\mathcal{S}^{}_G$ is a partition of $S$.

This way, recombination at the links in $G\subseteq L$ produces several `subsystems'
(characterised through the sites $J^{}_k$ and the corresponding links $I^{}_k$,
$0\leq k \leq\vert G\vert$)
with respect to the `full system' described through 
the sites $S$ and the links $L$.
We demonstrate below that it is sufficient to consider these subsystems separately,
a property that reduces the problem of dealing with the recombination dynamics.
Note first that repeated application of \eqref{elemprop1} and \eqref{elemprop2} leads to
\begin{equation}\label{elemprop3}
 \pi^{}_{<\alpha\cdot}R^{(L)}_G(p) = R^{(L^{}_{<\alpha})}_{G^{}_{<\alpha}}(\pi^{}_{<\alpha\cdot}p)
\quad \text{and} \quad 
\pi^{}_{>\alpha\cdot}R^{(L)}_G(p) = R^{(L^{}_{>\alpha})}_{G^{}_{>\alpha}}(\pi^{}_{>\alpha\cdot}p) \, ,
\end{equation}
where 
$R^{(L)}_G$ is our usual recombinator acting on 
$\mathcal{P} (X) = \mathcal{P}(X^{}_0\times \cdots \times X^{}_n)$,
and $ R^{(L^{}_{<\alpha})}_{G^{}_{<\alpha}}$ denotes the respective recombinator on 
$\mathcal{P}(X^{}_{ 0 } \times \cdots \times X^{}_{ \lfloor \alpha \rfloor } )$,
which acts on the subsystem specified through the sites $L^{}_{<\alpha}$
and cuts the links $G^{}_{<\alpha}$
(and analogously for  $ R^{(L^{}_{>\alpha})}_{G^{}_{>\alpha}}$).
Likewise, recombinators $R^{(I^{}_i)}_{H}$, $H\subseteq I^{}_i$,
acting on $\mathcal{P}(X^{}_{J_i})$, may be defined for all subsystems,
$0\leq i\leq \vert G\vert$, in the obvious way.
For consistency, we define $R_{\varnothing}^{(\varnothing)}=\one$.
From now on, the upper index specifies the corresponding system the 
$R^{}_G$ (and, likewise, the $T^{}_G$) are acting on. It will be suppressed in cases where the 
system is obvious.
We now explain the inherent product structure of the recombinators: 

\begin{prop}\label{thm:product}
    Let $G\subseteq L$. For each $\alpha \in G$ and $p \in \mathcal{P}(X)$, one has the identity
    \[
       R_G^{(L)} (p) \, =\, \Bigl( R_{G_{ < \alpha}}^{ (L_{ < \alpha}) } 
                        ( \pi^{}_{ < \alpha \cdot } p ) \Bigr)
       \otimes \Bigl( R_{G_{ > \alpha}}^{(L_{ > \alpha})} 
                        ( \pi^{}_{ > \alpha \cdot } p ) \Bigr) \, .
    \]
\end{prop}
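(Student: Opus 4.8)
The plan is to exploit the hypothesis $\alpha\in G$, which forces the measure $R^{(L)}_G(p)$ to be \emph{already} a product measure across the split at link $\alpha$; its two factors are then precisely the two marginals appearing in the claim. Concretely, since $\alpha\in G$ we have $\{\alpha\}\cup G=G$, so Eq.~\eqref{composition} of Proposition~\ref{recoprop} gives $R^{}_{\alpha}\, R^{(L)}_G=R^{(L)}_{\{\alpha\}\cup G}=R^{(L)}_G$. Applying both sides to $p$ yields the key fixed-point relation
\[
   R^{(L)}_G(p)\,=\,R^{}_{\alpha}\bigl(R^{(L)}_G(p)\bigr),
\]
which expresses that $R^{(L)}_G(p)$ is left unchanged by recombination at $\alpha$. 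This is the conceptual heart of the argument, and it is exactly where the assumption $\alpha\in G$ enters.

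Next I would unfold the right-hand side by the definition \eqref{rekombinator} of the elementary recombinator, applied to the measure $R^{(L)}_G(p)$ in place of $p$, giving
\[
   R^{}_{\alpha}\bigl(R^{(L)}_G(p)\bigr)\,=\,\bigl(\pi^{}_{<\alpha\cdot}R^{(L)}_G(p)\bigr)\otimes\bigl(\pi^{}_{>\alpha\cdot}R^{(L)}_G(p)\bigr).
\]
It then only remains to identify the two marginals, and this is precisely what Eq.~\eqref{elemprop3} supplies: $\pi^{}_{<\alpha\cdot}R^{(L)}_G(p)=R^{(L^{}_{<\alpha})}_{G^{}_{<\alpha}}(\pi^{}_{<\alpha\cdot}p)$ and $\pi^{}_{>\alpha\cdot}R^{(L)}_G(p)=R^{(L^{}_{>\alpha})}_{G^{}_{>\alpha}}(\pi^{}_{>\alpha\cdot}p)$. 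Substituting these two identities into the product above delivers the asserted factorisation, which closes the argument.

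The proof is short because all the substantive work has been packaged into the earlier results; without $\alpha\in G$ one only has $R^{}_{\alpha}R^{(L)}_G=R^{(L)}_{G\cup\{\alpha\}}\neq R^{(L)}_G$, the fixed-point relation fails, and with it the product form at $\alpha$. The step I would check most carefully is the clean application of \eqref{elemprop3} to the measure $R^{(L)}_G(p)$, keeping track of the fact that $R^{(L^{}_{<\alpha})}_{G^{}_{<\alpha}}$ acts on $\mathcal{P}(X^{}_0\times\cdots\times X^{}_{\lfloor\alpha\rfloor})$ while $R^{(L^{}_{>\alpha})}_{G^{}_{>\alpha}}$ acts on $\mathcal{P}(X^{}_{\lceil\alpha\rceil}\times\cdots\times X^{}_n)$, so that the two factors live on complementary site blocks whose tensor product reconstitutes a measure on all of $X$. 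Since $\alpha$ is an arbitrary element of $G$, the same factorisation holds for every admissible splitting link, as claimed.
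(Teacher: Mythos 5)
Your proof is correct and coincides with the paper's own argument: both use $\alpha\in G$ together with Eq.~\eqref{composition} to obtain the fixed-point relation $R_G^{(L)}(p)=R^{}_{\alpha}\bigl(R_G^{(L)}(p)\bigr)$, expand via the definition~\eqref{rekombinator}, and identify the two marginals through Eq.~\eqref{elemprop3}. Your additional remarks on where the hypothesis $\alpha\in G$ enters and on the domains of the factor recombinators are accurate but not needed beyond the paper's three-line calculation.
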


\begin{proof}
    For $\alpha \in G$, Proposition~\ref{recoprop} implies : 
    \begin{eqnarray*}
    R_G^{(L)} (p) &\, =\, & R_{\alpha}^{(L)}\Bigl(R_G^{(L)}(p)\Bigr) 
              \, =\, \Bigl( \pi^{}_{ < \alpha \cdot } R_G^{(L)} (p) \Bigr)
                 \otimes \Bigl(\pi^{}_{ > \alpha \cdot }  R_G^{(L)} (p) \Bigr) \\
              &\,=\,& \Bigl( R_{G_{ < \alpha}}^{( L^{}_{ < \alpha}) } (\pi^{}_{<\alpha\cdot} p ) \Bigr)
       \otimes \Bigl( R_{G^{}_{ > \alpha}}^{(L^{}_{ > \alpha})} ( \pi^{}_{> \alpha \cdot }p )\Bigr) \, ,
    \end{eqnarray*}
    where the last step follows from \eqref{elemprop3}. \qed
 \end{proof}
This proposition carries over to the LDE operators:

\begin{prop}
    On $\mathcal{P} (X) $, the LDE operators satisfy
    \[
       T_G^{(L)} (p) \,=\, \Bigl(T_{ G^{}_{ < \alpha}}^{( L_{ < \alpha}) } ( \pi^{}_{ < \alpha \cdot} p ) \Bigr)
                   \otimes 
                   \Bigl(T_{ G^{}_{ > \alpha}}^{ (L_{ >\alpha} )} ( \pi^{}_{ > \alpha \cdot} p )\Bigr)
                             \quad\text{for all $\alpha\in G$},
     \]
    where $T_{ G^{}_{ < \alpha}}^{( L_{ < \alpha}) }$ and  $T_{ G^{}_{ > \alpha}}^{ (L^{}_{ >\alpha}) }$
    now describe the operators acting on the simplices
    $\mathcal{P} ( X^{}_0 \times \cdots \times X^{}_{ \lfloor \alpha \rfloor})$ and 
    $\mathcal{P} ( X^{}_{ \lceil \alpha \rceil} \times \cdots \times X^{}_n )$, respectively.
\end{prop}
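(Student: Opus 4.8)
The plan is to reduce everything to the product structure already established for the recombinators in Proposition~\ref{thm:product}, exploiting the fact that the LDE operators are, by their very definition~\eqref{trafo}, signed sums of recombinators. Fix $\alpha\in G$ and start from $T^{(L)}_G(p) = \sum_{H\supseteq G}(-1)^{\vert H-G\vert}R^{(L)}_H(p)$. Since $\alpha\in G\subseteq H$ for every $H$ occurring in this sum, Proposition~\ref{thm:product} applies termwise and splits each $R^{(L)}_H(p)$ as $\bigl(R^{(L^{}_{<\alpha})}_{H^{}_{<\alpha}}(\pi^{}_{<\alpha\cdot}p)\bigr)\otimes\bigl(R^{(L^{}_{>\alpha})}_{H^{}_{>\alpha}}(\pi^{}_{>\alpha\cdot}p)\bigr)$, where I abbreviate $H^{}_{<\alpha}=H\cap L^{}_{<\alpha}$ and $H^{}_{>\alpha}=H\cap L^{}_{>\alpha}$.

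The combinatorial heart is a bijection that turns the single summation index $H$ into a pair of independent indices, one living on each subsystem. Because $\alpha\notin L^{}_{<\alpha}\cup L^{}_{>\alpha}$ while $\alpha\in H$ is forced, each $H\supseteq G$ decomposes uniquely as $H=H^{}_{<\alpha}\,\dot{\cup}\,\{\alpha\}\,\dot{\cup}\,H^{}_{>\alpha}$, with $H^{}_{<\alpha}$ ranging over all subsets of $L^{}_{<\alpha}$ containing $G^{}_{<\alpha}$ and $H^{}_{>\alpha}$ ranging over all subsets of $L^{}_{>\alpha}$ containing $G^{}_{>\alpha}$, independently of one another. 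Under this decomposition the sign factorises, since the common singleton $\{\alpha\}$ drops out of $H\setminus G$ and hence $\vert H-G\vert=\vert H^{}_{<\alpha}-G^{}_{<\alpha}\vert+\vert H^{}_{>\alpha}-G^{}_{>\alpha}\vert$, giving $(-1)^{\vert H-G\vert}=(-1)^{\vert H^{}_{<\alpha}-G^{}_{<\alpha}\vert}(-1)^{\vert H^{}_{>\alpha}-G^{}_{>\alpha}\vert}$.

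Substituting these two facts into the sum rewrites it as a double sum over the independent indices $H^{}_{<\alpha}$ and $H^{}_{>\alpha}$, each weighted by its own sign and carrying its own tensor factor. The final—and only genuinely structural—step is to pull the double sum through the tensor product: as $\otimes$ is bilinear in its two arguments, the double sum factorises into a tensor product of two single sums, one over $H^{}_{<\alpha}\supseteq G^{}_{<\alpha}$ inside $L^{}_{<\alpha}$ and one over $H^{}_{>\alpha}\supseteq G^{}_{>\alpha}$ inside $L^{}_{>\alpha}$. Comparing each factor with the definition~\eqref{trafo} of the LDE operator on the respective subsystem identifies them as $T^{(L^{}_{<\alpha})}_{G^{}_{<\alpha}}(\pi^{}_{<\alpha\cdot}p)$ and $T^{(L^{}_{>\alpha})}_{G^{}_{>\alpha}}(\pi^{}_{>\alpha\cdot}p)$, which is exactly the asserted identity.

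I do not expect a serious obstacle, as the statement is essentially a bookkeeping consequence of Proposition~\ref{thm:product}. The one point that deserves care is the bilinearity of $\otimes$ invoked in the last step, since $T^{}_G(p)$ is merely a \emph{signed} measure: one should note that the product measure, initially defined on probability measures, extends bilinearly to the finite-dimensional space of signed measures on the relevant product space, so that forming products does distribute over the finite signed combinations appearing here.
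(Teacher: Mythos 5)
Your proof is correct and follows essentially the same route as the paper's: termwise application of Proposition~\ref{thm:product} (legitimate since $\alpha\in G\subseteq H$ for every summand), the unique splitting $H = H^{}_{<\alpha}\,\dot{\cup}\,\{\alpha\}\,\dot{\cup}\,H^{}_{>\alpha}$ with factorising signs because the common $\alpha$ drops out of $H\setminus G$, and bilinearity of $\otimes$ to separate the resulting double sum into the two subsystem LDE operators. Your closing remark about extending the product measure bilinearly to signed measures makes explicit a point the paper uses tacitly, which is a welcome addition rather than a deviation.
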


\begin{proof}
    Let $\alpha\in G$. Using the product structure from Proposition~\ref{thm:product} 
    and splitting the sum into two disjoint parts, one obtains
    \begin{eqnarray*}
       \lefteqn{ T_G^{(L)} (p) \,=\, \sum_{ H \supseteq G} (-1)^{ \left|H-G \right|} R^{(L)}_H (p)
              \,=\, \sum_{ H \supseteq G} (-1)^{ \left|H- G\right|} 
                      \biggl(\Bigl( R_{H^{}_{ <\alpha}}^{(L^{}_{ <\alpha})}
                                ( \pi^{}_{<\alpha\cdot}p)\Bigr) \otimes
                   \Bigl( R_{ H^{}_{ > \alpha}}^{( L^{}_{ > \alpha})} 
                             ( \pi^{}_{ > \alpha \cdot } p )\Bigr) \biggr)} \\
              &\,=\,& \!\!\sum_{ L \setminus \left\{ \alpha \right\} \supseteq H \supseteq G
                               \setminus \left \{ \alpha \right\}} \!\!
                          (-1)^{ \left| H-G \setminus \left \{\alpha\right\}\right|}
            \biggl(\Bigl( R_{ H^{}_{<\alpha}}^{( L^{}_{<\alpha})} ( \pi^{}_{ < \alpha \cdot } p )\Bigr)
                      \otimes \Bigl( R_{ H^{}_{ > \alpha}}^{( L^{}_{ > \alpha})}
                              ( \pi^{}_{ > \alpha \cdot} p) \Bigr)\biggl) \\
             &\,=\,&  \!\!\!\! \sum_{ L^{}_ {< \alpha} \supseteq H^{}_{ < \alpha} \supseteq G^{}_{<\alpha}} \!\!\!\!
                     (-1)^{  \left| H_{ < \alpha}-G^{}_{ < \alpha} \right|} \!\!\!\!
                     \sum_{  L^{}_{ > \alpha} \supseteq H^{}_{ > \alpha} \supseteq G^{}_{ > \alpha}} \!\!\!\!
                     (-1)^{ \left| H^{}_{ > \alpha} - G^{}_{ > \alpha} \right|}
                     \biggl( R_{ H^{}_{ < \alpha}}^{ (L^{}_{ < \alpha})}
                           ({ \pi^{}_{ < \alpha \cdot } p} )
                     \otimes  R_{  H^{}_{ > \alpha}}^{ ( L^{}_{ > \alpha})}
                           ({ \pi^{}_{ > \alpha \cdot }p }) \biggr)  \\
                  &\,=\,&\Bigl( \sum_{ L^{}_{ < \alpha} \supseteq H^{}_{ < \alpha } \supseteq G^{}_{ < \alpha}}
                           (-1)^{ \left| H^{}_{ < \alpha }-G^{}_{ < \alpha} \right|}
                     \Bigl( R_{ H^{}_{ < \alpha}}^{ (L^{}_{ < \alpha})} ( \pi^{}_{ < \alpha \cdot} p) \Bigr) \Bigr)  \\
                     &\phantom{=}& \otimes \Bigl( \sum_{ L^{}_{ > \alpha } \supseteq H^{}_{ > \alpha } \supseteq G^{}_{ > \alpha}}
                           (-1)^{ \left| H^{}_{ > \alpha } - G^{}_{ > \alpha} \right|}
                      \Bigl( R_{ H^{}_{ > \alpha}}^{( L^{}_{ > \alpha})} (\pi^{}_{>\alpha\cdot}p)\Bigr)\Bigr)\\
                  &\,=\,& \Bigl(T_{ G^{}_{ < \alpha}}^{( L^{}_{ < \alpha})} ( \pi^{}_{ < \alpha \cdot} p )\Bigr)
                      \otimes \Bigl( T_{ G^{}_{ > \alpha}}^{( L^{}_{ > \alpha})} ( \pi^{}_{ > \alpha \cdot} p )\Bigr) \, ,
                    \end{eqnarray*}
    which establishes the claim. \qed
\end{proof}
Using this argument iteratively on the respective segments, one easily obtains
\begin{equation} \label{RundT}
    T_G^{(L)} (p) \,=\, \Bigl(T_{ \varnothing}^{( I^{}_0)} ( \pi^{}_{ J^{}_0 \cdot}p )\Bigr) \otimes 
                \Bigl(T_{ \varnothing}^{( I^{}_1)} ( \pi^{}_{ J^{}_1 \cdot}p )\Bigr) \otimes\cdots\otimes 
               \Bigl( T_{ \varnothing}^{( I_{\vert G\vert})}(\pi^{}_{J_{ \vert G \vert} \cdot } p )\Bigr)\, ,
\end{equation}
where the upper index specifies the corresponding subsystems associated with $G$,
compare~\eqref{linksegments}.
Hence, the effect of $T_G^{}$ on the full system is given by that of
$T_{\varnothing}^{}$ on the respective subsystems corresponding to $G$.

Our goal is now to study the effect of the $R^{}_G$ and $T^{}_G$ on $\varPhi$,
the right-hand side of the recombination equation~\eqref{solutiondiscrete}. 
This will show us in more detail when and why the LDE operators
from the continuous-time model are not sufficient for solving the discrete-time model
and, at the same time, will direct us to the new transformation.

If $\varPhi^{ (L)}$ 
denotes the right-hand side of the recombination equation on the full simplex
$\mathcal{P}( X^{}_0 \times \cdots \times X^{}_n)$,
then, for any contiguous $I = \left\{ \alpha, \ldots, \beta \right\} \subseteq L$,
the right-hand side of the recombination equation on the subsimplex
$\mathcal{P} (X^{}_{ \lfloor \alpha \rfloor} \times \cdots \times X^{}_{ \lceil \beta \rceil})$ 
will be denoted with $\varPhi^{ ( I )}$.
Again, we suppress the upper index when the simplex is obvious.

\begin{prop}
    For the right-hand side of the recombination equation,
    \[ \varPhi^{ ( L )} (p) \,=\, \eta \, p +
       \sum_{ \alpha \in L } \rho^{}_{ \alpha} R_{ \alpha }^{(L)} (p) 
            \,=\, p + \sum_{ \alpha \in L } \rho^{}_{ \alpha } 
                 \bigl( R_{\alpha}^{( L)}-\one^{( L)} \bigr) (p)  \, ,
    \]
    one finds
    \[    
      R^{(L)}_{ \alpha} \bigl( \varPhi^{ ( L ) } (p) \bigr)\, =\,
              \Bigl(\varPhi^{ ( L_{ < \alpha} )} ( \pi^{}_{ < \alpha \cdot } p)\Bigr) \otimes
              \Bigl(\varPhi^{ ( L_{ > \alpha} )} ( \pi^{}_{ > \alpha \cdot} p )\Bigr)
    \]
    for every $\alpha\in L$ and $p\in \mathcal{P}(X)$.
\end{prop}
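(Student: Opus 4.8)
The plan is to peel off the definition of the recombinator and then exploit a distinction that is easy to overlook: although $R^{(L)}_{\alpha}$ is nonlinear, the two marginalisations out of which it is built are linear. By \eqref{rekombinator},
\[
   R^{(L)}_{\alpha}\bigl(\varPhi^{(L)}(p)\bigr) \,=\, \bigl(\pi^{}_{<\alpha\cdot}\varPhi^{(L)}(p)\bigr)\otimes\bigl(\pi^{}_{>\alpha\cdot}\varPhi^{(L)}(p)\bigr),
\]
so it suffices to identify each of the two marginals on the right and then reassemble the product measure. I emphasise that one must \emph{not} distribute $R^{(L)}_{\alpha}$ over the sum $\varPhi^{(L)}(p)=\eta\,p+\sum_{\beta\in L}\rho^{}_{\beta}R^{(L)}_{\beta}(p)$, since the recombinators are nonlinear; rather, one distributes the genuinely linear operators $\pi^{}_{<\alpha\cdot}$ and $\pi^{}_{>\alpha\cdot}$ across that sum.

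First I would treat the leading marginal. Pushing $\pi^{}_{<\alpha\cdot}$ through the sum by linearity and then applying the first identity in \eqref{elemprop3} to each term $\pi^{}_{<\alpha\cdot}R^{(L)}_{\beta}(p)$, I distinguish two cases: for $\beta<\alpha$ the term becomes $R^{(L_{<\alpha})}_{\beta}(\pi^{}_{<\alpha\cdot}p)$, whereas for every $\beta\geq\alpha$ (including $\beta=\alpha$) it collapses to $\pi^{}_{<\alpha\cdot}p$, since marginalising onto the leading sites past a cut at or beyond $\alpha$ restores the unrecombined leading marginal. Collecting terms, all links $\beta\geq\alpha$ merge with the $\eta$-term, so that the prefactor of $\pi^{}_{<\alpha\cdot}p$ becomes $\eta+\sum_{\beta\geq\alpha}\rho^{}_{\beta}=1-\sum_{\beta<\alpha}\rho^{}_{\beta}$; this is precisely the no-crossover weight of the leading subsystem. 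Hence the leading marginal equals $\varPhi^{(L_{<\alpha})}(\pi^{}_{<\alpha\cdot}p)$.

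By the mirror-image argument, using the second identity in \eqref{elemprop3}, the trailing marginal equals $\varPhi^{(L_{>\alpha})}(\pi^{}_{>\alpha\cdot}p)$, with the links $\beta\leq\alpha$ now reabsorbed into the trailing no-crossover weight. Substituting both marginals into the product-measure expression above yields the assertion. I expect the only real obstacle to be the bookkeeping of the constant term: one has to check that the probabilities $\rho^{}_{\beta}$ of the links lying outside the relevant subsystem are correctly folded back into that subsystem's $\eta$, so that the reduced expression is a genuine $\varPhi$ on the subsystem and not merely some linear combination of recombinators. Keeping the cases $\beta<\alpha$, $\beta=\alpha$ and $\beta>\alpha$ strictly apart makes this transparent and forestalls the tempting but here fatal error of treating $R^{}_{\alpha}$ as a linear map.
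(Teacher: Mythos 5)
Your proposal is correct and takes essentially the same route as the paper's proof: both expand $R^{(L)}_{\alpha}\circ\varPhi^{(L)}$ via the product of the two \emph{linear} marginalisations $\pi^{}_{<\alpha\cdot}$ and $\pi^{}_{>\alpha\cdot}$, push these through the sum defining $\varPhi^{(L)}(p)$, and identify each marginal as $\varPhi$ on the corresponding subsystem by means of \eqref{elemprop3}. The only cosmetic difference is bookkeeping: the paper uses the form $p+\sum_{\beta}\rho^{}_{\beta}\bigl(R^{}_{\beta}-\one\bigr)(p)$, so the terms with $\beta\geq\alpha$ vanish outright because $\bigl(R^{(L_{<\alpha})}_{\varnothing}-\one\bigr)(\pi^{}_{<\alpha\cdot}p)=0$, while you use the form $\eta\,p+\sum_{\beta}\rho^{}_{\beta}R^{}_{\beta}(p)$ and fold those links into the prefactor $\eta+\sum_{\beta\geq\alpha}\rho^{}_{\beta}=1-\sum_{\beta<\alpha}\rho^{}_{\beta}$, which is the correct no-crossover weight of the leading subsystem---an equivalent calculation.
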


\begin{proof}
    Since
       $R^{(L)}_{ \alpha } \bigl( \varPhi^{ ( L ) } ( p ) \bigr) \,=\,
      \Bigl(\pi^{}_{ < \alpha \cdot } \bigl( \varPhi^{ ( L ) } ( p ) \bigr)\Bigr) \otimes
      \Bigl(\pi^{}_{ > \alpha \cdot } \bigl( \varPhi^{ ( L ) } ( p ) \bigr)\Bigr)$,
    we obtain with the help of~\eqref{elemprop3}: 
    \begin{eqnarray*}
       \lefteqn{ \pi^{}_{ < \alpha \cdot } \bigl( \varPhi^{ ( L ) } ( p ) \bigr) \,=\,
       \pi^{}_{ < \alpha \cdot } 
         \Bigl(p + \sum_{ \beta \in L } \rho^{}_{ \beta } \bigl( R_{ \beta }^{ ( L)}
                                   -\one^{(L)}\bigr) (p) \Bigr)} \\
       &\,=\,&  \pi^{}_{ < \alpha \cdot } p + \sum_{ \beta < \alpha} \rho^{}_{ \beta } 
         ( R_{ \beta }^{( L^{}_{ < \alpha} )} - \one^{(L^{}_{ < \alpha }) }) 
            (\pi^{}_{ < \alpha \cdot } p) +  
         \sum_{ \beta \geq \alpha} \rho^{}_{ \beta } 
         ( R_{ \varnothing }^{( L^{}_{ < \alpha}) } - \one^{(L^{}_{ < \alpha }) }) 
            (\pi^{}_{ < \alpha \cdot } p)  \\
       &\,=\,& \pi^{}_{ < \alpha \cdot } p + \sum_{\beta < \alpha} \rho^{}_{ \beta } 
          ( R_{ \beta }^{ ( L^{}_{ < \alpha} )} - \one^{(L^{}_{ < \alpha }) }) 
            (\pi^{}_{ < \alpha \cdot } p) 
       \, =\,  \varPhi^{ ( L^{}_{ < \alpha} )} ( \pi^{}_{ < \alpha \cdot } p) \, .
    \end{eqnarray*}
    Analogously, one obtains 
    $\pi^{}_{ > \alpha \cdot } \left( \varPhi^{ ( L ) } ( p ) \right)
     = \varPhi^{ ( L^{}_{ > \alpha} )} ( \pi^{}_{ > \alpha \cdot} p )$, 
     and the assertion follows. \qed
\end{proof}
More generally, this theorem implies inductively that
\begin{equation}\label{recomonphi}
   R^{(L)}_G ( \varPhi^{ (L) } (p)) \, = \,
                \Bigl(\varPhi^{ ( I^{}_0 ) } ( \pi^{}_{ J^{}_0 \cdot} p )\Bigr) \otimes
                \Bigl(\varPhi^{ ( I^{}_1 ) } ( \pi^{}_{ J^{}_1 \cdot} p )\Bigr) \otimes \cdots \otimes
                \Bigl(\varPhi^{ ( I^{}_{\vert G\vert} )} (\pi^{}_{J^{}_{ \vert G \vert}\cdot} p )\Bigr) \, .
\end{equation}
Finally, for the interaction between the $T^{(L)}_G$ and $\varPhi^{ (L)}$,
we have the following result.

\begin{prop}\label{thm:prodLDEops}
    For the LDE operators~\eqref{trafo} and all $G\subseteq L$,
    one has
    \[
         T_G^{(L)} \bigl( \varPhi^{ ( L) } (p) \bigr) \, =\,
         \Bigl(T_{ \varnothing}^{( I^{}_0)} \bigl( \varPhi^{ ( I^{}_0 )} ( \pi^{}_{J^{}_0 \cdot} p) \bigr)\Bigr)
         \otimes\Bigl( T_{ \varnothing}^{( I^{}_1)}\bigl( \varPhi^{( I^{}_1)} ( \pi^{}_{J^{}_1 \cdot} p ) \bigr)\Bigr)
         \otimes \cdots \otimes 
         \Bigl(T_{ \varnothing}^{( I^{}_{\vert G\vert})}
         \bigl( \varPhi^{( I^{}_{\vert G\vert} ) } ( \pi^{}_{J^{}_{\vert G\vert} \cdot} p ) \bigr)\Bigr) \, ,
       \]
    with $I^{}_0,\ldots, I^{}_{\vert G\vert}$ according to \eqref{linksegments}.
\end{prop}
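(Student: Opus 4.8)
The plan is to derive the claim from the product structure \eqref{RundT} of the LDE operators that has just been established, but applied now to the recombined measure $\varPhi^{(L)}(p)$ in place of $p$ itself. Reading \eqref{RundT} with input $\varPhi^{(L)}(p)$ gives at once
\[
 T_G^{(L)}\bigl(\varPhi^{(L)}(p)\bigr) \,=\,
 \Bigl(T_{\varnothing}^{(I^{}_0)}\bigl(\pi^{}_{J^{}_0\cdot}\varPhi^{(L)}(p)\bigr)\Bigr)\otimes\cdots\otimes
 \Bigl(T_{\varnothing}^{(I^{}_{\vert G\vert})}\bigl(\pi^{}_{J^{}_{\vert G\vert}\cdot}\varPhi^{(L)}(p)\bigr)\Bigr)\, ,
\]
so that everything reduces to the single \emph{marginalisation identity}
\[
 \pi^{}_{J^{}_k\cdot}\,\varPhi^{(L)}(p) \,=\, \varPhi^{(I^{}_k)}\bigl(\pi^{}_{J^{}_k\cdot}p\bigr)
 \qquad\text{for each } 0\le k\le\vert G\vert \, ,
\]
which asserts that marginalising the full recombination step onto the block $J^{}_k$ agrees with running the subsystem recombination step $\varPhi^{(I^{}_k)}$ on the corresponding marginal. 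Substituting this identity, factor by factor, into the display above yields the assertion.

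It remains to establish the marginalisation identity, and here I see two routes. The quickest is to read it off from \eqref{recomonphi}. By repeated application of Proposition~\ref{thm:product} one has $R^{(L)}_G(q)=(\pi^{}_{J^{}_0\cdot}q)\otimes\cdots\otimes(\pi^{}_{J^{}_{\vert G\vert}\cdot}q)$, i.e.\ $R^{(L)}_G$ simply forms the product of the block marginals of $q$. Applying this with $q=\varPhi^{(L)}(p)$ and comparing with the right-hand side of \eqref{recomonphi} gives two product-measure representations of the same measure $R^{(L)}_G(\varPhi^{(L)}(p))$, whose $k$-th factors are $\pi^{}_{J^{}_k\cdot}\varPhi^{(L)}(p)$ and $\varPhi^{(I^{}_k)}(\pi^{}_{J^{}_k\cdot}p)$, respectively. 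Since each of these is a measure on $X^{}_{J^{}_k}$ and the $J^{}_k$-marginal of a product measure is precisely its $k$-th factor, the two must coincide, which is the identity sought.

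The alternative, more self-contained route is to prove the marginalisation identity directly from the one-sided commutation relations obtained in the proof of the preceding proposition, namely $\pi^{}_{<\alpha\cdot}\varPhi^{(L)}(p)=\varPhi^{(L^{}_{<\alpha})}(\pi^{}_{<\alpha\cdot}p)$ and $\pi^{}_{>\alpha\cdot}\varPhi^{(L)}(p)=\varPhi^{(L^{}_{>\alpha})}(\pi^{}_{>\alpha\cdot}p)$. For the end blocks $J^{}_0$ and $J^{}_{\vert G\vert}$ this is a single application. For a middle block $J^{}_k=\{\lceil\alpha^{}_k\rceil,\ldots,\lfloor\alpha^{}_{k+1}\rfloor\}$ one first marginalises onto $\{\lceil\alpha^{}_k\rceil,\ldots,n\}$ via $\pi^{}_{>\alpha^{}_k\cdot}$ and then onto the leading segment of that subsystem; two applications of the one-sided relations (the second now inside the subsystem $L^{}_{>\alpha^{}_k}$) give the claim.

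I expect the only genuine obstacle to lie in the middle blocks of the second route, where $J^{}_k$ is neither a leading nor a trailing segment of $S$, so that a single one-sided relation does not suffice and one must chain two projections while keeping track of the subsystem on which $\varPhi$ acts and of its (restricted) crossover probabilities $\rho^{}_\alpha$. The first route sidesteps this issue completely, at the modest cost of invoking the product-of-marginals description of $R^{(L)}_G$ and the uniqueness of the factors of a product measure; I would therefore present that route as the main argument and mention the chaining argument only as a remark.
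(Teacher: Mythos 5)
Your proposal is correct, and it rests on exactly the same two prior results as the paper's proof --- \eqref{recomonphi} and \eqref{RundT} --- but combines them in a different order, with one extra lemma in place of a one-line trick. The paper's argument is a single chain: it first inserts $R^{}_G$ via the absorption identity $T^{}_G \circ R^{}_G = T^{}_G$ (noted after \eqref{trafo-2}, a consequence of \eqref{composition}), then replaces $R_G^{(L)}\bigl(\varPhi^{(L)}(p)\bigr)$ by the product \eqref{recomonphi}, and finally applies \eqref{RundT} to that product measure, whose block marginals are precisely the factors $\varPhi^{(I^{}_k)}(\pi^{}_{J^{}_k\cdot}p)$. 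You instead apply \eqref{RundT} directly to $\varPhi^{(L)}(p)$ and then prove the intertwining identity $\pi^{}_{J^{}_k\cdot}\varPhi^{(L)}(p) = \varPhi^{(I^{}_k)}(\pi^{}_{J^{}_k\cdot}p)$, which the paper never states in this form (only its one-sided versions for $\pi^{}_{<\alpha}$ and $\pi^{}_{>\alpha}$ appear, inside the preceding proof). Both of your routes to that identity are sound: iterating Proposition~\ref{thm:product} indeed gives $R_G^{(L)}(q)=\bigotimes_k \pi^{}_{J^{}_k\cdot}q$, and comparison with \eqref{recomonphi} for $q=\varPhi^{(L)}(p)$ identifies the factors --- though you should state explicitly that the appeal to uniqueness of product-measure factors is legitimate here because every factor is a \emph{probability} measure (each $\varPhi^{(I^{}_k)}$ is a convex combination of $\one$ and recombinators, hence preserves the subsimplex); for unnormalised or signed measures this uniqueness would fail. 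Your chaining route through the subsystem $L^{}_{>\alpha^{}_k}$, keeping the unrenormalised restricted probabilities $\rho^{}_{\beta}$ with $\beta\in I^{}_k$, also goes through, since the computation in the preceding proof uses only \eqref{elemprop3} and applies verbatim within any subsystem. The trade-off: the paper's use of $T^{}_G\circ R^{}_G = T^{}_G$ sidesteps any marginalisation lemma and is shorter, whereas your version isolates a fact of independent interest --- the block marginals evolve autonomously under the corresponding subsystem dynamics --- which is really the substance behind \eqref{recomonphi} itself.
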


\begin{proof}
    Using \eqref{recomonphi} and \eqref{RundT}, one calculates 
     \begin{eqnarray*}
        T_G^{(L)} \bigl( \varPhi^{ ( L ) } (p) \bigr)
          &\,=\,& T_G^{(L)} \Bigl( R^{(L)}_G \bigl( \varPhi^{ ( L )} (p) \bigr) \Bigr)  
          \,=\, T_G^{(L)}\Bigl(
              \bigl( \varPhi^{ ( I^{}_0 )} (\pi^{}_{ J^{}_0 \cdot }p )\bigr) \otimes \cdots \otimes
              \bigl(\varPhi^{ ( I^{}_{ \vert G \vert} )}(\pi^{}_{J_{ \vert G \vert} \cdot}p)\bigr)\Bigr)  \\
          &\,=\,& \Bigl(T_{ \varnothing }^{( I^{}_0)} \bigl( \varPhi^{ ( I^{}_0 )} ( \pi^{}_{J^{}_0\cdot} p )\bigr )\Bigr)
              \otimes \cdots \otimes \Bigl(T_{ \varnothing}^{(I^{}_{ \vert G \vert})}
              \bigl( \varPhi^{ ( I^{}_{ \vert G \vert} ) } 
               ( \pi^{}_{J^{}_{ \vert G \vert } \cdot} p) \bigr)\Bigr) \, ,
\end{eqnarray*}
    which establishes the formula. \qed
\end{proof}
This result is of particular significance since it shows that,
to determine the effect of the $T^{(L)}_G$  on $\varPhi$,
it is sufficient to know the action of the 
$T^{}_{\varnothing}$ on the subsystems that correspond to $G$.
Hence, we  now  need to determine  $T^{}_{ \varnothing} \circ \varPhi$.
It will turn out that this relies crucially on the commutators of $R^{}_G$ with $\varPhi$,
which will be the subject of the next section.

\section{The commutator and linearisation}\label{sec:comm}
The more algebraic approach of \cite{MB}, which was later 
generalised by Popa~\cite{Popa}, suggests to further analyse the problem in terms of
commuting versus non-commuting quantities.
For $G\subseteq L$, the commutator is defined as 
$\left[ \, R^{}_G, \varPhi \, \right ] := R^{}_G \circ \varPhi - \varPhi \circ R^{}_G$.
Recall that, in the continuous-time model, the linear action of the recombinators
on the solution of the differential equations entails that the corresponding
forward flow commutes with each recombinator (see Corollary~\ref{coro1}). 
But this no longer holds for discrete time:
$\left[ \, R^{}_G, \varPhi \, \right ] = 0$ is not true in general. 
We are interested in the commutators because --- as we will see  in a moment --- 
they lead us to the evaluation of 
$T^{}_{ \varnothing} \circ \varPhi$, and this in turn gives $T^{}_G \circ \varPhi$
(see Proposition~\ref{thm:prodLDEops}).

\begin{prop}\label{prop:tempty}
    Let $\eta = 1- \sum_{ \alpha \in L}\rho_{ \alpha}$ as before. 
    On $\mathcal{P}(X)$, one has
    \[
       T^{}_{ \varnothing} \circ \varPhi \,=\, \eta \, T^{}_{ \varnothing } +
           \sum_{ G \subseteq L}(-1)^{ \left| G \right|} \left[ \, R^{}_G,\varPhi \, \right].
    \]
\end{prop}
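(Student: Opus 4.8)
The plan is to start from the definition of the LDE operator in Eq.~\eqref{trafo} and expand $T^{}_{\varnothing}\circ\varPhi$ directly. By definition,
\[
   T^{}_{\varnothing} \,=\, \sum_{H\subseteq L}(-1)^{\lvert H\rvert} R^{}_H \, ,
\]
so that $T^{}_{\varnothing}\circ\varPhi = \sum_{H\subseteq L}(-1)^{\lvert H\rvert} R^{}_H\circ\varPhi$. The idea is to insert the commutator by writing $R^{}_H\circ\varPhi = \varPhi\circ R^{}_H + [\,R^{}_H,\varPhi\,]$ for each $H$, thereby splitting the sum into a `commuting' part and a `commutator' part. The commutator part immediately gives $\sum_{H\subseteq L}(-1)^{\lvert H\rvert}[\,R^{}_H,\varPhi\,]$, which is exactly the second term in the claim (after renaming $H$ to $G$), so the whole task reduces to showing that the commuting part equals $\eta\,T^{}_{\varnothing}$.

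For the commuting part, I would substitute the explicit form $\varPhi = \one + \sum_{\alpha\in L}\rho^{}_{\alpha}(R^{}_{\alpha}-\one)$ into $\sum_{H\subseteq L}(-1)^{\lvert H\rvert}\varPhi\circ R^{}_H$ and distribute. The leading term $\sum_{H}(-1)^{\lvert H\rvert} R^{}_H = T^{}_{\varnothing}$ reproduces $T^{}_{\varnothing}$. The remaining terms are $\sum_{\alpha\in L}\rho^{}_{\alpha}\sum_{H\subseteq L}(-1)^{\lvert H\rvert}(R^{}_{\alpha}-\one)R^{}_H$. Here the key algebraic fact is Eq.~\eqref{composition}, $R^{}_{\alpha}R^{}_H = R^{}_{H\cup\{\alpha\}}$: when $\alpha\in H$ this equals $R^{}_H$, so $(R^{}_{\alpha}-\one)R^{}_H = 0$ and those terms drop out; only $H$ with $\alpha\notin H$ survive. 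I expect the surviving inner sum $\sum_{H\not\ni\alpha}(-1)^{\lvert H\rvert}(R^{}_{H\cup\{\alpha\}}-R^{}_H)$ to vanish for each fixed $\alpha$ by a sign-cancellation/pairing argument on the Boolean lattice, so that the entire $\rho^{}_{\alpha}$-correction collapses. If the corrections vanish, the commuting part is just $T^{}_{\varnothing}$, and the $\eta$ must instead be produced a different way.

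On reflection, the cleaner route is to mirror the `wishful calculation' already displayed in the excerpt for $T^{}_G\circ\varPhi$ under the (false) commutation hypothesis, but now honestly tracking the commutator. In that display, the author computes $T^{}_G\circ\varPhi$ by pushing $\varPhi$ through each $R^{}_H$ and using that $(R^{}_{\alpha}-\one)R^{}_H=0$ whenever $\alpha\in H\supseteq G$; the net effect there was precisely the factor $\bigl(1-\sum_{\alpha\in L\setminus G}\rho^{}_{\alpha}\bigr)T^{}_G$. Specialising that same computation to $G=\varnothing$ gives $\bigl(1-\sum_{\alpha\in L}\rho^{}_{\alpha}\bigr)T^{}_{\varnothing} = \eta\,T^{}_{\varnothing}$ as the value one would get \emph{if} $R^{}_H$ commuted with $\varPhi$. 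Since commutation fails, the honest identity is obtained by adding back exactly the commutators one was not entitled to discard, namely $\sum_{H\subseteq L}(-1)^{\lvert H\rvert}[\,R^{}_H,\varPhi\,]$. So I would write $R^{}_H\circ\varPhi = \varPhi\circ R^{}_H + [\,R^{}_H,\varPhi\,]$, carry out the $\varPhi\circ R^{}_H$ part verbatim as in the wishful calculation (which is legitimate, since it uses only Proposition~\ref{recoprop} and not commutation) to get $\eta\,T^{}_{\varnothing}$, and collect the commutators as the remainder.

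\medskip
\noindent\emph{Main obstacle.} The delicate point is making the sign-cancellation in the $\varPhi\circ R^{}_H$ part fully rigorous: one must verify that, for the $G=\varnothing$ specialisation, the only surviving contribution from $\sum_{\alpha\in L}\rho^{}_{\alpha}\sum_{H}(-1)^{\lvert H\rvert}\varPhi R^{}_H$ combines with the leading $T^{}_{\varnothing}$ to give exactly the coefficient $\eta = 1-\sum_{\alpha\in L}\rho^{}_{\alpha}$, with no leftover recombinator terms. This is a careful bookkeeping of signs over subsets $H\subseteq L$ together with the idempotent/absorption identity $R^{}_{\alpha}R^{}_H=R^{}_{H\cup\{\alpha\}}$, and getting the combinatorics exactly right (rather than off by a surviving $R$-term) is where the real care is needed.
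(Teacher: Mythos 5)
Your second route is exactly the paper's proof: split $R^{}_G\circ\varPhi=\varPhi\circ R^{}_G+[\,R^{}_G,\varPhi\,]$, collect the commutators, and evaluate $\sum_{G\subseteq L}(-1)^{\lvert G\rvert}\,\varPhi\circ R^{}_G$ by inserting the explicit form of $\varPhi$ (the paper uses $\varPhi=\eta\,\one+\sum_{\alpha\in L}\rho^{}_{\alpha}R^{}_{\alpha}$) and the absorption identity $R^{}_{\alpha}R^{}_G=R^{}_{G\cup\{\alpha\}}$; for each $\alpha$ and each $G\not\ni\alpha$ the term $(-1)^{\lvert G\rvert}\rho^{}_{\alpha}R^{}_{\alpha}R^{}_G$ is paired against $(-1)^{\lvert G\cup\{\alpha\}\rvert}\rho^{}_{\alpha}R^{}_{G\cup\{\alpha\}}$, and what remains is $\eta\,T^{}_{\varnothing}$. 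Your observation that the corresponding steps of the ``wishful calculation'' use only Proposition~\ref{recoprop}, never the false commutation hypothesis, is the correct justification for recycling them here.

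However, the expectation in your first route is false, and since your closing paragraph shows you left it unresolved, let me settle it: the inner sum does \emph{not} vanish. For fixed $\alpha$, substituting $K=H\cup\{\alpha\}$ in the first half (so $(-1)^{\lvert H\rvert}=-(-1)^{\lvert K\rvert}$) gives
\[
\sum_{\substack{H\subseteq L\\ \alpha\notin H}}(-1)^{\lvert H\rvert}\bigl(R^{}_{H\cup\{\alpha\}}-R^{}_H\bigr)
\;=\;-\sum_{\substack{K\subseteq L\\ \alpha\in K}}(-1)^{\lvert K\rvert}R^{}_K
\;-\sum_{\substack{H\subseteq L\\ \alpha\notin H}}(-1)^{\lvert H\rvert}R^{}_H
\;=\;-\,T^{}_{\varnothing}\,,
\]
because the two halves reassemble the full alternating sum over all subsets of $L$. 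Hence the $\rho^{}_{\alpha}$-correction is $-\rho^{}_{\alpha}T^{}_{\varnothing}$ for each $\alpha$, and the commuting part equals $T^{}_{\varnothing}+\sum_{\alpha\in L}\rho^{}_{\alpha}\,(-T^{}_{\varnothing})=\bigl(1-\sum_{\alpha\in L}\rho^{}_{\alpha}\bigr)T^{}_{\varnothing}=\eta\,T^{}_{\varnothing}$, with no leftover recombinator terms. So there is no ``different way'' in which the $\eta$ must be produced: the non-vanishing of precisely this sum is where $\eta$ comes from, and your two routes are in fact the same computation --- the one the paper carries out.
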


\begin{proof}
     Expressing the left-hand side as
    \[
       T^{}_{ \varnothing} \circ \varPhi \,=\, \sum_{ G \subseteq L} 
         (-1)^{ \left| G\right|} ( R^{}_G \circ \varPhi ) \,=\, 
         \sum_{ G \subseteq L} (-1)^{ \left| G \right|} ( \varPhi \circ R^{}_G )
        +\sum_{ G\subseteq L} (-1)^{ \left| G \right|} \left[ \, R^{}_G, \varPhi \, \right] \, ,
    \]
    and using  $\varPhi^{ } = \eta \thinspace\one +
       \sum_{ \alpha \in L } \rho^{}_{ \alpha} R_{ \alpha }^{}$, one calculates
\begin{eqnarray*}
    \sum_{ G \subseteq L} (-1)^{ \left| G \right|} ( \varPhi \circ R^{}_G )&\,=\,&
    \sum_{ \alpha \in L} \Bigl( \sum_{ G \subseteq L} (-1)^{ \left| G \right|}
    \rho^{}_{ \alpha} R^{}_{ \alpha } R^{}_G \Bigr)
    +\eta \sum_{ G \subseteq L} (-1)^{ \left| G \right|} R^{}_G \\[2mm]
   &\,=\,& \eta \, T^{}_{ \varnothing} + 
     \sum_{ \alpha \in L} \sum_{ \substack{ G \subseteq L\\\alpha\notin G}}
     \left( (-1)^{ \left| G \right|} \rho^{}_{ \alpha} R^{}_{ \alpha} R^{}_G +
     (-1)^{\left|G \cup \left\{ \alpha \right\} \right|}
      \rho^{}_{ \alpha} R^{}_{ G \cup \left\{ \alpha \right\}} \right)  \\
   &\,=\,& \eta \, T^{}_{ \varnothing} + \sum_{\alpha\in L}
       \Bigl( \sum_{ \substack{G \subseteq L\\\alpha\notin G}}
     (-1)^{ \left| G \right|} \rho^{}_{ \alpha}
     ( R^{}_{ \alpha} R^{}_G - R^{}_{ G \cup \left\{\alpha\right\}}) \Bigr) \,=\, 
     \eta T^{}_{ \varnothing} \, ,
\end{eqnarray*}
    which shows the claim. \qed
\end{proof}
Proposition~\ref{prop:tempty} shows that 
$T^{}_{\varnothing}$ only yields a diagonal component
if \emph{all} recombinators commute with $\varPhi$.
We now need to determine the commutator $\left[ \, R^{}_G,\varPhi \, \right]$.
To this end, it is advantageous to introduce a new set of operators.
\begin{definition} \label{defin:newop}
For $G \subseteq K \subseteq L$, we define the operators
    \begin{equation}\label{newoperators}
          \widetilde{T}^{}_{G,K} \,:= \sum_{ G \subseteq H \subseteq K} 
                                  (-1)^{ \left| H-G \right| } R^{}_H \, .
    \end{equation}
Equivalently, for any $M\subseteq L\setminus G$, this means that
\[
    \widetilde{T}^{}_{ G, G \dot{\cup} M} \, = \sum_{G\subseteq H\subseteq G \dot{\cup} M}
                                      (-1)^{ \left| H-G \right| } R^{}_H
        = \sum_{ K \subseteq M}  (-1)^{ \left| K \right| } R^{}_{G \dot{\cup} K} \, .
\]
\end{definition}
These operators act on the full simplex and can be interpreted in analogy 
to the original LDE operators \eqref{trafo}, where the links in the complement of
$G \dot{\cup} M$ (the disjoint union of $G$ and $M$) are regarded as inseparable.
If necessary, we will specify the system the operators are acting on 
by an upper index as before. 

\begin{lemma}
    On $\mathcal{P}(X)$, the operators~\eqref{newoperators} satisfy
    \[
      \widetilde{T}^{}_{ G, G }\, =\, R^{}_G \quad \text{and} 
      \quad \widetilde{T}^{}_{ G, L } \,=\, T^{}_G \, .
     \]
   They have a product structure, 
 \begin{equation}\label{prodtilde}
         \widetilde{T}_{G,G\dot{\cup} H}^{(L)} (p) \,=\,
         \Bigl(\widetilde{T}_{ \varnothing, H\cap I^{G}_0}^{( I^{G}_0)}  ( \pi^{}_{J^{G}_0 \cdot} p)\Bigr) 
         \otimes \Bigl(\widetilde{T}_{ \varnothing, H \cap I^{G}_1}^{ (I^{G}_1)}( \pi^{}_{J^{G}_1 \cdot} p )\Bigr) 
         \otimes \cdots \otimes 
         \Bigl(\widetilde{T}_{ \varnothing, H \cap I^{G}_{\vert G\vert}}^{ (I^{G}_{\vert G\vert})} 
          ( \pi^{}_{J^{G}_{\vert G\vert} \cdot} p )\Bigr) \, ,
       \end{equation}
 for all $H\subseteq L\setminus G$.
    Moreover, one has 
           \begin{equation}\label{newTtilde1}
            \widetilde{T}^{}_{ G, G \dot{\cup} M}\, = 
                 \sum_{ G \subseteq H \subseteq L \setminus M } T^{}_H \,
                  = \sum_{K\subseteq L\setminus{(M \cup G)}} T^{}_{G \dot{\cup} K} 
           \end{equation}
     for all $G, M\subseteq L$ with $G\cap M = \varnothing$.
   Consequently, M\"{o}bius inversion returns $T^{}_G$ as
    \begin{equation} \label{newTtilde2}
                 T^{}_G \,= \sum_{G \subseteq H \subseteq L \setminus M }
                 (-1)^{ \vert H-G\vert} \, \widetilde{T}^{}_{ H, H \dot{\cup} M} \, .
            \end{equation}
\end{lemma}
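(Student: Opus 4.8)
The plan is to treat the four assertions in turn, the two boundary identities and the final inversion being quick consequences of the definitions and of the middle two statements. The boundary identities are immediate from \eqref{newoperators}: taking $K=G$ leaves only the summand $H=G$, so $\widetilde{T}^{}_{G,G}=R^{}_G$, while taking $K=L$ reproduces exactly the defining sum \eqref{trafo} of $T^{}_G$, whence $\widetilde{T}^{}_{G,L}=T^{}_G$.

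For the product structure \eqref{prodtilde} I would start from the equivalent form $\widetilde{T}^{}_{G,G\dot{\cup} H}=\sum_{K\subseteq H}(-1)^{|K|}R^{}_{G\dot{\cup} K}$ recorded in Definition~\ref{defin:newop}. The key observation is that, since $H\subseteq L\setminus G$, no link of any $K\subseteq H$ lies in $G$, so $G\cap I^{G}_k=\varnothing$ gives $(G\dot{\cup} K)\cap I^{G}_k=K\cap I^{G}_k$ for every segment. Iterating Proposition~\ref{thm:product} over the links of $G$ (exactly as in the derivation of \eqref{RundT}) therefore factorises each recombinator into the $G$-segments as $R^{(L)}_{G\dot{\cup} K}(p)=\bigotimes_{k=0}^{\lvert G\rvert}R^{(I^{G}_k)}_{K\cap I^{G}_k}(\pi^{}_{J^{G}_k\cdot}p)$. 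Because $H$ is the disjoint union of the pieces $H\cap I^{G}_k$, a subset $K\subseteq H$ is precisely an independent choice of $K_k:=K\cap I^{G}_k\subseteq H\cap I^{G}_k$, and $(-1)^{|K|}=\prod_k(-1)^{|K_k|}$. Hence the sum over $K$ splits as a tensor product of sums over the $K_k$, and each factor is recognised as $\widetilde{T}^{(I^{G}_k)}_{\varnothing,H\cap I^{G}_k}(\pi^{}_{J^{G}_k\cdot}p)$, which is \eqref{prodtilde}.

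For \eqref{newTtilde1} I would substitute the M\"obius relation \eqref{trafo-2}, $R^{}_H=\sum_{H'\supseteq H}T^{}_{H'}$, into the definition of $\widetilde{T}^{}_{G,G\dot{\cup} M}$ and exchange the order of summation. Collecting the coefficient of a fixed $T^{}_{H'}$ (which requires $G\subseteq H'$, since every admissible $H$ satisfies $H\supseteq G$) leaves an alternating sum $\sum_{K\subseteq M\cap H'}(-1)^{|K|}$ over all $H=G\dot{\cup} K$ compatible with $H'$. This alternating subset sum vanishes unless $M\cap H'=\varnothing$, in which case it equals $1$; the surviving terms are exactly those with $G\subseteq H'\subseteq L\setminus M$, which gives the first equality, and reparametrising $H'=G\dot{\cup} K$ with $K\subseteq L\setminus(M\cup G)$ yields the second.

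Finally, \eqref{newTtilde2} is the M\"obius inversion of \eqref{newTtilde1} on the interval $[G,\,L\setminus M]$ of the Boolean lattice: applying \eqref{newTtilde1} with $G$ replaced by any $H$ in that interval (note $H\cap M=\varnothing$) reads $\widetilde{T}^{}_{H,H\dot{\cup} M}=\sum_{H\subseteq H'\subseteq L\setminus M}T^{}_{H'}$, an up-set relation whose inversion with lattice M\"obius function $(-1)^{\lvert H'-H\rvert}$ returns $T^{}_G$ in the stated form. The main obstacle will be the bookkeeping in the product-structure step, namely keeping the segment decomposition, the sign, and the tensor factors aligned so that the sum over $K$ genuinely factorises; the two combinatorial identities are then routine once the alternating-sum cancellation is isolated.
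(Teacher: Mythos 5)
Your proposal is correct and takes essentially the same route as the paper: the boundary cases are read off Definition~\ref{defin:newop}, the product structure is obtained by iterating Proposition~\ref{thm:product} over the links of $G$ just as in the derivation of \eqref{RundT}, relation \eqref{newTtilde1} follows by inserting \eqref{trafo-2} and collapsing the alternating sum via \eqref{binlehr}, and \eqref{newTtilde2} is M\"obius inversion on the Boolean interval between $G$ and $L\setminus M$. You merely spell out the factorisation and inversion steps that the paper dismisses as ``obvious'' or ``analogous'', and you do so correctly.
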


\begin{proof}
The first assertion is obvious;
the second is analogous to \eqref{RundT} and follows along the same lines.
Relation~\eqref{newTtilde1} is true since
\[
 \begin{split}
  \widetilde{T}^{}_{ G, G \dot{\cup} M} \,&= \sum_{K\subseteq M} (-1)^{\vert K\vert} R^{}_{G \dot{\cup} K}
     \,= \sum_{K\subseteq M} (-1)^{\vert K\vert} \sum_{H\supseteq G \dot{\cup} K} T^{}_H \\[2mm]
     &= \sum_{H\supseteq G} \sum_{\substack{K \subseteq M \\ K \subseteq H}} (-1)^{\vert K\vert}\,  T^{}_H
     \,= \sum_{H\supseteq G} T^{}_H \sum_{K\subseteq M \cap H} (-1)^{\vert K\vert} \\
     &= \sum_{H\supseteq G} \, \delta_{M \cap H , \varnothing} \, T^{}_H 
      = \sum_{G \subseteq H \subseteq L\setminus M} T^{}_H \, .
  \end{split}
\]
In the second-last step, we used that, if $H$ is a finite set, one has 
\begin{equation}\label{binlehr}
 \sum_{G\subseteq H} (-1)^{\vert G \vert} = \delta^{}_{H , \varnothing} \, ,
\end{equation}
which is the
key property of the M\"{o}bius function of ordered partitions. \qed
\end{proof}

Before we turn to the commutator, we introduce a new function,
the  \emph{separation function},
which will allow for a clear and compact notation.

\begin{definition}
For $G, H \subseteq L$ with $G \cap H = \varnothing$, we say that $G$ \emph{separates} $H$
if, 
for all $\alpha, \beta \in H$ with $\alpha < \beta$, there is a $\gamma \in G$ with
$\alpha < \gamma < \beta$.
Hence, we define the separation function as
\begin{equation*}
   \sep (G,H) = 
    \begin{cases}
     1  \text{, if $G$ separates $H$,}\\
     0  \text{, otherwise.}
     \end{cases}
\end{equation*}
In the particular cases $H = \varnothing$ and $H = \{ \alpha \}$, $\alpha \in L$,  we define
$\sep(G,H) = 1$ for all $G\subseteq L$,
and it is understood that $\sep(G,H) = 0$ whenever $G \cap H \neq \varnothing$.
\end{definition}
First, let us summarise some elementary properties
of the separation function.
\begin{lemma} \label{lemmasepfunc}
The separation function $\sep (G,H)$ with $H \subseteq L\setminus G$ has the 
following properties:
\begin{enumerate}\rm
 \item $\sep(G,H) = 0$, if $H$ contains any adjacent links;

 \item  $\sep(G,H) = 0 \ \text{implies}\ \sep(G^{'},H) = 0$ for all  
       $ G^{'} \subseteq G$;

\item  $\sep(G,H) = 0$  whenever $L\setminus G$ is contiguous with 
$H\subseteq L\setminus G$ and $\vert H\vert\geq 2$;
\item  $\sep(G,H) = 1 \ \text{implies}\ I_i^{H} \cap G \neq \varnothing$
        for all $i \in \{ 1, \ldots,\vert H\vert -1\}$. \qed
\end{enumerate}
\end{lemma}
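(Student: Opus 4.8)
The plan is to read off all four properties directly from the definition of the separation function, since each amounts to a short observation about the linear order on $L$ together with the fact that consecutive links of $L$ are equally spaced with gap $1$. Throughout, the standing hypothesis $H\subseteq L\setminus G$ guarantees $G\cap H=\varnothing$, so the degenerate convention $\sep(G,H)=0$ for $G\cap H\neq\varnothing$ never interferes, and the cases $H=\varnothing$ or $\vert H\vert=1$ (where $\sep\equiv 1$) make the four claims vacuous or immediate.

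For (1), I would observe that if $\alpha,\beta\in H$ are adjacent, say $\beta=\alpha+1$, then there is \emph{no} link $\gamma\in L$ (and a fortiori none in $G$) with $\alpha<\gamma<\beta$, precisely because the links differ by $1$. The separation requirement already fails for this single pair, so $\sep(G,H)=0$. Property (2) is then the monotonicity of separation in its first slot: if $G'\subseteq G$ separates $H$, then for every pair $\alpha<\beta$ in $H$ the witness $\gamma\in G'$ also lies in $G$, whence $G$ separates $H$; thus $\sep(G',H)=1\Rightarrow\sep(G,H)=1$, and the asserted implication is its contrapositive. For (3), I would exploit contiguity: since $L\setminus G$ is contiguous and contains $H$, every link strictly between two elements of $H$ belongs to $L\setminus G$ and hence \emph{not} to $G$. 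As $\vert H\vert\geq 2$ there is at least one pair $\alpha<\beta$ in $H$, and this pair admits no separator in $G$, giving $\sep(G,H)=0$.

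For (4), the point is to match the separation condition to the notation in \eqref{linksegments}: for $H=\{\alpha^{}_1<\cdots<\alpha^{}_{\vert H\vert}\}$, the set $I_i^{H}$ is by definition exactly the collection of links $\gamma$ with $\alpha^{}_i<\gamma<\alpha^{}_{i+1}$. Assuming $\sep(G,H)=1$ and applying the separation condition to the consecutive pair $\alpha^{}_i,\alpha^{}_{i+1}$ produces a $\gamma\in G$ with $\alpha^{}_i<\gamma<\alpha^{}_{i+1}$, that is, $\gamma\in I_i^{H}\cap G$, for each $i\in\{1,\ldots,\vert H\vert-1\}$.

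None of the four steps presents a genuine obstacle; all are direct consequences of the definition. The only care required is the bookkeeping with the conventions for $\sep$ on empty and singleton second arguments, and, in (4), the correct identification of $I_i^{H}$ with the \emph{inner} gaps of $H$ between consecutive elements rather than with the outer segments $I_0^{H}$ and $I_{\vert H\vert}^{H}$, for which no separation statement is made.
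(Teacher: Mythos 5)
Your proposal is correct: all four properties follow directly from the definition of $\sep$ and the identification of the sets $I_i^H$ in \eqref{linksegments}, exactly as you argue, including the careful handling of the conventions for $H=\varnothing$ and singletons and the distinction between the inner gaps $I_1^H,\ldots,I_{\vert H\vert-1}^H$ and the outer segments. The paper states this lemma without proof, treating it as immediate from the definition, so your elementary verification is precisely the intended argument.
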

Later, we need the following summation formula for the separation function.
\begin{lemma}\label{lemmamoresepfunc}
 Let $H,K\subseteq L$ with $H\neq\varnothing$, $H\cap K = \varnothing$,
and $I_i^{H}$ defined as in \eqref{linksegments}. Then
\[
  \sum_{G\subseteq K} (-1)^{\vert G\vert} \sep(G,H) \,=\,
      \sep(K,H)\thinspace (-1)^{\vert H\vert -1} \, \delta_{K\cap I_0^{H}, \varnothing}
          \, \delta_{K\cap I_{\vert H\vert}^{H}, \varnothing} \, .
\]
\end{lemma}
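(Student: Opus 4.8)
The plan is to exploit the partition of $L\setminus H$ into the segments $I_0^H,\ldots,I_{\vert H\vert}^H$ induced by $H$ (see \eqref{linksegments}) and to factor the alternating sum across these segments. First I would record the combinatorial meaning of separation: for $\vert H\vert\geq 2$, a set $G\subseteq L\setminus H$ satisfies $\sep(G,H)=1$ precisely when $G\cap I_\ell^H\neq\varnothing$ for every \emph{internal} segment $\ell\in\{1,\ldots,\vert H\vert-1\}$. One direction is the fourth item of Lemma~\ref{lemmasepfunc}; the converse is immediate, since any link lying strictly between consecutive elements $\alpha_\ell,\alpha_{\ell+1}$ of $H$ separates that pair, and separating every consecutive pair forces separation of every pair. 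Note that the two outer segments $I_0^H,I_{\vert H\vert}^H$ play no role in this condition.

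Since $H\cap K=\varnothing$, we have $K\subseteq L\setminus H = I_0^H\dot{\cup}\cdots\dot{\cup}I_{\vert H\vert}^H$, so $K$ decomposes as a disjoint union $K = K_0\dot{\cup}\cdots\dot{\cup}K_{\vert H\vert}$ with $K_\ell:=K\cap I_\ell^H$, and every $G\subseteq K$ factors uniquely as $G = G_0\dot{\cup}\cdots\dot{\cup}G_{\vert H\vert}$ with $G_\ell\subseteq K_\ell$. Using the characterization above, the indicator $\sep(G,H)$ becomes $\prod_{\ell=1}^{\vert H\vert-1}[\,G_\ell\neq\varnothing\,]$, which does not constrain $G_0$ or $G_{\vert H\vert}$; the alternating sum therefore factors completely:
\[
  \sum_{G\subseteq K}(-1)^{\vert G\vert}\sep(G,H)
   \,=\,\Bigl(\sum_{G_0\subseteq K_0}(-1)^{\vert G_0\vert}\Bigr)
     \prod_{\ell=1}^{\vert H\vert-1}\Bigl(\sum_{\varnothing\neq G_\ell\subseteq K_\ell}(-1)^{\vert G_\ell\vert}\Bigr)
     \Bigl(\sum_{G_{\vert H\vert}\subseteq K_{\vert H\vert}}(-1)^{\vert G_{\vert H\vert}\vert}\Bigr).
\]

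Now I would evaluate each factor with the M\"obius identity~\eqref{binlehr}. The two outer factors give $\delta_{K_0,\varnothing}=\delta_{K\cap I_0^H,\varnothing}$ and $\delta_{K_{\vert H\vert},\varnothing}=\delta_{K\cap I_{\vert H\vert}^H,\varnothing}$. For an internal segment, the restricted sum equals $\sum_{G_\ell\subseteq K_\ell}(-1)^{\vert G_\ell\vert}-1=\delta_{K_\ell,\varnothing}-1$, which is $0$ when $K_\ell=\varnothing$ and $-1$ otherwise. Hence the internal product vanishes unless $K_\ell\neq\varnothing$ for all internal $\ell$ --- which by the characterization is exactly $\sep(K,H)=1$ --- in which case it equals $(-1)^{\vert H\vert-1}$. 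Collecting the three contributions reproduces the claimed right-hand side.

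Finally I would dispose of the degenerate case $\vert H\vert=1$: there are no internal segments, $\sep(G,H)=1$ for every $G$ by convention, and the sum is $\sum_{G\subseteq K}(-1)^{\vert G\vert}=\delta_{K,\varnothing}$ by~\eqref{binlehr}; the right-hand side likewise reduces to $\sep(K,H)(-1)^{0}\,\delta_{K\cap I_0^H,\varnothing}\,\delta_{K\cap I_1^H,\varnothing}=\delta_{K,\varnothing}$ since $K\subseteq I_0^H\dot{\cup}I_1^H$, so both sides agree. The only genuine subtlety is the clean factorization in the second step, which hinges on the exact ``iff'' characterization of separation in terms of the segments; everything else is a direct application of~\eqref{binlehr}.
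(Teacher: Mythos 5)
Your proof is correct and takes essentially the same approach as the paper: both decompose each $G\subseteq K$ along the segments $I_\ell^H$, use the characterisation of $\sep(\cdot,H)$ via non-empty intersection with the internal segments, factor the alternating sum segment-wise, and evaluate each factor with \eqref{binlehr}. The only (harmless) differences are stylistic --- you absorb the case $\sep(K,H)=0$ into the unified factorisation through the internal factors $\delta_{K_\ell,\varnothing}-1$ instead of dispatching it first via Lemma~\ref{lemmasepfunc}(2), and you make explicit both the ``if and only if'' characterisation of separation and the degenerate case $\vert H\vert = 1$, which the paper leaves implicit.
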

\begin{proof}
 For $\sep(K,H) = 0$, the claim is clear by Lemma~\ref{lemmasepfunc}($2$).
We now define $A^{}_i := K \cap I^{H}_{i}$ for all $i \in \{0,\ldots, \vert H \vert \}$.
Then, for $\sep(K,H) =1$, it follows from Lemma~\ref{lemmasepfunc}($4$) that
$A^{}_j \neq \varnothing$ for all $1\leq j \leq \vert H \vert -1$.
Likewise, since $G\subseteq K$, $\sep(G,H)=1$ if and only if 
$G\cap I_j^{H} \neq \varnothing$ for all $1\leq j \leq \vert H \vert -1$,
with no condition emerging for $G\cap I_0^{H}$ or $G\cap I_{\vert H\vert}^{H}$.
This gives
\[
   \sum_{G\subseteq K} (-1)^{\vert G\vert} \sep(G,H) \,= 
    \sum_{B^{}_0 \subseteq A^{}_0} (-1)^{\vert B^{}_0 \vert}
   \prod_{i=1}^{\vert H\vert - 1} 
   \Bigl(\sum_{\substack{ B^{}_i \subseteq A^{}_i \\ B^{}_i \neq \varnothing} } (-1)^{\vert B^{}_i \vert} \Bigr)
   \sum_{B^{}_{\vert H\vert} \subseteq A^{}_{\vert H\vert}} (-1)^{\vert B^{}_{\vert H\vert} \vert}
   \,= \prod_{j=0}^{\vert H\vert} F_j \, .
\]
Here, for $j= 0$ and $j = \vert H\vert$, the factors $F^{}_j$ are given by 
$F_j :=\sum_{B^{}_j \subseteq A^{}_j} (-1)^{\vert B^{}_j \vert} = \delta_{A_j,\varnothing}$,
where we have used \eqref{binlehr}.
For $1\leq j \leq \vert H \vert -1$,
\[
 F_j \,:= \, \sum_{\substack{ B^{}_j \subseteq A^{}_j \\ B^{}_j \neq \varnothing} }
         (-1)^{\vert B^{}_j\vert} = -1 + \sum_{B^{}_j\subseteq A^{}_j} (-1)^{\vert B^{}_j\vert}
      = -1 + \delta_{A^{}_j,\varnothing} = -1 \, ,
\]
where we have again used \eqref{binlehr} in the second-last step,
and $A^{}_j \neq \varnothing$ in the last. \qed
\end{proof}

With this notation, let us take a closer look at $R_G^{(L)}\bigl( \varPhi^{(L)}(p)\bigr)$
for $G\subseteq L$.
Evaluating \eqref{recomonphi} explicitly,
using Definition~\ref{defin:newop}, expanding and using the 
product structure~\eqref{prodtilde} backwards gives
\begin{equation*}
\begin{split}
    R_G^{(L)} \bigl( \varPhi^{(L)}(p) \bigr) \,&=\,
    \Bigl( \pi^{}_{ J^{}_0 \cdot }p + \sum_{\alpha^{}_0\in I^{}_0}\rho^{}_{\alpha^{}_0}
     (R_{ \alpha^{}_0}^{( I^{}_0 )} - \one^{(I^{}_0)}) ( \pi^{}_{J^{}_0 \cdot} p )
      \Bigr) \otimes \cdots \otimes  \\
      &\phantom{=} \Bigl( \pi^{}_{ J^{}_{ \vert G \vert} \cdot} p +
         \sum_{\alpha^{}_{ \vert G \vert} \in I^{}_{\vert G\vert}}
      \rho^{}_{ \alpha^{}_{ \vert G \vert}} ( R_{\alpha^{}_{ \vert G \vert }}^{(I^{}_{\vert G\vert})}
      - \one^{( I^{}_{ \vert G \vert})}) ( \pi^{}_{ J^{}_{ \vert G \vert} \cdot} p )
        \Bigr)  \\
  &= \,\Bigl(\one^{(I^{}_0)} - \sum_{\alpha^{}_0 \in I^{}_0} \rho^{}_{\alpha^{}_0} 
        \widetilde{T}_{ \varnothing, \alpha^{}_0}^{( I^{}_0)}\Bigr)  ( \pi^{}_{J^{}_0 \cdot} p)
    \otimes\cdots\otimes \Bigl(\one^{( I^{}_{ \vert G \vert})} -
         \sum_{\alpha^{}_{\vert G\vert} \in I^{}_{\vert G\vert}} \rho^{}_{\alpha^{}_{\vert G\vert}} 
        \widetilde{T}_{ \varnothing, \alpha^{}_{\vert G\vert}}^{( I^{}_0 ) } \Bigr) 
                ( \pi^{}_{J^{}_{\vert G\vert} \cdot} p)  \\
  &= \sum_{H\subseteq L\setminus G} (-1)^{\vert H\vert}\sep(G,H)\rho^{}_H  \, \widetilde{T}^{(L)}_{G,G\dot{\cup} H}(p) \, ,
\end{split}
\end{equation*}
where, in the last step, we have further set
$\rho^{}_H = \prod_{\alpha \in H} \rho^{}_{\alpha}$ for all $H\subseteq L$
(in particular, $\rho^{}_{\varnothing} = 1$) and used Lemma~\ref{lemmasepfunc}($4$);
note that the separation function is basically used as an indicator variable here.
On the other hand, we obtain
\begin{eqnarray*}
   \varPhi \circ R^{}_G  \,&=& \,\sum_{ \alpha \in L \setminus G} 
            \rho^{}_{ \alpha } R^{}_{\alpha} R^{}_G 
            + \bigl( 1 - \sum_{\alpha\in L\setminus G} 
             \rho^{}_{ \alpha} \bigr) R^{}_G  \\
          &=& \,\widetilde{T}^{}_{ G , G }  -
                \sum_{\alpha \in L\setminus G}
                      \rho^{}_{\alpha}
                      \widetilde{T}^{}_{ G , G \dot{\cup} \{ \alpha \}} \\
          &=& \,\sep(G,\varnothing) \, \widetilde{T}^{}_{ G , G }
              - \sum_{\alpha \in L\setminus G} \sep(G, \{\alpha\}) 
                 \rho^{}_{\alpha} \widetilde{T}^{}_{ G , G \dot{\cup} \{ \alpha \}} \, ,
\end{eqnarray*}
which finally yields the commutator. 

\begin{theorem}\label{commutator}
For all $G\subseteq L$,  the commutator on $\mathcal{P}(X)$ is given by
\begin{equation*} 
    \left[ \, R^{}_G , \varPhi \, \right]  \,= 
       \sum_{\substack{ H \subseteq L \setminus G \\ \vert H \vert \geq 2} } 
       (-1)^{\vert H \vert} \sep(G,H) \,
       \rho^{}_H \ts \widetilde{T}^{}_{ G, G \dot{\cup} H} \, .  
\end{equation*} 
\qed
\end{theorem}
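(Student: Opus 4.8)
The plan is to read off the commutator directly as the difference of the two expressions for $R^{}_G\circ\varPhi$ and $\varPhi\circ R^{}_G$ assembled just above, and then to verify that all contributions of low order in $\vert H\vert$ cancel. Recall that the computation preceding the statement yields, on the one hand,
\[
   R^{(L)}_G\bigl(\varPhi^{(L)}(p)\bigr)\,=\,
   \sum_{H\subseteq L\setminus G}(-1)^{\vert H\vert}\,\sep(G,H)\,\rho^{}_H\,
   \widetilde{T}^{(L)}_{G,G\dot{\cup} H}(p)\,,
\]
and, on the other hand,
\[
   \varPhi\circ R^{}_G\,=\,
   \sep(G,\varnothing)\,\widetilde{T}^{}_{G,G}
   -\sum_{\alpha\in L\setminus G}\sep(G,\{\alpha\})\,\rho^{}_\alpha\,
   \widetilde{T}^{}_{G,G\dot{\cup}\{\alpha\}}\,.
\]

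First I would split the sum defining $R^{}_G\circ\varPhi$ by the cardinality of $H$, isolating the contributions with $\vert H\vert\le 1$. Using $\rho^{}_\varnothing=1$ together with the convention $\sep(G,\varnothing)=1$, the term $H=\varnothing$ equals $\widetilde{T}^{}_{G,G}$; using $\sep(G,\{\alpha\})=1$, the terms with $\vert H\vert=1$ give exactly $-\sum_{\alpha\in L\setminus G}\rho^{}_\alpha\,\widetilde{T}^{}_{G,G\dot{\cup}\{\alpha\}}$. These are precisely the two groups occurring in $\varPhi\circ R^{}_G$, so they cancel identically upon forming the difference $R^{}_G\circ\varPhi-\varPhi\circ R^{}_G$. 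What survives is the sum over those $H\subseteq L\setminus G$ with $\vert H\vert\ge 2$, which is the asserted formula; no further manipulation is required.

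The genuine content therefore sits upstream, in the two displayed identities rather than in their subtraction, so the step I expect to be delicate is the derivation of the formula for $R^{}_G\circ\varPhi$. Here one starts from the subsystem factorisation \eqref{recomonphi}, rewrites each factor as $\one^{(I^{G}_k)}-\sum_{\alpha\in I^{G}_k}\rho^{}_\alpha\,\widetilde{T}^{(I^{G}_k)}_{\varnothing,\{\alpha\}}$, expands the tensor product over the subsystems $I^{G}_0,\dots,I^{G}_{\vert G\vert}$, and reassembles it into the global operators $\widetilde{T}^{(L)}_{G,G\dot{\cup} H}$ by reading the product structure \eqref{prodtilde} backwards. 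The combinatorial heart of the matter is that the expansion selects at most one crossed link from each $G$-subsystem $I^{G}_k$, and such subsets $H\subseteq L\setminus G$ are exactly those for which $G$ separates $H$; this is why the separation function enters as an indicator, via Lemma~\ref{lemmasepfunc}($4$). Once this identity and the elementary expression for $\varPhi\circ R^{}_G$ are in place, the commutator follows at once by the cancellation described above.
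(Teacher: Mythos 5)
Your proposal is correct and takes essentially the same approach as the paper, where the theorem is stated without a separate proof environment precisely because its proof consists of the two displayed computations immediately preceding it: the expansion of $R^{(L)}_G\circ\varPhi^{(L)}$ via \eqref{recomonphi}, Definition~\ref{defin:newop} and the product structure \eqref{prodtilde} read backwards, with $\sep(G,H)$ acting as the indicator that at most one link is selected per subsystem (Lemma~\ref{lemmasepfunc}($4$)), together with the elementary rewriting of $\varPhi\circ R^{}_G$ in terms of $\widetilde{T}^{}_{G,G}$ and $\widetilde{T}^{}_{G,G\dot{\cup}\{\alpha\}}$. Your final step --- the cancellation of the $\lvert H\rvert\leq 1$ terms upon subtraction, leaving the $\lvert H\rvert\geq 2$ sum --- is exactly how the paper concludes.
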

Please note that, by the properties of the separation function, many of the summands vanish.
In particular, $\left[ \, R^{}_G , \varPhi \, \right] = 0$ whenever
$\vert L\setminus G \vert \leq 1$.

\begin{coro} \label{thm:commute}
    $\left[ \, R^{}_G , \varPhi \, \right] = 0$ if 
    $L \setminus G$ is contiguous. 
\end{coro}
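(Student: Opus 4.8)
The plan is to read the result straight off the closed form of the commutator supplied by Theorem~\ref{commutator}. That theorem writes
\[
   \bigl[\,R^{}_G,\varPhi\,\bigr] \,=\,
   \sum_{\substack{H\subseteq L\setminus G\\ \vert H\vert\geq 2}}
   (-1)^{\vert H\vert}\,\sep(G,H)\,\rho^{}_H\,\widetilde{T}^{}_{G,G\dot\cup H}\,,
\]
so the whole sum ranges only over subsets $H\subseteq L\setminus G$ with $\vert H\vert\geq 2$, and each term carries the factor $\sep(G,H)$. Hence it suffices to show that this separation factor annihilates every admissible term as soon as $L\setminus G$ is contiguous; there is then nothing left to sum.

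This is precisely the content of Lemma~\ref{lemmasepfunc}(3), which I would simply invoke, but whose mechanism I would also spell out for transparency. Fix any $H$ occurring in the sum, and pick two links $\alpha<\beta$ in $H$ (possible since $\vert H\vert\geq 2$). Because $L\setminus G$ is contiguous and contains both $\alpha$ and $\beta$, it contains \emph{every} link of $L$ lying strictly between them; in particular none of these intermediate links belongs to $G$. Consequently there is no $\gamma\in G$ with $\alpha<\gamma<\beta$, so $G$ fails to separate the pair $\{\alpha,\beta\}\subseteq H$, which by definition gives $\sep(G,H)=0$.

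Substituting $\sep(G,H)=0$ into the formula of Theorem~\ref{commutator} makes every summand vanish, so $[\,R^{}_G,\varPhi\,]=0$, as claimed. The only point worth keeping straight is the summation range: Theorem~\ref{commutator} has already discarded the cases $\vert H\vert\leq 1$, so the argument never meets the singleton or empty sets (for which $\sep$ is conventionally $1$) and no separate treatment of those degenerate cases is needed. I do not expect any genuine obstacle here: the substance has been absorbed into Theorem~\ref{commutator} and into the elementary properties of the separation function in Lemma~\ref{lemmasepfunc}, and the corollary is just the clean specialisation of the commutator formula to contiguous complements $L\setminus G$.
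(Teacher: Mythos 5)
Your proposal is correct and follows exactly the paper's own route: apply Theorem~\ref{commutator} and kill every surviving term via Lemma~\ref{lemmasepfunc}(3), noting that the restriction $\vert H\vert\geq 2$ keeps the degenerate cases out of play. Your only addition is to unpack the (unproven-in-the-paper) mechanism behind Lemma~\ref{lemmasepfunc}(3), which you do correctly.
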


\begin{proof}
By Theorem~\ref{commutator}, only terms with $\vert H\vert \geq 2$
need be considered. For these, Lemma~\ref{lemmasepfunc}($3$) tells us that
$\sep(G,H) = 0$ if $L\setminus G$ is contiguous and $H\subseteq L\setminus G$.
Hence,  $\left[ \, R^{}_G , \varPhi \, \right] = 0$.
\qed
\end{proof}
Let us note in passing that the converse direction of Corollary~\ref{thm:commute}
may fail if the site spaces are sufficiently trivial.
Nevertheless, in the generic case, $\left[ \, R^{}_G , \varPhi \, \right] = 0$ 
implies $\sep(G,H) = 0$ for all $H\subseteq L\setminus G$ with $\vert H\vert \geq 2$,
because the relevant terms then cannot cancel each other.
We omit a more precise discussion of this point, because we do not need it later on.

Recalling that $\varPhi^{t}$ is the discrete-time analogue of $\varphi^{}_t$,
we can consider Corollary~\ref{thm:commute} 
as what is left of Corollary~\ref{coro1} in discrete time.
Hence, it becomes clear why the LDE operators~\eqref{trafo} from the continuous-time
model do not suffice to linearise \emph{and} decouple the discrete-time dynamics.

We still aim at determining $T^{}_{\varnothing} \circ \varPhi$
according to Proposition~\ref{prop:tempty}, expressing the commutator 
 $\left[ \, R^{}_G , \varPhi \, \right]$ in terms of the $T^{}_G$ (which
are related to the  $\widetilde{T}^{}_{ G, G \dot{\cup} M}$ via~\eqref{newTtilde1}).

\begin{theorem} \label{thm:tvarnotcomplete}
   On $\mathcal{P}(X)$, the operators $T^{}_{G} = T^{(L)}_G$
   and  $\varPhi = \varPhi^{(L)}$  satisfy
\begin{equation}\label{TGonPhi}
    T_G^{(L)} \circ \varPhi^{ (L)} \,
         = \sum_{ K \supseteq G }z^{ ( L)} (G,K)\: T_K^{(L)} 
\end{equation}
for all $G\subseteq L$. The coefficients $z^{ ( L) } (\varnothing, K )$, $K\subseteq L$,
are given by 
\begin{equation}\label{zleerleer}
 z^{(L)} ( \varnothing ,\varnothing)\, =\, 1- \sum_{\alpha \in L}\rho^{}_{\alpha}
\end{equation}
and, for $K \neq \varnothing$, by 
\begin{equation}\label{zleerk}
 z^{(L)} ( \varnothing , K) \,=\, 
 -\sum_{H\subseteq L\setminus K}\rho^{}_{H} \sep(K,H)\,
      (1- \delta^{}_{H\cap I_0^{K},\varnothing})\, (1- \delta^{}_{H\cap I^{K}_{\vert K \vert},\varnothing}) \, .
\end{equation}
For $K\supseteq G \neq \varnothing$, the coefficients are recursively determined by
\[ z^{ ( L ) } (G,K)
   \,=\, z^{ ( I_0^G ) } \left( \varnothing,  K\cap I_0^G \right)
     \cdot \ldots \cdot z^{ ( I_{ \vert G \vert}^G )} 
     \bigl( \varnothing, K\cap I_{ \vert G\vert}^G  \bigr) \, .\]
\end{theorem}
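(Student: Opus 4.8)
The plan is to prove the base case $G=\varnothing$ first and then bootstrap to arbitrary $G$ through the tensor factorisation of Proposition~\ref{thm:prodLDEops}. For $G=\varnothing$ I start from Proposition~\ref{prop:tempty}, which writes $T^{}_{\varnothing}\circ\varPhi$ as $\eta\,T^{}_{\varnothing}$ plus the alternating sum $\sum_{G\subseteq L}(-1)^{|G|}[\,R^{}_G,\varPhi\,]$ of all commutators. Into this I substitute the explicit commutator from Theorem~\ref{commutator} and then replace each $\widetilde{T}^{}_{G,G\dot{\cup}H}$ by a sum of genuine LDE operators via \eqref{newTtilde1}, so that $\widetilde{T}^{}_{G,G\dot{\cup}H}=\sum_{G\subseteq K\subseteq L\setminus H}T^{}_K$. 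Exchanging the order of summation to collect the coefficient of a fixed $T^{}_K$, the contributing configurations are exactly those with $G\subseteq K$, $H\subseteq L\setminus K$ and $|H|\geq 2$, whence the inner sum over $G$ reduces to $\sum_{G\subseteq K}(-1)^{|G|}\sep(G,H)$ with $H$ fixed and disjoint from $K$. This is precisely the quantity evaluated by Lemma~\ref{lemmamoresepfunc}.

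Applying Lemma~\ref{lemmamoresepfunc} collapses the inner sum to $\sep(K,H)\,(-1)^{|H|-1}\,\delta^{}_{K\cap I_0^H,\varnothing}\,\delta^{}_{K\cap I_{|H|}^H,\varnothing}$; combined with the $(-1)^{|H|}$ from the commutator this contributes an overall factor $-1$, which yields $z^{(L)}(\varnothing,K)$ as a $\rho^{}_H$-weighted sum over $H$ carrying $\delta$-factors indexed by the segments of $H$. The term $\eta\,T^{}_{\varnothing}$ survives only for $K=\varnothing$, where all $|H|\geq 2$ contributions drop out because $\sep(\varnothing,H)=0$, reproducing \eqref{zleerleer}. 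The step I expect to cost the most care is reconciling the $H$-indexed conditions $\delta^{}_{K\cap I_0^H,\varnothing}$, $\delta^{}_{K\cap I_{|H|}^H,\varnothing}$ delivered by the lemma with the $K$-indexed conditions $(1-\delta^{}_{H\cap I_0^K,\varnothing})$, $(1-\delta^{}_{H\cap I_{|K|}^K,\varnothing})$ appearing in the target formula \eqref{zleerk}. The crucial point is the elementary order identity $\delta^{}_{K\cap I_0^H,\varnothing}=1-\delta^{}_{H\cap I_0^K,\varnothing}$ together with its trailing analogue: writing $h^{}_1,k^{}_1$ for the smallest links of $H,K$, both sides equal the indicator of $h^{}_1<k^{}_1$, and symmetrically at the right end. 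I then check that, when $K\neq\varnothing$, the $|H|\leq 1$ terms on the right of \eqref{zleerk} vanish (a single link cannot lie both before the first and after the last link of $K$), so that the broader sum in \eqref{zleerk} and the $|H|\geq 2$ sum coming out of the commutator agree term by term.

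For general $G\neq\varnothing$ I invoke Proposition~\ref{thm:prodLDEops}, which factorises $T_G^{(L)}\circ\varPhi^{(L)}$ as a tensor product of the operators $T_{\varnothing}^{(I_k^G)}\circ\varPhi^{(I_k^G)}$ over the subsystems $(I_k^G,J_k^G)$ associated with $G$. The base-case argument above is valid for an arbitrary finite link set, hence applies verbatim to each subsystem $I_k^G$ regarded as its own full system, giving $T_{\varnothing}^{(I_k^G)}\circ\varPhi^{(I_k^G)}=\sum_{K_k\subseteq I_k^G}z^{(I_k^G)}(\varnothing,K_k)\,T_{K_k}^{(I_k^G)}$. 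Multiplying out the tensor product and reading \eqref{RundT} backwards in its iterated form identifies $\bigotimes_k T_{K_k}^{(I_k^G)}(\pi^{}_{J_k^G\cdot}p)$ with $T_K^{(L)}(p)$ for $K=G\,\dot{\cup}\,\bigcup_k K_k$; since the $I_k^G$ partition $L\setminus G$, these $K$ range exactly over the supersets of $G$, with $K_k=K\cap I_k^G$. Reading off the coefficient of $T_K^{(L)}$ then yields the asserted recursion $z^{(L)}(G,K)=\prod_{k=0}^{|G|}z^{(I_k^G)}(\varnothing,K\cap I_k^G)$, which completes the proof.
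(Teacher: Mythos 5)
Your proposal is correct and follows essentially the same route as the paper's own proof: the base case $G=\varnothing$ via Proposition~\ref{prop:tempty}, the commutator of Theorem~\ref{commutator}, the expansion \eqref{newTtilde1}, the summation exchange and Lemma~\ref{lemmamoresepfunc} --- including the reindexing identity $\delta^{}_{K\cap I_0^{H},\varnothing}=1-\delta^{}_{H\cap I_0^{K},\varnothing}$ and the observation that the restriction $\vert H\vert\geq 2$ may be dropped because the $\delta$-factors enforce it --- followed by the tensor-product bootstrap over the subsystems $I_k^G$ for general $G$. The only detail you spell out more explicitly than the paper is the identification of $\bigotimes_k T^{(I_k^G)}_{K\cap I_k^G}$ with $T^{(L)}_K$ via the iterated product structure, which the paper compresses into ``using the product structure backwards''; your argument there is sound.
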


\begin{proof}
    Let us first prove the case $G = \varnothing$.
According to Proposition~\ref{prop:tempty}, we have
     $ T^{}_{ \varnothing} \circ \varPhi =
        \eta T^{}_{ \varnothing} + \sum_{ G^{'} \subseteq L} (-1)^{ \vert G^{'} \vert} [ \, R^{}_{G^{'}} , \varPhi \, ]$,
    where $\eta = z^{(L)}(\varnothing, \varnothing)$ by definition.
    Let us thus evaluate the last term. In the first step,
   we insert the commutator from Theorem~\ref{commutator};
    we then use Definition~\ref{defin:newop} and
    change the order of summation to arrive at
    \begin{eqnarray}\label{toTGphi}
      \sum_{ G^{'} \subseteq L} (-1)^{ \vert G^{'} \vert} [ \, R^{}_{G^{'}} , \varPhi \, ] \,&=&\,
      \sum_{ G^{'} \subseteq L} (-1)^{ \vert G^{'} \vert} 
           \sum_{\substack{ H \subseteq L \setminus G^{'} \\ \vert H \vert \geq 2} } 
       (-1)^{\vert H \vert} \sep(G^{'},H) \ts
       \rho^{}_H \ts \widetilde{T}^{}_{ G^{'}, G^{'} \dot{\cup} H} \nonumber \\
      &=& \sum_{ G^{'} \subseteq L} (-1)^{ \vert G^{'} \vert} 
           \sum_{\substack{ H \subseteq L \setminus G^{'} \\ \vert H \vert \geq 2} } 
       (-1)^{\vert H \vert} \sep(G^{'},H) \ts
       \rho^{}_H \sum_{G^{'} \subseteq K \subseteq L\setminus H} T^{}_K \nonumber \\
      &=& \sum_{K\subseteq L} T_K^{}\sum_{\substack{ H \subseteq L\setminus K \\ \vert H \vert \geq 2} }
             (-1)^{\vert H \vert} \rho^{}_H 
             \sum_{ G^{'}\subseteq K} (-1)^{\vert G^{'}\vert}\sep(G^{'},H) \, ,
    \end{eqnarray}
which does not contain any term with $T^{}_{\varnothing}$.
We can now compare coefficients for $T^{}_K$.
Note first that, by \eqref{toTGphi}, we only need to consider sets $H\subseteq L\setminus K$,
that is, $H\cap K = \varnothing$. In this case,
 $\delta^{}_{K\cap I_0^{H},\varnothing} = 1 - \delta^{}_{H\cap I_0^{K},\varnothing}$ and
$\delta^{}_{K \cap I_{\vert H\vert}^{H},\varnothing} = 1 - \delta^{}_{H\cap I_{\vert K\vert}^{K},\varnothing}$.
This is true since $K\cap I_0^{H} = \varnothing \; (\neq \varnothing)$ implies that the smallest element in $H$
is smaller (larger) than the smallest element in $K$, thus $H\cap I_0^{K} \neq\varnothing \; (= \varnothing)$
(and vice versa for $K \cap I_{\vert H\vert}^{H}$).  
Taking this together with Lemma~\ref{lemmamoresepfunc},
the coefficient of $T^{}_K$ in \eqref{toTGphi} turns into
\begin{equation}\label{importantz}
\begin{split}
 \sum_{\substack{ H \subseteq L\setminus K \\ \vert H \vert \geq 2} }
             (-1)^{\vert H \vert} \rho^{}_H 
             &\sum_{ G^{'}\subseteq K} (-1)^{\vert G^{'}\vert}\sep(G^{'},H)  \nonumber \\
 & =\, - \sum_{\substack{ H \subseteq L\setminus K \\ \vert H \vert \geq 2} } 
       \rho^{}_H \sep(K,H)\, (1- \delta^{}_{H\cap I_0^{K},\varnothing})
         (1-\delta^{}_{H\cap I_{\vert K\vert}^{K},\varnothing}) \nonumber \\
&  =\, - \sum_{H\subseteq L\setminus K} 
       \rho^{}_H \sep(K,H)\, (1- \delta^{}_{H\cap I_0^{K},\varnothing})
         (1-\delta^{}_{H\cap I_{\vert K\vert}^{K},\varnothing}) \, .
\end{split}
\end{equation}
Note that, in the last step, the restriction on $\vert H\vert$ may be dropped
since it is already implied by the factors involving the $\delta$-functions.
This proves the claim for $G = \varnothing$.
For the case $G \neq \varnothing$, we follow Proposition~\ref{thm:prodLDEops}
and write, for $p \in \mathcal{P}(X)$,
 \[
          T_G^{(L)} \bigl( \varPhi^{ ( L ) } (p) \bigr) \, =\,
         \Bigl(T_{ \varnothing}^{( I^{}_0)} \bigl( \varPhi^{ ( I^{}_0)} ( \pi^{}_{J^{}_0 \cdot} p) \bigr)\Bigr)
         \otimes \Bigl( T_{ \varnothing}^{ (I^{}_1)}\bigl( \varPhi^{( I^{}_1)} ( \pi^{}_{J^{}_1 \cdot} p ) \bigr) \Bigr)
         \otimes \cdots \otimes 
        \Bigl( T_{ \varnothing}^{( I^{}_{\vert G\vert})}
         \bigl( \varPhi^{( I^{}_{\vert G\vert}) } ( \pi^{}_{J^{}_{\vert G\vert} \cdot} p ) \bigr)\Bigr) \, .
      \]
Applying the above result for  $G = \varnothing$ to each factor, 
and using the product structure of Proposition~\ref{thm:prodLDEops} backwards,
establishes the claim. \qed
\end{proof}

\begin{coro}
 The coefficients $z(\varnothing, K)$ with  $K \neq \varnothing$ can be expressed explicitly as 
\begin{equation}\label{zsecond}
   z^{(L)} ( \varnothing , K)\, =\, -\sum_{ \alpha^{}_0 \in I_0^K} \rho^{}_{ \alpha^{}_0} 
   \bigl( \prod_{i=1}^{ \vert K \vert -1} (1 + \sum_{\alpha^{}_i \in I_i^K} \rho^{}_{\alpha^{}_i})\bigr)
   \sum_{\alpha^{}_{ \vert K \vert} \in I_{ \vert K \vert}^K} \rho^{}_{\alpha^{}_{ \vert K \vert}} \, . 
\end{equation}
\end{coro}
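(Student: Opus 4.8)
The plan is to derive \eqref{zsecond} directly from the closed expression \eqref{zleerk} for $z^{(L)}(\varnothing,K)$ by exploiting that the link segments $I^{K}_0, I^{K}_1, \ldots, I^{K}_{\vert K\vert}$ from \eqref{linksegments} are pairwise disjoint and together exhaust $L\setminus K$ (every link of $L$ is either one of the $\alpha^{}_j\in K$ or lies in exactly one of the open gaps between consecutive elements of $K$, or before the first, or after the last). Consequently, each $H\subseteq L\setminus K$ has a unique decomposition $H = H^{}_0\,\dot{\cup}\,H^{}_1\,\dot{\cup}\,\cdots\,\dot{\cup}\,H^{}_{\vert K\vert}$ with $H^{}_i := H\cap I^{K}_i$, and by the multiplicativity $\rho^{}_H = \prod_{\alpha\in H}\rho^{}_{\alpha}$ one has $\rho^{}_H = \rho^{}_{H^{}_0}\,\rho^{}_{H^{}_1}\cdots\rho^{}_{H^{}_{\vert K\vert}}$. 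The whole proof is then a matter of showing that the summand in \eqref{zleerk} decouples across the blocks $H^{}_i$.

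The key step, which I would carry out first, is to factorise the separation function: I claim that $\sep(K,H)=1$ holds precisely when $\vert H^{}_i\vert\le 1$ for every $i\in\{0,\ldots,\vert K\vert\}$. This follows from the definition of $\sep$ together with the definition of the segments (cf. Lemma~\ref{lemmasepfunc}): two links lying in one and the same segment $I^{K}_i$ have no element of $K$ strictly between them, so $\sep(K,H)=0$ as soon as some $H^{}_i$ contains two or more links; conversely, any two links lying in distinct segments $I^{K}_i$ and $I^{K}_j$ with $i<j$ are automatically separated by $\alpha^{}_{i+1}\in K$, so no constraint links different blocks. Thus the separation constraint imposed in \eqref{zleerk} reduces to the purely local condition $\vert H^{}_i\vert\le 1$ on each block.

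Next I would read off the two $\delta$-factors: $(1-\delta^{}_{H\cap I^{K}_0,\varnothing})$ forces $H^{}_0\neq\varnothing$ and $(1-\delta^{}_{H\cap I^{K}_{\vert K\vert},\varnothing})$ forces $H^{}_{\vert K\vert}\neq\varnothing$, so together with $\vert H^{}_i\vert\le 1$ the two boundary blocks are singletons while every interior block $H^{}_i$ ($1\le i\le\vert K\vert-1$) is empty or a singleton. Since both the admissibility conditions and the weight $\rho^{}_H$ now decouple across the blocks, the sum over $H$ in \eqref{zleerk} factorises into independent block sums. For a boundary segment the sum over the single admissible link gives $\sum_{\alpha^{}_0\in I^{K}_0}\rho^{}_{\alpha^{}_0}$ and $\sum_{\alpha^{}_{\vert K\vert}\in I^{K}_{\vert K\vert}}\rho^{}_{\alpha^{}_{\vert K\vert}}$, while for an interior segment the empty choice contributes $\rho^{}_{\varnothing}=1$ and each singleton contributes $\rho^{}_{\alpha^{}_i}$, giving the factor $1+\sum_{\alpha^{}_i\in I^{K}_i}\rho^{}_{\alpha^{}_i}$. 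Reinstating the overall minus sign from \eqref{zleerk} and multiplying the factors yields exactly \eqref{zsecond}. The only point that requires genuine care is the factorisation of $\sep(K,H)$ in the second paragraph and the asymmetric treatment it forces between the two boundary blocks (which must be nonempty, hence singletons) and the interior blocks (which may be empty, producing the additive $1$); once this combinatorial bookkeeping is pinned down, the remaining evaluation is routine.
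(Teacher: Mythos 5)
Your proposal is correct and takes essentially the same route as the paper's own proof: the paper likewise observes that $\sep(K,H)=1$ forces each $\alpha\in H$ into a different segment $I^{K}_i$, that the $\delta$-factors in \eqref{zleerk} force one element each from $I^{K}_0$ and $I^{K}_{\vert K\vert}$, and that the sum then factorises over the segments. You merely make explicit what the paper leaves tacit, namely the block decomposition $H=(H\cap I^{K}_0)\,\dot{\cup}\,\cdots\,\dot{\cup}\,(H\cap I^{K}_{\vert K\vert})$, the multiplicativity $\rho^{}_H=\prod_i\rho^{}_{H\cap I^{K}_i}$, and both directions of the equivalence between $\sep(K,H)=1$ and $\vert H\cap I^{K}_i\vert\leq 1$ for all $i$.
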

\begin{proof}
Let us consider those $H$ whose contribution to the sum in \eqref{zleerk}
is not annihilated by the separation function or
the $\delta$-functions.
For $\sep(K,H) =1$ to hold, each $\alpha \in H$ must belong to a different $I^K_i\in\mathcal{L}^{}_K$.
Furthermore, $H$ must contain one element each from $I^{K}_0$ and $I^{K}_{\vert K\vert}$
($\alpha^{}_0$ and $\alpha^{}_{\vert K\vert}$, respectively) to keep the factors
involving the $\delta$-functions from vanishing. Thus, the sum in \eqref{zleerk} may be factorised as claimed.\qed
\end{proof}

In particular, $z^{(L)} ( \varnothing , K) = 0$
if $K \cap \left\{\frac{1}{2},\frac{2n-1}{2}\right\}\neq\varnothing$.
Taking this together with \eqref{zleerleer}, one obtains
$ z^{( L)} ( \varnothing, K) = (1- \sum_{\alpha \in L}\rho^{}_{\alpha})\thinspace\delta^{}_{K,\varnothing}$
for $K\subseteq L$ whenever $\vert L\vert \leq 2$, and hence, in these cases,
$T^{(L)}_{\varnothing} \circ \varPhi^{(L)} = (1- \sum_{\alpha \in L}\rho^{}_{\alpha}) T^{(L)}_{\varnothing}$
is already a  diagonal component in line with the observation in Section~4.
Furthermore, Theorem~\ref{thm:tvarnotcomplete} and \eqref{zsecond} entail that
$ z^{( L)} ( G, K) = 0$ whenever
\begin{equation}\label{zeqzero}
K \cap \Bigl(\bigcup_{0\leq i\leq\vert G\vert} \{ \min (I_i^G),
\max (I_i^G)\} \Bigr) \neq \varnothing \, .
\end{equation}
Theorem~\ref{thm:tvarnotcomplete} reveals the linear structure inherent in the action of
$T^{}_{G}$ on $\varPhi$.
In fact, the structure is even triangular (with respect to the partial ordering)
since $T^{(L)}_G \circ \varPhi^{(L)}$ is a linear combination of the
$T^{(L)}_K$, $K\supseteq G$.
Thus, diagonalisation will boil down to recursive elimination.
As a preparation, we make the following observation.
\begin{coro}\label{corohelp}
If $L \neq\varnothing$, one has the relation $z^{(L)}(G,L) = 0$ for all
$\varnothing\subseteq G\subsetneq L$.
\end{coro}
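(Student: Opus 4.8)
The plan is to reduce the claim directly to the vanishing criterion already recorded in \eqref{zeqzero}, specialised to $K=L$. That criterion states that $z^{(L)}(G,K)=0$ as soon as $K$ meets the set of endpoint links $\bigcup_{0\le i\le\vert G\vert}\{\min(I_i^G),\max(I_i^G)\}$ of the segments attached to $G$ through \eqref{linksegments}. Hence it suffices to verify that, for every $G$ with $\varnothing\subseteq G\subsetneq L$, the choice $K=L$ genuinely triggers this criterion; that is, that at least one of those endpoint links exists and then trivially lies in $L$.

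Next I would observe that the segments $I_0^G,\dots,I_{\vert G\vert}^G$ partition the complement $L\setminus G$: this is immediate from \eqref{linksegments}, where the strict inequalities remove exactly the links of $G$. Since $G\subsetneq L$, the complement $L\setminus G$ is non-empty, so some segment $I_j^G$ is non-empty and $\min(I_j^G)$ is a well-defined link of $L$. Consequently $\min(I_j^G)\in L\cap\bigcup_{0\le i\le\vert G\vert}\{\min(I_i^G),\max(I_i^G)\}$, and taking $K=L$ in \eqref{zeqzero} forces $z^{(L)}(G,L)=0$. The same reasoning also settles the boundary case $G=\varnothing$, where $\mathcal{L}_\varnothing=\{L\}$ with $I_0^\varnothing=L$: here $L\neq\varnothing$ guarantees that this single segment is non-empty, and its endpoints $\tfrac12$ and $\tfrac{2n-1}{2}$ both lie in $K=L$.

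A second, equivalent route is to unfold the recursion of Theorem~\ref{thm:tvarnotcomplete}: for $G\neq\varnothing$ one has $z^{(L)}(G,L)=\prod_{i=0}^{\vert G\vert}z^{(I_i^G)}(\varnothing,I_i^G)$, and for any non-empty segment $I_j^G$ the factor $z^{(I_j^G)}(\varnothing,I_j^G)$ vanishes by the observation following \eqref{zsecond}, since the full link set $K=I_j^G$ contains the two extreme links of its own subsystem. Either way the argument is short; the only thing demanding care --- and the nearest thing to an obstacle here --- is the boundary bookkeeping, namely checking that passing to $K=L$ (respectively to the full link set of a subsystem) really places an endpoint link into $K$, so that the criterion actually fires. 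This is precisely where the hypothesis $L\neq\varnothing$, together with $G\subsetneq L$, enters, as it ensures the presence of a non-trivial segment whose endpoint activates \eqref{zeqzero}.
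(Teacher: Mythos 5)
Your proposal is correct and coincides with the paper's own one-line proof, which likewise applies the vanishing criterion \eqref{zeqzero} with $K=L$ and notes that the intersection there can never be empty when $\varnothing\subseteq G\subsetneq L$; your careful bookkeeping (non-emptiness of some segment $I_j^G$, the case $G=\varnothing$) merely makes explicit what the paper leaves implicit. Your second route via the product factorisation and \eqref{zsecond} is not genuinely different either, since \eqref{zeqzero} was itself obtained from exactly that factorisation.
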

\begin{proof}
 When $\varnothing\subseteq G\subsetneq L$, the intersection in \eqref{zeqzero}, with
$K=L$, can never be empty, so that $z^{(L)}(G,L) = 0$ follows. \qed
\end{proof}

\section{Diagonalisation}\label{sec:diagonal}
Motivated by the triangular structure of \eqref{TGonPhi}, we make the ansatz
to define new operators $U^{}_G$, $G\subseteq L$, as the following linear combination
of the well-known $T^{}_G$:
\begin{equation}\label{newtrafo}
    U^{}_G \,=\, \sum_{ H \supseteq G} c (G , H )\, T^{}_H  \, ,
\end{equation}
where the coefficients $c (G , H )$ are to be determined in such a way that 
they transform the recombination equation into a decoupled diagonal system,
more precisely so that
\begin{equation}\label{aimrecur}
U^{}_G \circ \varPhi \, =\, \lambda^{}_G U^{}_G \, , \quad G\subseteq L \, ,
\end{equation}
with eigenvalues $\lambda^{}_G$ that are still unknown as well.
An example for this transformation can be found in Appendix~A.
Note first that, with the help of \eqref{TGonPhi}, 
Eqs.~\eqref{newtrafo} and \eqref{aimrecur} may be rewritten as
\begin{equation}\label{eigenprob}
\begin{split}
   U^{}_G \circ \varPhi \,&=\, c(G,G)\, T^{}_G \circ \varPhi + 
        \sum_{ N \supsetneq G} c (G,N)\, (T^{}_N \circ \varPhi) \\ 
    &=\, c(G,G) \Bigl( z^{(L)}(G,G)\, T^{}_G + \sum_ {K \supsetneq G} z^{ \left( L \right)} (G,K) \, T^{}_K \Bigr) \\
         &\phantom{=}+ \sum_{ N \supsetneq G } c (G,N) \Bigl( z^{(L)}(N,N)\, T^{}_N
         +\sum_{ M \supsetneq N } z^{ \left (L \right) } (N,M) \, T^{}_M \Bigr) \\
    &\stackrel{!}{=} \,\lambda^{}_G \Bigl( c(G,G)\, T^{}_G + \sum_{ N \supsetneq G} c (G,N) \, T^{}_N \Bigr)
     = \lambda^{}_G U^{}_G .
\end{split}
\end{equation}
Obviously, there is some freedom in the choice of the $c(G,G)$;
we set $c(G,G)=1$ for all $G\subseteq L$
(and we will see shortly that this is consistent).
Eq.~\eqref{eigenprob} has the structure of an eigenvalue problem of a triangular matrix
with coefficients $z^{(L)}(G,H)$, where the role of the unit vectors is taken
by the $T^{}_H$, and the $c(G,H)$, $H\supseteq G$, take the roles of the
components of the eigenvector corresponding to $\lambda^{}_G$
(note that, by considering $c(G,H)$ for $H\supseteq G$ only,
we have already exploited the triangular structure).
Recall next that the eigenvalues of a triangular matrix are given by its
diagonal entries, which are
\begin{equation}\label{def:lambda}
  \lambda^{}_G \,=\, z^{(L)}(G,G)\, =\,  \prod_{ i = 0 }^{ \vert G \vert} z^{(I^{G}_i)}(\varnothing,\varnothing)
              \, =\, \prod_{ i = 0 }^{ \vert G \vert} 
            \Bigl( 1 - \sum_{ \alpha^{}_i \in I_i^G} \rho^{}_{ \alpha^{}_i} \Bigr) 
\end{equation}
by Theorem~\ref{thm:tvarnotcomplete}.
In particular, $\lambda^{}_{\varnothing} = \eta = 1-\sum_{\alpha\in L}\rho^{}_{\alpha}\geq 0$.
The $\lambda^{}_G$ describe the probability that there is no further recombination
between the respective sites of the subsystems corresponding to $G$;
they have already been identified by Bennett~\cite{Bennett} and Dawson~\citep{Kevin1, Kevin2}.
\begin{lemma}\label{lemma:zulambdas}
For all $G,H \subseteq L$ with $G\subsetneq H$, one has
$\lambda^{}_G<\lambda^{}_H$.
 \end{lemma}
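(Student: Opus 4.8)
The plan is to reduce the general statement $G\subsetneq H$ to the elementary step of adding a single link and then to chain the resulting strict inequalities. First I would show that it suffices to prove $\lambda^{}_G < \lambda^{}_{G\cup\{\beta\}}$ for every $G\subseteq L$ and every $\beta\in L\setminus G$. Granting this, for arbitrary $G\subsetneq H$ one enumerates $H\setminus G=\{\beta^{}_1,\ldots,\beta^{}_k\}$ in any order and builds the chain $G=G^{}_0\subsetneq G^{}_1\subsetneq\cdots\subsetneq G^{}_k=H$ with $G^{}_m=G^{}_{m-1}\cup\{\beta^{}_m\}$; applying the one-step inequality $k$ times yields $\lambda^{}_G<\lambda^{}_H$.

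For the one-step inequality I would track how adjoining a single link $\beta$ changes the ordered partition $\mathcal{L}^{}_G$. Since $\beta\in L\setminus G$, it lies in exactly one segment $I^{G}_j$, and passing to $H=G\cup\{\beta\}$ replaces $I^{G}_j$ by the two segments $I':=\{\alpha\in I^{G}_j:\alpha<\beta\}$ and $I'':=\{\alpha\in I^{G}_j:\alpha>\beta\}$, while all other segments $I^{G}_i$ with $i\neq j$ stay unchanged. Abbreviating $s^{}_I:=\sum_{\alpha\in I}\rho^{}_{\alpha}$ and noting that each factor of \eqref{def:lambda} has the form $1-s^{}_I$, the products $\lambda^{}_G$ and $\lambda^{}_H$ share the common factor $C:=\prod_{i\neq j}\bigl(1-s^{}_{I^{G}_i}\bigr)$, so that $\lambda^{}_G=C\,(1-s^{}_{I^{G}_j})$ and $\lambda^{}_H=C\,(1-s^{}_{I'})(1-s^{}_{I''})$.

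The decisive computation is then the elementary identity
\[
   (1-s^{}_{I'})(1-s^{}_{I''}) - (1-s^{}_{I^{G}_j})
   \,=\, s^{}_{I'}\,s^{}_{I''} + \rho^{}_{\beta}\,,
\]
which follows by expanding and using $s^{}_{I^{G}_j}=s^{}_{I'}+\rho^{}_{\beta}+s^{}_{I''}$. Since $\rho^{}_{\beta}>0$ and $s^{}_{I'},s^{}_{I''}\ge 0$, the right-hand side is strictly positive, and hence $\lambda^{}_H-\lambda^{}_G=C\,(s^{}_{I'}\,s^{}_{I''}+\rho^{}_{\beta})$.

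It then remains to verify $C>0$, which I expect to be the only genuinely delicate point. Factors with $i\neq j$ occur only when $G\neq\varnothing$, in which case each segment satisfies $I^{G}_i\subseteq L\setminus G\subsetneq L$ and is therefore a proper subset of $L$. For any proper subset $I\subsetneq L$, the standing assumption $\rho^{}_{\alpha}>0$ for all $\alpha$ together with $\sum_{\alpha\in L}\rho^{}_{\alpha}\le 1$ gives $s^{}_I<\sum_{\alpha\in L}\rho^{}_{\alpha}\le 1$, so that $1-s^{}_I>0$; thus every factor of $C$ is strictly positive (and $C=1$ in the degenerate case $G=\varnothing$, where the split segment is all of $L$ and no other segments exist). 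Consequently $C>0$, whence $\lambda^{}_H>\lambda^{}_G$, and the chained inequalities complete the lemma. The crux is precisely this strict positivity of $C$, and it is exactly where the hypothesis $\rho^{}_{\alpha}>0$ must be invoked.
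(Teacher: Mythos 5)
Correct, and essentially the paper's own argument: both proofs reduce to adjoining a single link $\beta$ (splitting one segment $I^{G}_j$ into $I'$ and $I''$) and then chain the one-step strict inequalities along a saturated chain from $G$ up to $H$, with the same positivity checks on the remaining factors resting on $\rho^{}_{\alpha}>0$ and $I^{G}_i\subsetneq L$ for $G\neq\varnothing$. The only deviation is that the paper compares multiplicatively, writing $\lambda^{}_H=\lambda^{}_G\,(1-s^{}_{I'})(1-s^{}_{I''})/(1-s^{}_{I^{G}_j})$, which forces a separate (easy) case for $\eta=0$, whereas your additive identity $(1-s^{}_{I'})(1-s^{}_{I''})-(1-s^{}_{I^{G}_j})=s^{}_{I'}s^{}_{I''}+\rho^{}_{\beta}$ avoids the division and absorbs that boundary case uniformly --- a minor tidying of the same proof.
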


\begin{proof}
Let $\varnothing \subsetneq G \subsetneq L$.
Then, for $H = G \stackrel{ \cdot }{ \cup } \left\{ \beta \right\}$,
with $\beta\in I_i^G$ for an arbitrary $i \in \left\{0, \ldots, \vert G \vert \right\}$,
we see from \eqref{def:lambda} that $z^{(L)}(H,H) = \lambda^{}_H$
and hence obtain
\[
\begin{split}
\lambda^{}_H \,&=\, 
  \Biggl( \prod_{j=0}^{i-1} \Bigl( 1 - \sum_{\alpha^{}_j \in I_j^G} \rho^{}_{\alpha^{}_j}\Bigr) \Biggr)
  \, \Bigl( 1 - \sum_{ \substack{\alpha^{}_i \in I_{i}^G \\ \alpha^{}_i<\beta}} \rho^{}_{ \alpha^{}_i} \Bigr)\,
  \Bigl( 1 - \sum_{ \substack{\alpha^{}_i \in I_i^G \\ \alpha^{}_i>\beta}} \rho^{}_{\alpha^{}_i} \Bigr)
  \, \Biggl( \prod_{j=i+1}^{\vert G\vert}
  \Bigl( 1 - \sum_{\alpha^{}_j\in I_j^G}\rho^{}_{\alpha^{}_j}\Bigr)\Biggr) \\ 
&= \lambda^{}_G \,
  \frac{ \Bigl( 1 - \sum_{ \substack{\alpha^{}_i \in I_i^G \\
      \alpha^{}_i<\beta}} \rho^{}_{\alpha^{}_i}\Bigr) 
      \Bigl( 1 - \sum_{ \substack{\alpha^{}_i \in I_i^G \\
      \alpha^{}_i>\beta}} \rho^{}_{\alpha^{}_i}\Bigr)}
  {\left( 1 - \sum_{\alpha^{}_i \in I_i^G} \rho^{}_{\alpha^{}_i} \right)} > \lambda^{}_G \, ,
\end{split}
\]
because all $\rho^{}_{\alpha}$ are positive,
as are all three terms in parentheses of the fraction,
and $\rho^{}_{\beta}>0$ by assumption. 
Finally, the argument also works for $\lambda^{}_{\varnothing} = \eta$, provided
$\eta>0$.
Since $\lambda^{}_G >0$ for all $G\neq \varnothing$, the claim trivially also holds for $\eta =0$.
The assertion then follows inductively for any $H \supsetneq G$. \qed
\end{proof}

The coefficients $c(G,H)$ can now be calculated recursively as follows.
\begin{theorem}\label{sumTheorem}
The coefficients $c(G,H)$ of \eqref{newtrafo} are determined by
$c(G,G) =1$ and
\begin{equation}\label{ckoeffrecursion}
    c ( G , H )\, =\, \frac{ 
            \sum_ {H \supsetneq K \supseteq G} c (G , K )\,
                                 z^{ ( L )} (K , H )}{ \lambda^{}_G - \lambda^{}_H} 
\end{equation}
for $H\supsetneq G$.
The coefficients of the inverse transformation of \eqref{newtrafo}, 
\begin{equation}\label{fromTtoU}
 T^{}_G \,=\, \sum_{ H \supseteq G} c^{*} ( G , H )\, U^{}_H \, , \quad\ \text{with}\ G \subseteq L \, ,
\end{equation}
are determined by 
\begin{equation}\label{cstarkoeff}
   c^* ( G , K )\, =\, - \sum_{ K \supsetneq H \supseteq G} c^* ( G , H )\, c ( H , K ) \, ,
\end{equation}
for $K\supsetneq G$ together with $c^{*} ( G , G ) = 1$.
\end{theorem}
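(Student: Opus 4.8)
The statement has two independent halves, and the plan is to derive each recursion by a comparison of coefficients in the basis of LDE operators. For the forward direction I would take the defining eigenvalue requirement \eqref{aimrecur}, $U^{}_G\circ\varPhi=\lambda^{}_G U^{}_G$, as the starting point and expand its left-hand side with the triangular action of the $T^{}_H$ on $\varPhi$ supplied by Theorem~\ref{thm:tvarnotcomplete}. Substituting the ansatz \eqref{newtrafo} and then \eqref{TGonPhi}, and interchanging the two sums, yields
\[
   U^{}_G\circ\varPhi \,=\, \sum_{H'\supseteq G} c(G,H')\,(T^{}_{H'}\circ\varPhi)
   \,=\, \sum_{H\supseteq G}\Bigl(\sum_{G\subseteq K\subseteq H} c(G,K)\,z^{(L)}(K,H)\Bigr)T^{}_H \, ,
\]
which is to be matched against $\lambda^{}_G U^{}_G=\sum_{H\supseteq G}\lambda^{}_G\,c(G,H)\,T^{}_H$.

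Comparison of the coefficients of each $T^{}_H$ is the decisive step, and it is legitimate for the same reason as in the proof of Theorem~\ref{thm:adevelop}: the $T^{}_H$ arise from the generically extremal recombinators $R^{}_H$ through the invertible M\"obius transform \eqref{trafo}, so the vectors $T^{}_H(p^{}_0)$ are (generically) linearly independent. Equating coefficients gives $\sum_{G\subseteq K\subseteq H} c(G,K)\,z^{(L)}(K,H)=\lambda^{}_G\,c(G,H)$. For $H=G$ this collapses to $z^{(L)}(G,G)=\lambda^{}_G$, which is precisely \eqref{def:lambda} and shows that the normalisation $c(G,G)=1$ is consistent. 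For $H\supsetneq G$ I would split off the diagonal term $K=H$, insert $z^{(L)}(H,H)=\lambda^{}_H$, and rearrange to $(\lambda^{}_G-\lambda^{}_H)\,c(G,H)=\sum_{G\subseteq K\subsetneq H} c(G,K)\,z^{(L)}(K,H)$. Here Lemma~\ref{lemma:zulambdas} enters: since $G\subsetneq H$ forces $\lambda^{}_G<\lambda^{}_H$, the denominator never vanishes, division is permitted, and \eqref{ckoeffrecursion} follows. As the right-hand side involves only sets $K$ strictly below $H$, the recursion is well founded and determines all $c(G,H)$ uniquely from $c(G,G)=1$; because every step is an equivalence, the resulting $U^{}_G$ do satisfy \eqref{aimrecur}.

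For the inverse transformation \eqref{fromTtoU} I would work directly with the two unit-diagonal, inclusion-triangular arrays $c(\cdot,\cdot)$ and $c^{*}(\cdot,\cdot)$ and show they are mutual inverses. Substituting $U^{}_H=\sum_{K\supseteq H} c(H,K)\,T^{}_K$ into \eqref{fromTtoU} and collecting coefficients of $T^{}_K$ gives
\[
   T^{}_G \,=\, \sum_{K\supseteq G}\Bigl(\sum_{G\subseteq H\subseteq K} c^{*}(G,H)\,c(H,K)\Bigr)T^{}_K \, ,
\]
so the same linear-independence argument forces $\sum_{G\subseteq H\subseteq K} c^{*}(G,H)\,c(H,K)=\delta^{}_{G,K}$. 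Reading off $K=G$ yields $c^{*}(G,G)=1$, while for $K\supsetneq G$ splitting off the term $H=K$ (with $c(K,K)=1$) and solving for $c^{*}(G,K)$ reproduces \eqref{cstarkoeff}. This recursion is again triangular and well founded, so $c^{*}$ exists and is unique.

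The computation itself is routine once the two structural inputs are in place; its entire weight rests on Theorem~\ref{thm:tvarnotcomplete}, which makes $T^{}_G\circ\varPhi$ triangular, and on Lemma~\ref{lemma:zulambdas}, which keeps the eigenvalues distinct. Accordingly, I expect the only genuinely delicate points to be the bookkeeping ones: justifying the comparison of coefficients (reducing, as above, to the generic linear independence of the $T^{}_H$) and verifying that the denominators $\lambda^{}_G-\lambda^{}_H$ are nonzero on the relevant index range --- both of which are already guaranteed by results in hand, so that no new difficulty arises.
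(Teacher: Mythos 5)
Your proposal is correct and follows essentially the same route as the paper: expanding $U^{}_G\circ\varPhi$ via Theorem~\ref{thm:tvarnotcomplete}, comparing coefficients of the $T^{}_H$, splitting off the diagonal term with $z^{(L)}(H,H)=\lambda^{}_H$, invoking Lemma~\ref{lemma:zulambdas} for the nonvanishing denominator, and obtaining $c^{*}$ from the composition identity $\sum_{K\supseteq H\supseteq G}c^{*}(G,H)\,c(H,K)=\delta^{}_{K,G}$. Your explicit justification of the coefficient comparison via the (generic) linear independence of the $T^{}_H$, inherited from the extremality of the $R^{}_H(p^{}_0)$ through the M\"obius transform, is a slightly more careful spelling-out of what the paper compresses into ``as the $T^{}_K$ are distinct,'' but it is the same argument.
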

\begin{proof}
Considering \eqref{eigenprob} with $c (G, G) = 1$,
comparing coefficients for $T^{}_H$, $H\supsetneq G$, 
and observing \eqref{def:lambda}, one obtains
\begin{equation*}
  z^{ (L ) } (G , H) + c (G, H) \lambda^{}_H
    + \sum_{ H \supsetneq K \supsetneq G } c (G , K)\, z^{ ( L ) } (K, H) 
       \,\stackrel{!}{=}\, \lambda^{}_G \, c (G, H) \, ,
\end{equation*}
and the recursion for $c(G,H)$ follows. It is always well-defined 
for all $H \supsetneq G$, since 
$\lambda^{}_G < \lambda^{}_H$ by Lemma~\ref{lemma:zulambdas}.
The recursion for the coefficients of the inverse transformation
follows directly from
\begin{equation*}
T^{}_G \,=\, \sum_{ H \supseteq G } c^{*} (G,H)\, U^{}_H \,=\, 
      \sum_{ H \supseteq G }c^{*} (G,H) \sum_{ K\supseteq H } c(H,K)\, T^{}_K \,=\,
      \sum_{ K \supseteq G} T^{}_K \sum_{ K \supseteq H \supseteq G} c^{*} (G,H)\, c(H,K) \, ,
\end{equation*}
which enforces $\sum_{ K \supseteq H \supseteq G } c^{*} (G,H)\, c(H,K) = \delta^{}_{K,G}$,
as the $T^{}_K$ are distinct. \qed
\end{proof}

We now identify those $T^{}_G$ 
that already give diagonal components of the discrete-time system:

\begin{theorem}\label{thm:direct}
    For all $G \subseteq L$ that satisfy 
    $\vert I_i^G \vert \leq 2$ for all $i \in \left\{ 0, \ldots , \vert G \vert \right\}$,
    one has
    \[
    T^{}_G \bigl( \varPhi (p) \bigr) \,=\, \lambda^{}_G T^{}_G(p) 
    \]
    for $p\in\mathcal{P}(X)$.
  \end{theorem}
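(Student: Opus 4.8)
The plan is to route the entire statement through the coefficient formula of Theorem~\ref{thm:tvarnotcomplete}. That theorem gives
\[
   T_G^{(L)} \circ \varPhi^{(L)} \,=\, \sum_{K \supseteq G} z^{(L)}(G,K)\, T_K^{(L)} \, ,
\]
and by \eqref{def:lambda} the diagonal coefficient is exactly $z^{(L)}(G,G) = \lambda^{}_G$. Hence the desired identity $T^{}_G(\varPhi(p)) = \lambda^{}_G T^{}_G(p)$ is equivalent to the vanishing of every off-diagonal coefficient, i.e.\ to $z^{(L)}(G,K) = 0$ for all $K \supsetneq G$. The whole proof therefore reduces to showing that, under the hypothesis $\vert I_i^G\vert \leq 2$ for all $i$, no strictly larger link set $K$ can contribute.

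The key tool is the vanishing criterion \eqref{zeqzero}, which asserts that $z^{(L)}(G,K) = 0$ as soon as $K$ meets the endpoint set $\bigcup_{0\leq i\leq \vert G\vert} \{\min(I_i^G), \max(I_i^G)\}$. First I would recall that the segments $I_i^G$, $0\leq i\leq \vert G\vert$, partition the complement $L\setminus G$ (they are precisely the links strictly between consecutive cuts of $G$, together with the leading and trailing pieces). Consequently, any $K \supsetneq G$ satisfies $\varnothing \neq K\setminus G \subseteq L\setminus G$, so I may pick some link $\gamma \in K\setminus G$, and this $\gamma$ lies in exactly one segment $I_i^G$.

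The decisive observation is the elementary combinatorial fact that a set of cardinality at most two coincides with the (unordered) pair of its minimal and maximal elements. Thus $\vert I_i^G\vert \leq 2$ forces $\gamma \in \{\min(I_i^G), \max(I_i^G)\}$, so $K$ indeed meets the endpoint set of \eqref{zeqzero}. Therefore $z^{(L)}(G,K) = 0$; as $K \supsetneq G$ was arbitrary, only the diagonal term survives and $T^{}_G \circ \varPhi = \lambda^{}_G T^{}_G$ follows.

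I do not anticipate a genuine obstacle once the problem is funnelled through Theorem~\ref{thm:tvarnotcomplete} and \eqref{zeqzero}: everything is then forced. The one point deserving care --- and the one that pins the threshold at two --- is exactly that observation about extreme elements. For a segment of three or more links, an \emph{interior} link could appear in $K\setminus G$ without touching any endpoint, and then \eqref{zeqzero} no longer guarantees vanishing; this is precisely the breakdown already signalled in the four-site analysis. It is therefore what distinguishes these particular $T^{}_G$ as genuine principal components from the general case, where the recursive correction of Section~\ref{sec:diagonal} becomes necessary.
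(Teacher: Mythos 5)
Your proposal is correct and takes essentially the same route as the paper: the paper's own proof likewise observes that the hypothesis $\vert I_i^G\vert\leq 2$ puts every $K\supsetneq G$ into the vanishing criterion \eqref{zeqzero}, so that $z^{(L)}(G,K)=\lambda^{}_G\,\delta^{}_{K,G}$, and then concludes via Theorem~\ref{thm:tvarnotcomplete}. Your explicit step --- that any $\gamma\in K\setminus G$ lies in some segment $I_i^G$ and, for $\vert I_i^G\vert\leq 2$, must coincide with $\min(I_i^G)$ or $\max(I_i^G)$ --- simply spells out what the paper leaves implicit.
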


\begin{proof}
   In this case, we have \eqref{zeqzero} for all $K\supsetneq G$,
   hence $z(G,K)=\lambda^{}_G \delta^{}_{K,G}$, from which the assertion follows via
   Theorem~\ref{thm:tvarnotcomplete}. \qed
\end{proof}
Note that  $\vert I_i^G \vert \leq 2$ for all $I_i^G \in \mathcal{L}^{}_G$ simply implies that each subsystem consists of
at most three sites, hence all subsystems can be reduced to the simple cases considered in Section~\ref{SCRdiscrete}.
Then, for such $G$, $c(G,H) = c^*(G,H) = \delta^{}_{G,H}$ for all $H\supseteq G$.

With the help of this transformation, we can finally specify 
the solution $p^{}_t$ of the recombination equation
in terms of the initial condition $p^{}_0$.
To this end, we first use the transformation~\eqref{trafo-2} from 
the recombinators to the $T^{}_G$ operators, and then relation~\eqref{fromTtoU} 
to arrive at the $U^{}_H$ operators, which finally diagonalise the system
according to \eqref{aimrecur}.
Finally, we use the appropriate inversions to return to
the recombinators:
\begin{equation}\label{finalsol}
\begin{split}
p^{}_t \,&=\, \varPhi^t (p^{}_0) \,=\, R^{}_{ \varnothing } ( \varPhi^t (p^{}_0))\, =\,
                \sum_{ G \subseteq L} T^{}_G ( \varPhi^t (p^{}_0 )) \,=\,
               \sum_{ G \subseteq L } \sum_{ H \supseteq G} c^* (G,H )\, U^{}_H ( \varPhi^t (p^{}_0)) \\ &=
               \sum_{ G \subseteq L} \sum_{ H \supseteq G } c^* (G,H) \lambda_H^t \,U^{}_H (p^{}_0) \,=\,
               \sum_{ G \subseteq L } \sum_{ H \supseteq G } c^* (G,H)
               \lambda_H^t \sum_{ M \supseteq H } c (H,M)\, T^{}_M (p^{}_0) \\ &=
               \sum_{ G \subseteq L } \sum_{ H \supseteq G} c^* (G,H) \lambda_H^t
             \sum_{ M \supseteq H } c (H,M) \sum_{ T \supseteq M } (-1)^{ \left| T-M \right|} R^{}_T (p^{}_0) \, .
\end{split}
\end{equation}
The coefficient functions can now be extracted as follows.
\begin{theorem} \label{thm:aexplicit}
The coefficient functions $a^{}_G ( t )$ of the solution~\eqref{solutiondiscrete}
of the recombination equation in discrete time may be expressed as 
\[
    a^{}_G (t) \,= \sum_{ M \subseteq G} (-1)^{ \vert G - M \vert} \sum_{ H \subseteq M}
         \sum_{ K \subseteq H}  c ( H , M )\,\lambda_H^t \, c^* ( K , H )  
\]
for all $G \subseteq L$.
Here, $c(H,M)$ and $c^{*}(K,H)$ are the coefficients of Theorem~\ref{sumTheorem}. \qed
\end{theorem}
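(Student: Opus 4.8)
The plan is to read off the coefficient functions $a^{}_G(t)$ directly from the explicit expression~\eqref{finalsol} for $p^{}_t$, comparing it with the representation $p^{}_t = \sum_{G\subseteq L} a^{}_G(t)\, R^{}_G(p^{}_0)$ from~\eqref{solutiondiscrete}. As already noted in the proof of Theorem~\ref{thm:adevelop}, for generic $p^{}_0$ and generic site spaces the vectors $R^{}_G(p^{}_0)$, $G\subseteq L$, are the vertices of the simplex ${\rm conv}\{R^{}_K(p^{}_0)\mid K\subseteq L\}$ and hence linearly independent; this legitimises a comparison of coefficients (degenerate cases follow by the usual reduction of the simplex). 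It therefore suffices to isolate, in~\eqref{finalsol}, the coefficient of a fixed $R^{}_T(p^{}_0)$.

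To do this, I would first rename the outermost summation index in~\eqref{finalsol} from $G$ to $G'$, so as to avoid a clash with the target set. The four nested sums in~\eqref{finalsol} then run over a chain of inclusions $G'\subseteq H\subseteq M\subseteq T$, carrying the factor $c^*(G',H)\,\lambda_H^t\,c(H,M)\,(-1)^{\vert T-M\vert}$. Fixing the target set $T$, the innermost sum $\sum_{T\supseteq M}$ collapses to a single term, while the inclusion $M\subseteq T$ survives as a summation constraint. Hence the coefficient of $R^{}_T(p^{}_0)$ equals $\sum_{G'\subseteq H\subseteq M\subseteq T} c^*(G',H)\,\lambda_H^t\,c(H,M)\,(-1)^{\vert T-M\vert}$.

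The remaining step is purely a reordering of this finite sum: I would sum first over $M\subseteq T$ (pulling out the sign $(-1)^{\vert T-M\vert}$), then over $H\subseteq M$ (carrying $\lambda_H^t\,c(H,M)$), and finally over $G'\subseteq H$ (carrying $c^*(G',H)$). Renaming the target set $T$ to $G$ and the innermost index $G'$ to $K$ then yields exactly
\[
    a^{}_G(t) \,= \sum_{M\subseteq G} (-1)^{\vert G-M\vert} \sum_{H\subseteq M}
         \sum_{K\subseteq H} c(H,M)\,\lambda_H^t\, c^*(K,H)\,,
\]
which is the assertion. The only point requiring care is the bookkeeping of the nested inclusions and the collapse of the outermost sum upon fixing $T$; there is no analytic difficulty, since all the structural input --- the triangular action~\eqref{TGonPhi}, the eigenrelation~\eqref{aimrecur}, and the invertibility encoded in the $c(G,H)$ and $c^*(G,H)$ of Theorem~\ref{sumTheorem} --- has already been assembled in the derivation of~\eqref{finalsol}.
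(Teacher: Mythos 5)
Your proposal is correct and follows essentially the same route as the paper, whose proof of Theorem~\ref{thm:aexplicit} consists precisely of the chain of transformations assembled in~\eqref{finalsol} followed by extraction of the coefficient of each $R^{}_T(p^{}_0)$, justified by the same genericity/extremality argument already invoked in the proof of Theorem~\ref{thm:adevelop}. Your explicit bookkeeping of the inclusion chain $K\subseteq H\subseteq M\subseteq G$ and the final renaming of indices matches the paper's (tacit) computation exactly.
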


To derive the asymptotic behaviour for large iteration numbers, we need the following
property of the coefficients.
\begin{lemma}\label{cforL}
 The coefficients $c(G,L)$ and $c^{*}(G,L)$ satisfy 
$c(G,L) = c^{*}(G,L) = \delta^{}_{G,L}$
for arbitrary $\varnothing\subseteq G\subseteq L$.
\end{lemma}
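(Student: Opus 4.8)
The plan is to reduce both claims directly to Corollary~\ref{corohelp}, which asserts $z^{(L)}(K,L)=0$ for every $\varnothing\subseteq K\subsetneq L$; remarkably, no induction over the subset lattice is needed, because in each recursion the relevant top-layer coefficients already vanish.

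First I would treat $c(G,L)$. The case $G=L$ is immediate from the normalisation $c(L,L)=1$. For $G\subsetneq L$, I apply the defining recursion~\eqref{ckoeffrecursion} with $H=L$, giving
\[
  c(G,L)\,=\,\frac{\sum_{L\supsetneq K\supseteq G}c(G,K)\,z^{(L)}(K,L)}{\lambda^{}_G-\lambda^{}_L}\,.
\]
Every summation index satisfies $G\subseteq K\subsetneq L$, so Corollary~\ref{corohelp} forces $z^{(L)}(K,L)=0$ for each term, and the entire numerator vanishes irrespective of the (already determined) values $c(G,K)$. The denominator is nonzero, since $\lambda^{}_G<\lambda^{}_L$ by Lemma~\ref{lemma:zulambdas}. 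Hence $c(G,L)=0$, and combining both cases yields $c(G,L)=\delta^{}_{G,L}$.

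For $c^{*}(G,L)$ I would argue analogously, now feeding in the result just obtained. Again $G=L$ gives $c^{*}(L,L)=1$ directly from its normalisation. For $G\subsetneq L$, the inverse recursion~\eqref{cstarkoeff} with $K=L$ reads
\[
  c^{*}(G,L)\,=\,-\sum_{L\supsetneq H\supseteq G}c^{*}(G,H)\,c(H,L)\,.
\]
Every index satisfies $G\subseteq H\subsetneq L$, so $c(H,L)=\delta^{}_{H,L}=0$ by the first part; every summand vanishes and thus $c^{*}(G,L)=0=\delta^{}_{G,L}$.

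The only point requiring a moment's thought is conceptual rather than computational: one must notice that the vanishing of the $z^{(L)}(K,L)$ (respectively of the $c(H,L)$) collapses each recursion on its top layer, so the values of the lower coefficients $c(G,K)$ and $c^{*}(G,H)$ never enter the argument. In other words, the tempting double induction over the lattice of subsets is superfluous --- the statement is simply read off the final step of each recursion.
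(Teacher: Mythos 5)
Your proof is correct and takes essentially the same route as the paper's own (much terser) argument: evaluate the recursion~\eqref{ckoeffrecursion} at $H=L$, observe that Corollary~\ref{corohelp} annihilates every term in the numerator, and then insert $c(H,L)=\delta^{}_{H,L}$ into the recursion~\eqref{cstarkoeff} to get the claim for $c^{*}(G,L)$. Your write-up simply makes explicit two points the paper leaves implicit --- that the denominator $\lambda^{}_G-\lambda^{}_L$ is nonzero by Lemma~\ref{lemma:zulambdas}, and that no induction over the subset lattice is needed since each recursion collapses at its top layer.
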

\begin{proof}
We have $c(G,G)=c^{*}(G,G) =1$ for all $G$ by Theorem~\ref{sumTheorem}.
The claim for $c(G,L)$ now follows from the recursion~\eqref{ckoeffrecursion}
together with Corollary~\ref{corohelp}. Inserting this into recursion~\eqref{cstarkoeff}
establishes the relation for the $c^{*}(G,L)$. \qed
\end{proof}

As an example, the path to a solution via the above chain of transformations 
for the model with five sites will be presented in Appendix~A.

Finally, let us consider what happens in the limit as $t\to\infty$.
\begin{prop}
The solution $p^{}_t$ of the recombination equation \emph{\eqref{rekooper}} 
with initial condition $p^{}_0$ satisfies
\[
  p^{}_t \xrightarrow{ t \to \infty} R^{}_L (p^{}_0 ) =\,
    \bigotimes_{ i = 0 }^{n}( \pi^{}_{i \cdot} p^{}_0 )\, ,
\]
with exponentially fast convergence in the norm topology.
\end{prop}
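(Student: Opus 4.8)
The plan is to read the limit off directly from the explicit solution and then bound the remaining terms with a geometric estimate. Starting from the intermediate representation appearing in \eqref{finalsol}, namely
\[
   p^{}_t \,=\, \sum_{G\subseteq L}\sum_{H\supseteq G} c^{*}(G,H)\,\lambda_H^t\, U^{}_H(p^{}_0)\,,
\]
I would first reorganise the double sum by collecting all contributions with a fixed $H$, which gives $p^{}_t = \sum_{H\subseteq L}\bigl(\sum_{G\subseteq H} c^{*}(G,H)\bigr)\lambda_H^t\,U^{}_H(p^{}_0)$. The entire $t$-dependence now sits in the powers $\lambda_H^t$, so the asymptotics are governed solely by the relative sizes of the eigenvalues $\lambda^{}_H$.

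The key observation is that $\lambda^{}_L = 1$ is the unique dominant eigenvalue. Indeed, by \eqref{def:lambda} each factor $1-\sum_{\alpha_i\in I_i^G}\rho^{}_{\alpha_i}$ lies in $[0,1]$, since the $I_i^G$ are disjoint subsets of $L$ and $\sum_{\alpha\in L}\rho^{}_\alpha\leq 1$; hence every $\lambda^{}_G\in[0,1]$. For $G=L$ all the $I_i^L$ are empty, so $\lambda^{}_L=1$. Lemma~\ref{lemma:zulambdas} then yields $\lambda^{}_G<\lambda^{}_L=1$ for every $G\subsetneq L$, so that $\lambda_H^t\to 0$ for all $H\neq L$ while $\lambda_L^t\equiv 1$.

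It remains to identify the surviving $H=L$ term and to estimate the rest. By Lemma~\ref{cforL} one has $c^{*}(G,L)=\delta^{}_{G,L}$, so the prefactor $\sum_{G\subseteq L}c^{*}(G,L)$ equals $1$; moreover $U^{}_L(p^{}_0)=\sum_{M\supseteq L}c(L,M)\,T^{}_M(p^{}_0)=c(L,L)\,T^{}_L(p^{}_0)=T^{}_L(p^{}_0)$, and $T^{}_L=R^{}_L$ directly from \eqref{trafo}. Thus the $H=L$ contribution is exactly $R^{}_L(p^{}_0)$, and $R^{}_L(p^{}_0)=\bigotimes_{i=0}^{n}(\pi^{}_{i\cdot}p^{}_0)$ follows by iterating the definition \eqref{rekombinator} via Proposition~\ref{recoprop}, since cutting at every link produces the independent product of all single-site marginals. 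Writing $\lambda_{\max}:=\max_{H\subsetneq L}\lambda^{}_H<1$, the difference becomes
\[
   p^{}_t-R^{}_L(p^{}_0)\,=\,\sum_{H\subsetneq L}\Bigl(\sum_{G\subseteq H}c^{*}(G,H)\Bigr)\lambda_H^t\,U^{}_H(p^{}_0)\,,
\]
whence $\lVert p^{}_t-R^{}_L(p^{}_0)\rVert\leq C\,\lambda_{\max}^t$ with a finite constant $C$ absorbing the finitely many $t$-independent coefficients and the norms $\lVert U^{}_H(p^{}_0)\rVert$. This is the asserted exponentially fast convergence. The only genuinely delicate point is the eigenvalue bookkeeping --- verifying $\lambda^{}_L=1$ and that it is \emph{strictly} dominant --- which is precisely what Lemma~\ref{lemma:zulambdas} supplies; everything else is routine reindexing and a geometric-series bound.
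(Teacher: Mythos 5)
Your proposal is correct and takes essentially the same route as the paper's own proof: both start from the representation \eqref{finalsol}, isolate the $H=L$ term via Lemma~\ref{cforL} together with $U^{}_L = T^{}_L = R^{}_L$, and bound the remainder in the variation norm using $\lambda^{}_H < \lambda^{}_L = 1$ for all $H \subsetneq L$ (Lemma~\ref{lemma:zulambdas}). Your additions --- the explicit reindexing of the double sum, the verification of $U^{}_L = R^{}_L$ from the definitions, and naming the explicit rate $\lambda_{\max} := \max_{H\subsetneq L}\lambda^{}_H < 1$ --- are welcome elaborations of the same argument rather than a different approach.
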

\begin{proof}
When expressing $p^{}_t$ in terms of $U^{}_H$ according to \eqref{finalsol},
we first observe 
$p^{}_t = U^{}_L(p^{}_0) + \sum_{ G \subsetneq L} 
    \sum_{ H \supseteq G}  c^* \left( G, H \right) \lambda_H^t\, U^{}_H (p^{}_0)$,
because $\lambda^{}_L = 1$ and $c^*(G,L)= \delta^{}_{G,L}$  by Lemma~\ref{cforL}.
Since $U^{}_L = R^{}_L$, we obtain the following estimate in the variation norm
\[
\begin{split}
\Vert p^{}_t - R^{}_L (p^{}_0) \Vert &=
\Big\| \sum_{ G \subsetneq L} 
    \sum_{ H \supseteq G} c^* \left( G, H \right) \lambda_H^t U^{}_H (p^{}_0) \Big\| \\
  &\leq  \sum_{H \subsetneq L} \lambda_H^t  \Big\| \sum_{G \subseteq H}
     c^*\left(G,H\right)  U^{}_H (p^{}_0) \Big\| \xrightarrow{t\to\infty}0 \, ,
\end{split}
\]
which establishes the claim since $\lambda^{}_H < 1$ for $H \neq L$. \qed
\end{proof}
As was to be expected, the solution of the recombination equation 
converges towards the independent combination of the alleles, that is
towards \emph{linkage equilibrium}.

\section{Discussion}\label{sec:discuss}
In this paper, we have investigated the dynamics of an `infinite' population 
that evolves due to recombination alone.
To this end, we assumed discrete (non-overlapping) generations, 
and restricted ourselves to the case of single crossovers.
Previous results had shown that the corresponding single-crossover dynamics
in continuous time admits a closed solution \cite{reco}.
This astonishing result is concordant with a `hidden' linearity 
in the system that is due to independence of links.
The fact that crossovers at different links occur independently 
manifests itself in the product structure of the coefficient functions of the solution
ensuing from the linear action of the 
nonlinear recombination operators along the solution of the recombination equation.
Additionally, in \cite{reco}, a certain set of linkage disequilibria
was found that linearise and diagonalise the dynamics.

Since the overwhelming part of the literature deals with discrete-time
models, our aim was to find out whether, and to what extent, these continuous-time
results carry over to single-crossover dynamics in discrete time.
We could show that the discrete-time dynamics is far more complex than
the continuous-time one, and, as a consequence, a closed solution 
cannot be given.

The main reason for these difficulties lies in the fact that the key
feature of the continuous-time model, the independence of links, does not
carry over to discrete time. This is due to interference: 
The occurrence of a recombination event in the discrete-time model  
forbids any further crossovers in the same generation.
In connection with this, the recombinators do not, in general,
act linearly on the right-hand side
of the recombination equation.
Likewise, the coefficient functions of the solution follow a nonlinear
iteration that cannot be solved explicitly.

While Geiringer~\cite{Geiringer} developed a skilful procedure
for the generation-wise evaluation of these coefficients, we constructed
a method that allows for an {\em explicit} formula valid for all times,
once the coefficients of the transformation have been determined
recursively for a given system.

As in previous approaches, this is achieved by a transformation
of the nonlinear, coupled system of equations to a linear diagonal one.
This was done before by Bennett~\cite{Bennett} and
Dawson~\citep{Kevin1, Kevin2} for the more general
recombination equation (without restriction to single crossovers),
and they presented an appropriate transformation 
that includes parameters that must be determined recursively.
Unfortunately, the corresponding derivations are rather technical and
fail to reveal the underlying mathematical structure.
It was our aim to improve on this and add some structural insight.
Unlike the previous approaches, we proceeded in two steps: 
first linearisation followed by diagonalisation.
More precisely, it turns out that the LDE operators $T^{}_G$, which
both linearise and diagonalise the continuous-time system,
still {\em linearise} the discrete-time dynamics, but fail 
to {\em diagonalise} it for four or more loci.
However, the resulting linear system may
be diagonalised in a second step. This relies on linear combinations
$U^{}_G$ of the $T^{}_G$, with coefficients derived in a recursive 
manner.

As it must be, the transformation agrees with the one of Dawson~\citep{Kevin1, Kevin2}
when translated into his framework.
(Note that our $c(G,H)$ are coefficients of $T^{}_H$, whereas his coefficients
belong to components of $R^{}_H(p)$.
Note also that SCR does {\em not} belong to the singular cases he excludes).
It remains an interesting open problem how much of the above findings 
can be transferred to the general recombination model
(i.e. without the restriction to single crossovers),
where one loses the simplifying structure of ordered partitions.

\begin{acknowledgements}
It is our pleasure to thank Th. Hustedt for critically reading
the manuscript and K. Schneider for valuable suggestions
to further improve it.
This work was supported by DFG (Research training group Bioinformatics,
and Dutch-German Bilateral Research Group on 
Mathematics of Random Spatial Models in Physics and Biology).
\end{acknowledgements}

\section*{Appendix A: Five Sites}

To illustrate the construction, let us spell out the example of five sites. 
We have $S=\left\{ 0,1,2,3,4 \right\}$ 
and $ L= \left\{ \frac{1}{2}, \frac{3}{2}, \frac{5}{2}, \frac{7}{2} \right\}$,
the corresponding recombination probabilities $\rho^{}_{\alpha}$, $\alpha\in L$,
$\eta = 1 - \rho^{}_{ \frac{1}{2} } - \rho^{}_{ \frac{3}{2} }
          - \rho^{}_{ \frac{5}{2} } - \rho^{}_{ \frac{7}{2} }$, 
and a given initial population $p^{}_0$.
Aiming at determining the coefficient functions $a^{}_G(t)$ for all $G\subseteq L$,
we can immediately write down $a^{}_{ \varnothing } (t) = \eta^t$,
$a^{}_{ \frac{1}{2} } (t) = (\eta + \rho^{}_{\frac{1}{2}})^t- \eta^t$,
$a^{}_{ \frac{7}{2} } (t) = (\eta + \rho^{}_{\frac{7}{2}})^t - \eta^t$ and 
$a^{}_{ \left\{ \frac{1}{2}, \frac{7}{2}\right\} } (t) =
             \eta^t - (\eta + \rho^{}_{\frac{1}{2}})^t - (\eta + \rho^{}_{\frac{7}{2}})^t
            + (\eta + \rho^{}_{\frac{1}{2}} + \rho^{}_{\frac{7}{2}})^t$,
see \eqref{adirectly}.

If we wanted to determine the remaining coefficient functions  $a^{}_G(t)$ 
for a given time $t$,
they could be calculated using the method of Geiringer~\cite{Geiringer}
(i.e. Theorem~\ref{thm:adevelop}).
But since we aim at a closed solution for {\em all} $t$, 
we use the procedure developed above.
To determine the coefficients of Theorem~\ref{thm:aexplicit},
we have to calculate the corresponding  $c(G,H)$ and
$c^*(G,H)$.
Theorem~\ref{sumTheorem}~and~\ref{thm:direct} imply $U^{}_L = T^{}_L$, 
$U^{}_{L \setminus \{ \alpha \}} = T^{}_{L \setminus \{ \alpha \}}$ 
for all $\alpha \in L$, 
$U^{}_{L \setminus \{ \alpha, \beta \}} = T^{}_{L \setminus \{ \alpha, \beta \}}$
for all $\alpha, \beta \in L$, as well as
$U^{}_{ \frac{3}{2}} = T^{}_{ \frac{3}{2}}$ and 
$U^{}_{ \frac{5}{2}} = T^{}_{ \frac{5}{2}}$.
Hence, in these cases, the only non-vanishing coefficients are
$c(L,L) = c(L \setminus \{ \alpha \},L \setminus \{ \alpha \})
= c(L \setminus \{ \alpha, \beta \}, L \setminus \{ \alpha, \beta \})
= c(\{\tfrac{3}{2}\}, \{ \tfrac{3}{2} \}) = 
c(\{ \tfrac{5}{2} \}, \{ \tfrac{5}{2}\} ) = 1$ for all $\alpha,\beta\in L$.
It remains to determine $U^{}_{ \frac{1}{2}}$, $U^{}_{ \frac{7}{2}}$
and $U^{}_{ \varnothing}$.

\begin{enumerate}
 \item {\em Constructing $U^{}_{ \frac{1}{2}}$}:

The recursion starts with $c(\{\tfrac{1}{2}\},\{\tfrac{1}{2}\}) = 1$. 
Following \eqref{zeqzero}, $z^{(L)}(\{ \tfrac{1}{2} \}, H) =0$
for all $H\supsetneq \{\tfrac{1}{2}\}$ except for 
$H = \{\tfrac{1}{2},\tfrac{5}{2}\}$,
and thus the only non-zero $c(\{\tfrac{1}{2}\},H)$, $H\supsetneq \{\tfrac{1}{2}\}$,
is 
\[
    c(\{\tfrac{1}{2}\},\{\tfrac{1}{2},\tfrac{5}{2}\}) =
      \frac{ 
           z( \{\frac{1}{2}\}, \{\frac{1}{2},\frac{5}{2}\} ) }
         { \lambda^{}_{ \frac{1}{2}  } - \lambda^{}_{ \{ \frac{1}{2}, \frac{5}{2}\} } } =
         \frac{ 
             \rho^{}_{ \frac{3}{2} }  \rho^{}_{ \frac{7}{2} } }
         { \rho^{}_{ \frac{5}{2} } + \rho^{}_{ \frac{3}{2}}  \rho^{}_{ \frac{7}{2}} } \, ,
\]
where we have used the recursion~\eqref{ckoeffrecursion} and 
$\lambda^{}_{ \frac{1}{2} } = 1-\rho^{}_{\frac{3}{2}}-\rho^{}_{\frac{5}{2}}
- \rho^{}_{\frac{7}{2}}$,
$\lambda^{}_{ \{\frac{1}{2},\frac{5}{2} \}}~ =~ 
(1~-~\rho^{}_{\frac{3}{2}})~(1~-~\rho^{}_{\frac{7}{2}})$.
So, for the transformation~\eqref{newtrafo} we obtain 
\[
   U^{}_{ \frac{1}{2} } =
      T^{}_{ \frac{1}{2} } + \frac{ \rho^{}_{ \frac{3}{2} } 
             \rho^{}_{ \frac{7}{2} } }
         { \rho^{}_{ \frac{5}{2} } + \rho^{}_{ \frac{3}{2} }
             \rho^{}_{\frac{7}{2}} }
        T^{}_{\{ \frac{1}{2},\frac{5}{2}\} } \, ,
\]
so that $U^{}_{ \frac{1}{2}}\circ\varPhi = ( 1-\rho^{}_{\frac{3}{2}}-\rho^{}_{\frac{5}{2}}
- \rho^{}_{\frac{7}{2}}) U^{}_{\frac{1}{2}}$.
Analogously, 
\[
    U^{}_{ \frac{7}{2} } =
      T^{}_{ \frac{7}{2} } + \frac{ \rho^{}_{ \frac{1}{2} } 
                  \rho^{}_{\frac{5}{2}}}
           { \rho^{}_{ \frac{3}{2} } + \rho^{}_{ \frac{1}{2} } 
        \rho^{}_{ \frac{5}{2} } }
                 T^{}_{ \{ \frac{3}{2}, \frac{7}{2} \} } \, .
\]

\item {\em Constructing $U^{}_{\varnothing}$}:

By \eqref{zeqzero}, the only non-vanishing coefficients are 
$c(\varnothing, \varnothing)$, $c(\varnothing, \{ \frac{3}{2} \})$,
$c(\varnothing, \{ \frac{5}{2} \})$, and 
$c(\varnothing, \{ \frac{3}{2}, \frac{5}{2} \})$. 
They are determined by
the recursion~\eqref{ckoeffrecursion} and lead to the following
transformation~\eqref{newtrafo}:
\[
  U^{}_{ \varnothing } =
            T^{}_{\varnothing} +
    \frac{
   \rho^{}_{ \frac{1}{2} } ( \rho^{}_{ \frac{5}{2} } + \rho^{}_{ \frac{7}{2} } )  }
     { \rho^{}_{ \frac{3}{2} } + \rho^{}_{ \frac{1}{2} } 
          ( \rho^{}_{ \frac{5}{2} } + \rho^{}_{ \frac{7}{2} } ) }
              T^{}_{ \frac{3}{2} } +
       \frac{
          ( \rho^{}_{ \frac{1}{2} } + \rho^{}_{ \frac{3}{2} } ) \rho^{}_{ \frac{7}{2} } }
             { \rho^{}_{ \frac{5}{2} } +
          ( \rho^{}_{ \frac{1}{2} } + \rho^{}_{ \frac{3}{2} } ) \rho^{}_{ \frac{7}{2} } }
              T^{}_{ \frac{5}{2} } +
       \frac{ \rho^{}_{ \frac{1}{2} }  \rho^{}_{ \frac{7}{2} } }
       { \rho^{}_{ \frac{1}{2} }  \rho^{}_{ \frac{7}{2} } +
      \rho^{}_{ \frac{3}{2} } + \rho^{}_{ \frac{5}{2} } }
              T^{}_{ \left\{ \frac{3}{2}, \frac{5}{2} \right\} } \, .
\]
\end{enumerate}

Now that we know the $c(G, H)$, the coefficients $c^*(G,H)$ 
are calculated via \eqref{cstarkoeff}.
Finally, the remaining coefficient functions follow from  Theorem~\ref{thm:aexplicit}:
\[
 \begin{split}
    a^{}_{\frac{3}{2}} (t) &= \frac{ \rho^{}_{ \frac{3}{2} } }
          {\rho^{}_{\frac{1}{2}}(\rho^{}_{\frac{5}{2}}+\rho^{}_{\frac{7}{2}})
                 + \rho^{}_{\frac{3}{2}}}
       (\lambda_{\frac{3}{2}}^t - \lambda_{\varnothing}^t) \\
   a^{}_{\frac{5}{2}} (t) &= \frac{ \rho^{}_{ \frac{5}{2} } }
          {\rho^{}_{\frac{7}{2}}(\rho^{}_{\frac{1}{2}}+\rho^{}_{\frac{3}{2}})
                 + \rho^{}_{\frac{5}{2}}}
       (\lambda_{\frac{5}{2}}^t - \lambda_{\varnothing}^t) \\
   a^{}_{\left\{\frac{1}{2},\frac{3}{2}\right\}} (t) &=
         \lambda_{ \{ \frac{1}{2}, \frac{3}{2} \}}^t - \lambda_{ \frac{1}{2}}^t
              - \frac{ \rho^{}_{ \frac{3}{2} } }
          {\rho^{}_{\frac{1}{2}}(\rho^{}_{\frac{5}{2}}+\rho^{}_{\frac{7}{2}})
                 + \rho^{}_{\frac{3}{2}}}
           ( \lambda_{ \frac{3}{2} }^t - \lambda_{ \varnothing}^t ) \\
  a^{}_{\left\{\frac{1}{2},\frac{5}{2}\right\}} (t) &=
         \frac{ \rho^{}_{ \frac{5}{2} } }
        { \rho^{}_{ \frac{3}{2} } \rho^{}_{ \frac{7}{2} } + \rho^{}_{ \frac{5}{2} } }
     ( \lambda_{ \{ \frac{1}{2},\frac{5}{2}\} }^t - \lambda_{ \frac{1}{2}}^t )        
   - \frac{ \rho^{}_{ \frac{5}{2}}} 
  {\rho^{}_{ \frac{7}{2} } ( \rho^{}_{ \frac{1}{2}} + \rho^{}_{\frac{3}{2}} )
      + \rho^{}_{ \frac{5}{2}} } ( \lambda_{ \frac{5}{2} }^t - \lambda_{\varnothing }^t )  \\
 a^{}_{\left\{\frac{3}{2},\frac{5}{2}\right\}} (t) &=
       \frac{ \rho^{}_{ \frac{3}{2} } + \rho^{}_{ \frac{5}{2} } }
        { \rho^{}_{ \frac{1}{2} } \rho^{}_{ \frac{7}{2} } 
           + \rho^{}_{ \frac{3}{2} } +  \rho^{}_{ \frac{5}{2} } }
        \lambda_{ \{ \frac{3}{2}, \frac{5}{2} \}}^t
     - \frac{ \rho^{}_{ \frac{3}{2}}} 
  {\rho^{}_{ \frac{1}{2} } ( \rho^{}_{ \frac{5}{2}} + \rho^{}_{\frac{7}{2}} )
      + \rho^{}_{ \frac{3}{2}} } \lambda_{\frac{3}{2}}^t
     - \frac{ \rho^{}_{ \frac{5}{2}}} 
  {\rho^{}_{ \frac{7}{2} } ( \rho^{}_{ \frac{1}{2}} + \rho^{}_{\frac{3}{2}} )
      + \rho^{}_{ \frac{5}{2}} } \lambda_{\frac{5}{2}}^t \\
   &\phantom{=} + \Bigl(1 - 
           \frac{ ( \rho^{}_{ \frac{1}{2}} + \rho^{}_{ \frac{3}{2}} ) \rho^{}_{ \frac{7}{2}} } 
     {\rho^{}_{ \frac{7}{2} } ( \rho^{}_{ \frac{1}{2}} + \rho^{}_{\frac{3}{2}} )
        + \rho^{}_{ \frac{5}{2}} }
     - \frac{ ( \rho^{}_{ \frac{5}{2}} + \rho^{}_{ \frac{7}{2}} ) \rho^{}_{ \frac{1}{2}} } 
     {\rho^{}_{ \frac{1}{2} } ( \rho^{}_{ \frac{5}{2}} + \rho^{}_{\frac{7}{2}} )
        + \rho^{}_{ \frac{3}{2}} }
       + \frac{ \rho^{}_{ \frac{1}{2}} \rho^{}_{ \frac{7}{2}} } 
        { \rho^{}_{ \frac{1}{2} } \rho^{}_{ \frac{7}{2}} + \rho^{}_{\frac{3}{2}} 
         + \rho^{}_{ \frac{5}{2}} } \Bigr) \lambda_{\varnothing}^t \\
a^{}_{\left\{\frac{3}{2},\frac{7}{2}\right\}} (t) &=
        \frac{ \rho^{}_{ \frac{3}{2} } }
        { \rho^{}_{ \frac{1}{2} } \rho^{}_{ \frac{5}{2} } + \rho^{}_{ \frac{3}{2} } }
     ( \lambda_{ \{ \frac{3}{2},\frac{7}{2}\} }^t - \lambda_{ \frac{7}{2}}^t )
     - \frac{ \rho^{}_{ \frac{3}{2}}} 
  {\rho^{}_{ \frac{1}{2} } ( \rho^{}_{ \frac{5}{2}} + \rho^{}_{\frac{7}{2}} )
      + \rho^{}_{ \frac{3}{2}} } (\lambda_{ \frac{3}{2} }^t - \lambda_{ \varnothing }^t ) \\
 a^{}_{\left\{\frac{5}{2},\frac{7}{2}\right\}} (t) &=
         \lambda_{ \{ \frac{5}{2}, \frac{7}{2} \}}^t - \lambda_{ \frac{7}{2}}^t
              - \frac{ \rho^{}_{ \frac{5}{2} } }
          {\rho^{}_{\frac{7}{2}}(\rho^{}_{\frac{1}{2}}+\rho^{}_{\frac{3}{2}})
                 + \rho^{}_{\frac{5}{2}}}
           ( \lambda_{ \frac{5}{2} }^t - \lambda_{ \varnothing}^t ) \\
\end{split}
\]
\[
\begin{split}
a^{}_{\left\{ \frac{1}{2}, \frac{3}{2},\frac{5}{2}\right\}} (t) &=
      \lambda_{\left\{ \frac{1}{2}, \frac{3}{2},\frac{5}{2}\right\}}^t
       - \lambda_{ \left\{ \frac{1}{2}, \frac{3}{2}\right\}}^t
       -  \frac{ \rho^{}_{ \frac{5}{2} } }
          {\rho^{}_{\frac{5}{2}} + \rho^{}_{\frac{3}{2}} \rho^{}_{\frac{7}{2}} } 
       ( \lambda_{ \left\{ \frac{1}{2},\frac{5}{2} \right\}}^t - \lambda_{\frac{1}{2}}^t ) 
        - \frac{ \rho^{}_{ \frac{3}{2} } + \rho^{}_{ \frac{5}{2} }}
          {\rho^{}_{\frac{3}{2}} + \rho^{}_{\frac{5}{2}} +
                       \rho^{}_{\frac{1}{2}} \rho^{}_{\frac{7}{2}} } 
                \lambda_{\left\{ \frac{3}{2}, \frac{5}{2} \right\}}^t  \\
      &\phantom{=} + \frac{ \rho^{}_{ \frac{3}{2}} } 
     {\rho^{}_{ \frac{1}{2} } ( \rho^{}_{ \frac{5}{2}} + \rho^{}_{\frac{7}{2}} )
        + \rho^{}_{ \frac{3}{2}} } \lambda_{\frac{3}{2}}^t  
         +  \frac{  \rho^{}_{ \frac{5}{2}} } 
     {\rho^{}_{ \frac{7}{2} } ( \rho^{}_{ \frac{1}{2}} + \rho^{}_{\frac{3}{2}} )
        + \rho^{}_{ \frac{5}{2}} }  
           \lambda_{\frac{5}{2}}^t  \\
     &\phantom{=} -  \Bigl( 1 - 
           \frac{ ( \rho^{}_{ \frac{1}{2}} + \rho^{}_{ \frac{3}{2}} ) \rho^{}_{ \frac{7}{2}} } 
     {\rho^{}_{ \frac{7}{2} } ( \rho^{}_{ \frac{1}{2}} + \rho^{}_{\frac{3}{2}} )
        + \rho^{}_{ \frac{5}{2}} }  
        - \frac{ ( \rho^{}_{ \frac{5}{2}} + \rho^{}_{ \frac{7}{2}} ) \rho^{}_{ \frac{1}{2}} } 
     {\rho^{}_{ \frac{1}{2} } ( \rho^{}_{ \frac{5}{2}} + \rho^{}_{\frac{7}{2}} )
        + \rho^{}_{ \frac{3}{2}} }  
        + \frac{ \rho^{}_{ \frac{1}{2}} \rho^{}_{ \frac{7}{2}} } 
     {\rho^{}_{ \frac{1}{2} } \rho^{}_{ \frac{7}{2}} + \rho^{}_{\frac{3}{2}} 
        + \rho^{}_{ \frac{5}{2}} }  \Bigr) \lambda_{\varnothing}^t\\
 a^{}_{\left\{ \frac{1}{2}, \frac{3}{2},\frac{7}{2}\right\}} (t) &=
       \lambda_{\left\{ \frac{1}{2}, \frac{3}{2},\frac{7}{2}\right\}}^t
       - \lambda_{ \left\{ \frac{1}{2}, \frac{3}{2}\right\}}^t
       - \lambda_{ \left\{ \frac{1}{2}, \frac{7}{2}\right\}}^t
       - \frac{ \rho^{}_{ \frac{3}{2} } }
          {\rho^{}_{\frac{3}{2}} + \rho^{}_{\frac{1}{2}} \rho^{}_{\frac{5}{2}} } 
           \lambda_{ \left\{ \frac{3}{2}, \frac{7}{2}\right\}}^t
       + \lambda_{\frac{1}{2}}^t \\
       &\phantom{=} +   \frac{ \rho^{}_{ \frac{3}{2} } }
          {\rho^{}_{\frac{3}{2}} +  \rho^{}_{\frac{1}{2}}
                (\rho^{}_{\frac{5}{2}} + \rho^{}_{\frac{7}{2}}) } \lambda_{\frac{3}{2}}^t
       +   \frac{ \rho^{}_{ \frac{3}{2} } }
          {\rho^{}_{\frac{3}{2}} +  \rho^{}_{\frac{1}{2}}
                \rho^{}_{\frac{5}{2}}} \lambda_{\frac{7}{2}}^t
       -  \frac{ \rho^{}_{ \frac{3}{2} } }
          {\rho^{}_{\frac{3}{2}} +  \rho^{}_{\frac{1}{2}}
                (\rho^{}_{\frac{5}{2}} + \rho^{}_{\frac{7}{2}}) } \lambda_{\varnothing}^t \\
 a^{}_{\left\{ \frac{1}{2}, \frac{5}{2},\frac{7}{2}\right\}} (t) &=
       \lambda_{\left\{ \frac{1}{2}, \frac{5}{2},\frac{7}{2}\right\}}^t
       - \lambda_{ \left\{ \frac{5}{2}, \frac{7}{2}\right\}}^t
       - \lambda_{ \left\{ \frac{1}{2}, \frac{7}{2}\right\}}^t
       - \frac{ \rho^{}_{ \frac{5}{2} } }
          {\rho^{}_{\frac{5}{2}} + \rho^{}_{\frac{3}{2}} \rho^{}_{\frac{7}{2}} } 
        \lambda_{ \left\{ \frac{1}{2}, \frac{5}{2}\right\}}^t
       + \lambda_{\frac{7}{2}}^t \\
       &\phantom{=} +   \frac{ \rho^{}_{ \frac{5}{2} } }
          {\rho^{}_{\frac{5}{2}} +  \rho^{}_{\frac{3}{2}} \rho^{}_{\frac{7}{2}} }
      \lambda_{\frac{1}{2}}^t
      +  \frac{ \rho^{}_{ \frac{5}{2} } }
          {\rho^{}_{\frac{5}{2}} +  \rho^{}_{\frac{7}{2}}
                (\rho^{}_{\frac{1}{2}} + \rho^{}_{\frac{3}{2}}) } \lambda_{\frac{5}{2}}^t
       -  \frac{ \rho^{}_{ \frac{5}{2} } }
          {\rho^{}_{\frac{5}{2}} +  \rho^{}_{\frac{7}{2}}
                (\rho^{}_{\frac{1}{2}} + \rho^{}_{\frac{3}{2}}) } \lambda_{\varnothing}^t \\
 \end{split}
\]
\[
\begin{split}
     a^{}_{\left\{ \frac{3}{2}, \frac{5}{2},\frac{7}{2}\right\}} (t) &=
      \lambda_{\left\{ \frac{3}{2}, \frac{5}{2},\frac{7}{2}\right\}}^t
       - \lambda_{ \left\{ \frac{5}{2}, \frac{7}{2}\right\}}^t
       -  \frac{ \rho^{}_{ \frac{3}{2} } }
          {\rho^{}_{\frac{3}{2}} + \rho^{}_{\frac{1}{2}} \rho^{}_{\frac{5}{2}} } 
       ( \lambda_{ \left\{ \frac{3}{2},\frac{7}{2} \right\}}^t - \lambda_{\frac{7}{2}}^t  ) 
        - \frac{ \rho^{}_{ \frac{3}{2} } + \rho^{}_{ \frac{5}{2} }}
          {\rho^{}_{\frac{3}{2}} + \rho^{}_{\frac{5}{2}} +
                       \rho^{}_{\frac{1}{2}} \rho^{}_{\frac{7}{2}} }  
           \lambda_{\left\{ \frac{3}{2}, \frac{5}{2} \right\}}^t\\
      &\phantom{=} +  
           \frac{ \rho^{}_{ \frac{3}{2}} } 
     {\rho^{}_{ \frac{1}{2} } ( \rho^{}_{ \frac{5}{2}} + \rho^{}_{\frac{7}{2}} )
        + \rho^{}_{ \frac{3}{2}} }  \lambda_{\frac{3}{2}}^t 
      +  \frac{  \rho^{}_{ \frac{5}{2}} } 
     {\rho^{}_{ \frac{7}{2} } ( \rho^{}_{ \frac{1}{2}} + \rho^{}_{\frac{3}{2}} )
        + \rho^{}_{ \frac{5}{2}} }   \lambda_{\frac{5}{2}}^t \\
     &\phantom{=} -  \Bigl( 1 - 
           \frac{ ( \rho^{}_{ \frac{1}{2}} + \rho^{}_{ \frac{3}{2}} ) \rho^{}_{ \frac{7}{2}} } 
     {\rho^{}_{ \frac{7}{2} } ( \rho^{}_{ \frac{1}{2}} + \rho^{}_{\frac{3}{2}} )
        + \rho^{}_{ \frac{5}{2}} }  
        - \frac{ ( \rho^{}_{ \frac{5}{2}} + \rho^{}_{ \frac{7}{2}} ) \rho^{}_{ \frac{1}{2}} } 
     {\rho^{}_{ \frac{1}{2} } ( \rho^{}_{ \frac{5}{2}} + \rho^{}_{\frac{7}{2}} )
        + \rho^{}_{ \frac{3}{2}} }  
        + \frac{ \rho^{}_{ \frac{1}{2}} \rho^{}_{ \frac{7}{2}} } 
     {\rho^{}_{ \frac{1}{2} } \rho^{}_{ \frac{7}{2}} + \rho^{}_{\frac{3}{2}} 
        + \rho^{}_{ \frac{5}{2}} }  \Bigr)   \lambda_{\varnothing}^t
   \end{split}
  \]
 and 
\[
\begin{split}
a^{}_{\left\{ \frac{1}{2}, \frac{3}{2}, \frac{5}{2},\frac{7}{2}\right\}} (t) &=
        \lambda_{\left\{ \frac{1}{2}, \frac{3}{2}, \frac{5}{2},\frac{7}{2}\right\}}^t
          - \lambda_{\left\{ \frac{1}{2}, \frac{3}{2}, \frac{5}{2} \right\}}^t
          - \lambda_{\left\{ \frac{1}{2}, \frac{3}{2},\frac{7}{2}\right\}}^t
          - \lambda_{\left\{ \frac{1}{2}, \frac{5}{2},\frac{7}{2}\right\}}^t
          - \lambda_{\left\{ \frac{3}{2}, \frac{5}{2},\frac{7}{2}\right\}}^t
          + \lambda_{\left\{ \frac{1}{2}, \frac{3}{2}\right\}}^t \\
          &\phantom{=} +
            \frac{ \rho^{}_{ \frac{5}{2} } }
          {\rho^{}_{\frac{5}{2}} + \rho^{}_{\frac{3}{2}} \rho^{}_{\frac{7}{2}} } 
          \lambda_{\left\{ \frac{1}{2}, \frac{5}{2}\right\}}^t
         + \lambda_{\left\{ \frac{1}{2}, \frac{7}{2}\right\}}^t
         + \frac{ \rho^{}_{\frac{3}{2}} + \rho^{}_{ \frac{5}{2} } }
          {\rho^{}_{\frac{3}{2}} + \rho^{}_{\frac{5}{2}} + \rho^{}_{\frac{1}{2}} \rho^{}_{\frac{7}{2}} } 
           \lambda_{\left\{ \frac{3}{2}, \frac{5}{2}\right\}}^t
        + \frac{ \rho^{}_{ \frac{3}{2} } }
          {\rho^{}_{\frac{3}{2}} + \rho^{}_{\frac{1}{2}} \rho^{}_{\frac{5}{2}} } 
           \lambda_{\left\{ \frac{3}{2}, \frac{7}{2}\right\}}^t \\
          &\phantom{=} + \lambda_{\left\{ \frac{5}{2}, \frac{7}{2}\right\}}^t 
       -  \frac{ \rho^{}_{ \frac{5}{2} } } 
          {\rho^{}_{\frac{5}{2}} + \rho^{}_{\frac{3}{2}} \rho^{}_{\frac{7}{2}} } \lambda_{\frac{1}{2}}^t
        -  \frac{ \rho^{}_{ \frac{3}{2}} } 
     {\rho^{}_{ \frac{1}{2} } ( \rho^{}_{ \frac{5}{2}} + \rho^{}_{\frac{7}{2}} )
        + \rho^{}_{ \frac{3}{2}} } \lambda_{\frac{3}{2}}^t
        -  \frac{ \rho^{}_{ \frac{5}{2}} } 
     {\rho^{}_{ \frac{7}{2} } ( \rho^{}_{ \frac{1}{2}} + \rho^{}_{\frac{3}{2}} )
        + \rho^{}_{ \frac{5}{2}} }  \lambda_{\frac{5}{2}}^t \\
       &\phantom{=} -  \frac{ \rho^{}_{ \frac{3}{2} } }
          {\rho^{}_{\frac{3}{2}} + \rho^{}_{\frac{1}{2}} \rho^{}_{\frac{5}{2}} } \lambda_{\frac{7}{2}}^t \\
        &\phantom{=} 
+  \Bigl( 1 - 
           \frac{ ( \rho^{}_{ \frac{5}{2}} + \rho^{}_{ \frac{7}{2}} ) \rho^{}_{ \frac{1}{2}} } 
     {\rho^{}_{ \frac{1}{2} } ( \rho^{}_{ \frac{5}{2}} + \rho^{}_{\frac{7}{2}} )
        + \rho^{}_{ \frac{3}{2}} }  
        - \frac{ ( \rho^{}_{ \frac{1}{2}} + \rho^{}_{ \frac{3}{2}} ) \rho^{}_{ \frac{7}{2}} } 
     {\rho^{}_{ \frac{7}{2} } ( \rho^{}_{ \frac{1}{2}} + \rho^{}_{\frac{3}{2}} )
        + \rho^{}_{ \frac{5}{2}} }  
        + \frac{ \rho^{}_{ \frac{1}{2}} \rho^{}_{ \frac{7}{2}} } 
     {\rho^{}_{ \frac{1}{2} } \rho^{}_{ \frac{7}{2}} + \rho^{}_{\frac{3}{2}} 
        + \rho^{}_{ \frac{5}{2}} }  \Bigr) \lambda_{\varnothing}^t \, ,
     \end{split}
\]
where the $\lambda^{}_G$ are given by \eqref{def:lambda}.


\begin{thebibliography}{99}
\small

\bibitem{Aigner}
Aigner,~M.:
\emph{Combinatorial Theory},
Springer, Berlin (1979).

\bibitem{MB}
Baake,~M.: 
Recombination semigroups on measure spaces. 
\textit{Monatsh.\ Math.} \textbf{146}, 267--278 (2005) 
and \textbf{150}, 83--84 (2007)  (Addendum);
\texttt{arXiv:math.CA/0506099}.

\bibitem{reco}
Baake,~M., Baake,~E.:
An exactly solved model for mutation, recombination and selection. 
\textit{Can.\ J.\ Math.} \textbf{55}, 3--41 (2003) 
and \textbf{60}, 264--265 (2008)  (Erratum);
\texttt{arXiv:math.CA/0210422}.

\bibitem{baakeherms}
Baake,~E., Herms,~I.:
Single-crossover dynamics: finite versus infinite populations.
\textit{Bull. Math. Biol.} \textbf{70}, 603--624 (2008).

\bibitem{Bennett}
Bennett,~J.\thinspace H.:
On the theory of random mating.
\textit{Ann.\ Human\ Genetics} \textbf{18}, 311--317 (1954).
 
\bibitem{Christiansen}
Christiansen,~F.\thinspace B.:
\emph{Population Genetics of Multiple Loci}.
Wiley, Chichester (1999).
 
\bibitem{Cohn}
Cohn, D.\thinspace L.:
\emph{Measure Theory}.
Birkh\"{a}user, Boston (1980).

\bibitem{Reinhard}
B\"{u}rger,~R.:
\emph{The Mathematical Theory of Selection, Recombination and Mutation}.
Wiley, Chichester (2000).

\bibitem{Kevin1}
Dawson,~K.\thinspace J.: 
The decay of linkage disequilibria under random union of gametes:\ How to 
calculate Bennett's principal components.
\textit{Theor.\ Popul.\ Biol.} \textbf{58}, 1--20 (2000).

\bibitem{Kevin2}
Dawson,~K.\thinspace J.:
The evolution of a population under recombination: How to linearise the dynamics.
\textit{Lin.\ Alg.\ Appl.} \textbf{348}, 115--137 (2002).

\bibitem{Geiringer}
Geiringer,~H.:
On the probability theory of linkage in Mendelian heredity.
\textit{Ann.\ Math.\ Stat.} \textbf{15}, 25--57 (1944).

\bibitem{Hartl}
Hartl,~D.\thinspace L., Clark,~A.\thinspace G.:
\emph{Principles of Population Genetics}.
 3rd ed, Sinauer, Sunderland, MA (1997).

\bibitem {Jennings}
Jennings,~H.\thinspace S.:
The numerical results of diverse systems of breeding, with respect to two pairs of characters, linked or independent, with special relation to the effects of linkage.
\textit{Genetics} \textbf{2}, 97--154 (1917). 

\bibitem{Lyubich}
Lyubich,~Y.\thinspace I.:
\emph{Mathematical Structures in Population Genetics}.
Springer, Berlin (1992).

\bibitem {HaleRingwood}
McHale,~D., Ringwood,~G.\thinspace A.: 
  Haldane linearisation of baric algebras.
\textit{ J.\ London\ Math.\ Soc.} (2) \textbf{28}, 17--26 (1983).

\bibitem{Popa}
Popa,~E.:
Some remarks on a nonlinear semigroup acting on positive measures.
In: Carja,~O., Vrabie,~I.\thinspace I. (Eds.),
Applied Analysis and Differential Equations,
World Scientific, Singapore, 308--319 (2007).

\bibitem {Robbins}
Robbins,~R.\thinspace B.:
Some applications of mathematics to breeding problems III.
\textit{Genetics} \textbf{3}, 375--389 (1918).

\bibitem{Ute}
von Wangenheim,~U.:
\textit{Diskrete Rekombinationsdynamik}.
Diplomarbeit, Universit\"{a}t Greifswald (2007).

\end{thebibliography}
\end{document}